\documentclass[sigconf]{acmart}

\newif\ifcamera



\usepackage{latexsym,amsmath,amsfonts,stmaryrd,mathtools}
\usepackage{amsthm}
\usepackage{algorithm, algorithmicx}
\usepackage[noend]{algpseudocode}
\usepackage{graphicx}
\usepackage{hyperref}
\usepackage{wrapfig}
\usepackage{listings, fancyvrb, multicol}
\usepackage{multirow}
\usepackage{times}
\usepackage{comment}
\usepackage{xspace}
\usepackage{pifont}
\usepackage{parcolumns}
\usepackage{graphicx}
\usepackage{enumerate}
\usepackage{enumitem}
\usepackage{wrapfig}

\newtheorem{theorem}{Theorem}[section]
\newtheorem{lemma}[theorem]{Lemma}

\newtheorem{definition}{Definition}
\newtheorem{observation}{Observation}

\def\t{\textit}
\def\rdma{\textsc{rdma}\xspace}
\def\mr{\t{mr}}  
\def\ack{\t{ack}}
\def\nak{\t{nak}}
\def\RW{\t{RW}}
\newcommand\rmv[1]{}

\DeclareRobustCommand*\cal{\mathcal}

\newcommand{\disk}{memory}
\newcommand{\disks}{memories}
\newcommand{\mregion}{memory region}
\newcommand{\dpermission}{ dynamic permission}

\newcommand{\fastbyz}{Cheap Quorum}
\newcommand{\robustbyz}{Robust Backup}
\newcommand{\crashAlgo}{Protected Memory Paxos}
\newcommand{\combinedAlgo}{Aligned Paxos}

\newcommand{\prefPax}{Preferential Paxos}
\newcommand{\composedAlgo}{Fast \& Robust}

\newcommand{\xparagraph}[1]{\noindent\textbf{#1}}

\lstset{basicstyle=\footnotesize\ttfamily, tabsize=2, escapeinside={@}{@}}
\lstset{literate={<}{{$\langle$}}1  {>}{{$\rangle$}}1}
\lstset{postbreak=\mbox{\textcolor{red}{$\hookrightarrow$}\space}}
\lstset{language=C, morekeywords={CAS,commit,empty,job,taken,entry,GOTO,bool}}
\lstset{xleftmargin=5.0ex, numbers=left, numberblanklines=false, frame=single, showstringspaces=false}
\makeatletter
\lst@Key{countblanklines}{true}[t]%
{\lstKV@SetIf{#1}\lst@ifcountblanklines}

\lst@AddToHook{OnEmptyLine}{%
	\lst@ifnumberblanklines\else%
	\lst@ifcountblanklines\else%
	\advance\c@lstnumber-\@ne\relax%
	\fi%
	\fi}
\makeatother


\begin{document}
	
\ifcamera
\settopmatter{printacmref=true}
\else
\settopmatter{printacmref=false}
\setcopyright{none}
\renewcommand\footnotetextcopyrightpermission[1]{}
\fi
\fancyhead{}

	\ifcamera
	\title{The Impact of RDMA on Agreement}
	\else
    \title{\textbf{The Impact of RDMA on Agreement}\\ {\normalfont [Extended Version]}}
    \fi
	\author{Marcos K. Aguilera}
	\affiliation{VMware}
	\email{maguilera@vmware.com}
	
	\author{Naama Ben-David}
	\affiliation{CMU}
	\email{nbendavi@cs.cmu.edu}
	
	\author{Rachid Guerraoui}
	\affiliation{EPFL}
	\email{rachid.guerraoui@epfl.ch}
	
	\author{Virendra Marathe}
	\affiliation{Oracle}
	\email{virendra.marathe@oracle.com}
	
	\author{Igor Zablotchi}
	\affiliation{EPFL}
	\email{igor.zablotchi@epfl.ch}
	
	\renewcommand{\shortauthors}{Aguilera, Ben-David, Guerraoui, Marathe and Zablotchi}
	
	\begin{abstract}
Remote Direct Memory Access (RDMA) is becoming widely available in data centers.
This technology allows a process to directly read and write the memory of a remote host,
  with a mechanism to control access permissions.
In this paper, we study the fundamental power of these capabilities.
We consider the well-known problem of achieving consensus despite failures, and
  find that RDMA can improve the inherent trade-off in distributed computing
  between failure resilience and performance.
Specifically, we show that RDMA allows algorithms that simultaneously
  achieve high resilience and high performance, while traditional algorithms had to choose
  one or another.
With Byzantine failures, we give an algorithm that only requires
  $n \geq 2f_P + 1$ processes (where $f_P$ is the maximum number of faulty processes) and decides in two (network) delays
  in common executions.
With crash failures, we give an algorithm that only requires
  $n \geq f_P + 1$ processes and also decides in two delays.
Both algorithms tolerate a minority of memory failures inherent to RDMA, and they
  provide safety in asynchronous systems and liveness with standard
  additional assumptions.
\end{abstract}

\ifcamera	
\copyrightyear{2019} 
\acmYear{2019} 
\setcopyright{acmlicensed}
\acmConference[PODC '19]{2019 ACM Symposium on Principles of Distributed Computing}{July 29-August 2, 2019}{Toronto, ON, Canada}
\acmBooktitle{2019 ACM Symposium on Principles of Distributed Computing (PODC '19), July 29-August 2, 2019, Toronto, ON, Canada}
\acmPrice{15.00}
\acmDOI{10.1145/3293611.3331601}
\acmISBN{978-1-4503-6217-7/19/07}
\fi

	\maketitle

	\section{Introduction}

In recent years, a technology known as Remote Direct Memory Access (RDMA) has made its way into data centers, earning a spotlight in distributed systems research.
RDMA provides the traditional send/receive communication primitives, but also
  allows a process to directly read/write remote memory.
Research work shows that RDMA leads to some new and exciting distributed  
  algorithms~\cite{behrens2016derecho,poke2015dare,wang2017apus,dragojevic2014farm,kalia2015using,aguilera2018passing}.
  
RDMA provides a different interface from previous communication mechanisms, as
  it combines message-passing with shared-memory~\cite{aguilera2018passing}.
Furthermore, to safeguard the remote memory, RDMA provides \textit{protection} mechanisms to
  grant and revoke access for reading and writing data.
This mechanism is fine grained: an application can choose subsets of remote memory called \emph{regions} 
  to protect;
  it can choose whether a region can be read, written, or both; 
  and it can choose individual processes to be given access, where different processes can have different
  accesses.
Furthermore, protections are \emph{dynamic}: they can be changed by the application over time.
In this paper, we lay the groundwork for a theoretical understanding of these RDMA capabilities, and
  we show that they lead to distributed algorithms that are inherently 
  more powerful than before.

While RDMA brings additional power, it also introduces some challenges.
With RDMA, the remote memories are subject to failures that cause them to become unresponsive.
This behavior differs from traditional shared memory, which is often assumed to be reliable\footnote{There
  are a few studies of failure-prone memory, as we discuss in related work.}.
In this paper, we show that the additional power of RDMA more than compensates for these challenges.

Our main contribution is to show that RDMA improves on the fundamental trade-off in distributed
  systems between failure resilience and performance---specifically, we show 
  how a consensus protocol can use RDMA to
  achieve \emph{both} high resilience and high performance, while traditional algorithms
  had to choose one or another.
We illustrate this on the fundamental problem of achieving consensus and
  capture the above RDMA capabilities as an M\&M model~\cite{aguilera2018passing}, in which
  processes can use both message-passing and shared-memory.
We consider asynchronous systems and require safety in all
  executions and liveness under standard additional assumptions
  (e.g., partial synchrony).
We measure resiliency by the number of failures an algorithm tolerates, and
  performance by the number of (network) delays in common-case
  executions.
Failure resilience and performance depend on whether processes fail by crashing or by
  being Byzantine, so we consider both.
  
With Byzantine failures, we consider the consensus problem called
  weak Byzantine agreement, defined by Lamport~\cite{Lam83}.
We give an algorithm that (a) requires only $n \geq 2f_P+1$ processes (where $f_P$ is the maximum number of faulty processes) and (b) decides
  in two delays in the common case.
With crash failures, we give the first algorithm for consensus that requires only $n \geq f_P+1$ processes and decides in two delays in the common case.
With both Byzantine or crash failures, our algorithms can also tolerate crashes of memory---only  $m \geq 2f_M+1$ memories are required, where $f_M$ is the maximum number of faulty memories.
Furthermore, with crash failures, we improve resilience further,
  to tolerate crashes of a minority of the combined set of memories and processes.

Our algorithms appear to violate known impossibility results:
it is known that with message-passing, Byzantine agreement requires
   $n \geq 3f_P+1$ even if the system is synchronous~\cite{pease1980reaching},
   while consensus with crash failures require $n \geq 2f_P+1$ if the system is
   partially synchronous~\cite{dwork1988consensus}.
There is no contradiction: our algorithms rely on the power
  of RDMA, not available in other systems.

RDMA's power comes from two features: (1) simultaneous access to message-passing and shared-memory, and
  (2) dynamic permissions.
Intuitively, shared-memory helps resilience, message-passing helps performance, and
  dynamic permissions help both. 
  
To see how shared-memory helps resilience, consider the Disk Paxos algorithm~\cite{gafni2003disk},
  which uses shared-memory (disks) but no messages.
Disk Paxos requires only $n \geq f_P+1$ processes, matching the resilience of our algorithm.
However, Disk Paxos is not as fast: it takes at least four delays.
In fact, we show that no shared-memory consensus algorithm can decide in two delays (Section~\ref{sec:imp}).
  
To see how message-passing helps performance, consider
 the Fast Paxos algorithm~\cite{lamport2006fast}, which uses message-passing and no shared-memory.
Fast Paxos decides in only two delays in common executions,
  but it requires $n \geq 2f_P+1$ processes.

Of course, the challenge is achieving both high resilience and good performance in a single
  algorithm.
This is where RDMA's dynamic permissions shine.
Clearly, dynamic permissions improve resilience against Byzantine failures, by preventing a Byzantine 
  process from overwriting memory and making it useless.
More surprising, perhaps, is that dynamic permissions help performance, by
  providing an uncontended instantaneous guarantee: if each process revokes the write permission
  of other processes before writing to a register, then a process that writes successfully knows that
  it executed uncontended, without having to take additional steps (e.g., to read the register).
We use this technique in our algorithms for both Byzantine and crash failures.

In summary, our contributions are as follows:
\begin{itemize}[noitemsep]
	\item We consider distributed systems with RDMA, and we propose a model that captures some of its key properties while accounting for failures of processes and memories, with support of dynamic permissions.

    \item We show that the shared-memory part of our RDMA improves resilience: our Byzantine agreement 
      algorithm requires only $n \geq 2f_P+1$ processes.
      
	\item We show that the shared-memory by itself does not permit consensus algorithms that
	  decide in two steps in common executions.
	  
	\item We show that with dynamic permissions, we can improve the performance of our Byzantine Agreement
	   algorithm, to decide in two steps in common executions.
	   
	\item We give similar results for the case of crash failures: decision in two steps while
	   requiring only $n \geq f_P+1$ processes.
	   
	\item Our algorithms can tolerate the failure of memories, up to a minority of them.
%
\end{itemize}

The rest of the paper is organized as follows. Section~\ref{sec:related} gives an overview of related work. In Section~\ref{sec:model} we formally define the RDMA-compliant M\&M model that we use in the rest of the paper, and specify the agreement problems that we solve.
We then proceed to present the main contributions of the paper. Section~\ref{sec:byz} presents our fast and resilient Byzantine agreement algorithm. 
In Section~\ref{sec:crash} we consider the special case of crash-only failures, and show an improvement of the algorithm and tolerance bounds for this setting. In Section~\ref{sec:imp} we briefly outline a lower bound that shows that the dynamic permissions of RDMA are necessary for achieving our results.
Finally, in Section~\ref{sec:implNotes} we discuss the semantics of RDMA in practice, and how our model reflects these features. 

\ifcamera
Due to space limitations, most proofs have been omitted from this version, and appear in the full version of this paper~\cite{fullversion}.
\else
To ease readability, most proofs have been deferred to the Appendices.
\fi

\rmv{
    In recent years, a technology known as Remote Direct Memory Access (RDMA) has made its way into data centers, and has earned a spotlight in distributed systems research.
    RDMA is a communication mechanism that allows machines in a data center to access each other's memory without the involvement of the CPU of the host machine. Furthermore, communication over RDMA is much faster than classic TCP/IP protocols, thereby opening the door for many highly optimized distributed algorithms~\cite{behrens2016derecho,poke2015dare,wang2017apus,dragojevic2014farm,kalia2015using}.
    
    RDMA provides an interface which is fundamentally different than previous communication primitives. Not only does it introduce the ability for different machines to communicate as if through a shared memory, but it also allows dynamically changing the access permissions of specific processors in the system to specific memory regions. 
    This begs the question: Can RDMA lead to the design of new algorithms that are fundamentally more powerful than before?
    
    In this paper, we answer this question affirmatively by laying the groundwork for a theoretical understanding of RDMA. We develop a model that accounts for some of RDMA's features, and show how this model compares to previously known ones when it comes to solving consensus. 
    Our model is based on the recently introduced message-and-memory model (M\&M)~\cite{aguilera2018passing}, but extends it in several ways. The M\&M model allows processes to communicate through  both message passing and read and write accesses to different \disks; we consider the possibility that a \disk{} may fail, and thereby become inaccessible to all processes.  Additionally, to model RDMA's flexible memory permissions, we allow \disks{} to be divided up into regions that have different permissions per process, and assume that these permissions can be dynamically changed. We call our model the \emph{\ourModel}.
    
    We show that this new model is more powerful than previous ones when solving consensus. In particular, we consider Byzantine failures and show that the \ourModel{} can solve weak Byzantine agreement with only $2f_P+1$ processes and $2f_M + 1$ \disks{}, where $f_P$ and $f_M$ are bounds on the number of Byzantine process failures and \disk{} crash failures respectively. This compares favorably to the well-known impossibility of solving Byzantine agreement with less than $3f_P + 1$ processes in the message passing model\igor{is this for weak or for strong byz. agreement? If it is for weak, we should say so.}. Furthermore, we show that this fault tolerance can be achieved by an algorithm that requires only one communication round in the common case. We also show how these results translate to the special case in which only crash failures are possible for processes, and develop an efficient consensus algorithm that can use a combination of processes and \disks{} to form its quorum\igor{this sentence makes it sound like we are only doing one thing in the crash case, but we are doing two}.
    Finally, we also present a lower bound on the common-case run time of consensus algorithms in models that have memory but not dynamic permissions. 
    
    There has been a lot of previous work on designing algorithms with RDMA. The M\&M model of Aguilera \textit{et al.}~\cite{aguilera2018passing} is one such example, which considers RDMA and similar technology from the theoretical perspective. They show that shared memory can boost a system's robustness, both in fault tolerance and in synchrony requirements. Their results show that even a few shared memory connections can increase a system's tolerance to process crashes when solving consensus, and that leader election can be solved efficiently and with very little synchrony when both shared memory and message passing are available. These results are promising indications that RDMA may indeed be more powerful than previous communication mechanisms.
    However, Aguilera \textit{et al.} did not consider memory failures, wherein a previously relied upon memory location becomes inaccessible. To create robust RDMA-based solutions, handling memory failures is crucial, since the memory is distributed and servers may fail with their memory.
    
    Several works have considered RDMA with memory failures, and used it to design highly practical consensus 
    algorithms and their extension to Replicated State Machines (RSM). 
    %
    These approaches have highlighted the potential benefits of integrating RDMA into systems' RSM protocol~\cite{poke2015dare,wang2017apus,behrens2016derecho}, and
    have observed improvement in latency of at least an order of magnitude when compared to approaches that are not based on RDMA~\cite{poke2015dare}.
    However, none of this work has theoretically analyzed RDMA-based algorithms, and while there are clear improvements in wall-clock latency, these algorithms are not provably more powerful than their classic message-passing counterparts.
    
    The power of the \ourModel{} comes from two different sources: having\igor{sharing?} memory and having dynamic permissions. At a high level, we show that shared memory can help with fault tolerance, but yields algorithms that require more communication in the common case than message passing. Our $2f_P + 1$ tolerant Byzantine agreement algorithm can be implemented on classic single-writer multi-reader shared memory. The fact that the \disks{} can fail does not detract from this fault tolerance as long as a majority of the \disks{} remain accessible. This improvement comes from the ability to implement a \emph{non-equivocation} mechanism in shared memory. Intuitively, this means that we can prevent Byzantine processes from sending different message values to different processes in the system without being detected. Clement \textit{et al.}~\cite{clement2012limited} show that being able to prevent equivocation is enough to boost the Byzantine fault tolerance of a system from $3f_P +1$ to $2f_P + 1$ as long as unforgeable signatures are also available.
    In the crash-only case, it is known that in the disk model of Gafni and Lamport~\cite{gafni2003disk}, the quorum burden can be shifted from processes to disks, thereby allowing all but one processes to crash as long as a majority of the disks remain accessible. We show an analogous result on the \ourModel{} with \disks.
    
    However, in this paper we also show that memory-based algorithms need several communication rounds to solve consensus, even in well-behaved executions. This is in contrast to message-passing algorithms, for which algorithms that solve consensus in one round in the common case are known~\cite{lamport2006fast}. We show that dynamic permissions can eliminate this overhead required by non-permissioned shared memory. We show how to use \disks{} with dynamic permissions to improve the best-case runtime of our Byzantine fault tolerant algorithm to a single operation while keeping the $2f_P + 1$ tolerance. We also apply this trick in the crash-only case to produce a faster version of Disk Paxos~\cite{gafni2003disk}. The intuition for these improvements is that the dynamic permissions allow a process to have an instantaneous guarantee that it is uncontended. More specifically, we use dynamic permissions to have each process revoke other processes' write-permission before it writes. Thus, if a leader process keeps its permission between accesses to a \disk{}, it knows that it is unchallenged. We call this ability \emph{solo-detection}.\igor{I don't think we use the term solo detection anywhere else in the paper. Maybe we shouldn't mention it here?}

    In summary, our contributions are as follows.
    \begin{itemize}
    	\item We formalize the \emph{\ourModel}, a model that reflects the features of RDMA. 
    	The \ourModel{} allows communication over message-passing and unreliable shared memory, and enables different dynamic permissions for different memory regions.
    	\item We show that the shared memory feature of the \ourModel{} yields highly fault tolerant algorithms; Byzantine agreement can be solved with $2f_P +1$ processes.
    	\item We show that the shared memory feature by itself cannot produce consensus algorithms that require only one communication round in the common case.
    	\item With dynamic permissions, we improve the common-case runtime of our Byzantine agreement algorithm to one communication round.
    	\item We show how the Byzantine fault tolerant results translate over to a crash-only model.
    %
    \end{itemize}
}
	\section{Related Work}\label{sec:related}

\xparagraph{RDMA.}
Many high-performance systems were recently proposed using RDMA, such as distributed key-value stores~\cite{dragojevic2014farm, kalia2015using}, communication primitives~\cite{dragojevic2014farm,kalia2016fasst}, and shared address spaces across clusters~\cite{dragojevic2014farm}. Kalia \textit{et al.}~\cite{kaminsky2016design}
provide guidelines for designing systems using RDMA.
RDMA has also been applied to solve consensus~\cite{behrens2016derecho, poke2015dare, wang2017apus}. Our model shares similarities with DARE~\cite{poke2015dare} and APUS~\cite{wang2017apus}, which modify queue-pair state at run time to prevent or allow 
access to memory regions, similar to our dynamic permissions. These systems perform better than TCP/IP-based solutions, by exploiting better raw performance of RDMA, without changing the fundamental communication complexity or failure-resilience of the consensus protocol.
%
Similarly, R{\"u}sch \textit{et al.}~\cite{rusch2018towards} use RDMA as a replacement for TCP/IP in existing BFT protocols.

\xparagraph{M\&M.}
Message-and-memory (M\&M) refers to a broad class of models that combine message-passing 
  with shared-memory, introduced by Aguilera et al. in~\cite{aguilera2018passing}.
In that work, Aguilera et al. consider M\&M models without memory permissions and failures, and show that such models lead to algorithms that are more robust to failures and asynchrony. In particular, they give a consensus algorithm that tolerates more crash failures than message-passing systems, but is more scalable than shared-memory systems, as well as a leader election algorithm that reduces the synchrony requirements.
In this paper, our goal is to understand how memory permissions and failures in RDMA impact agreement.



\xparagraph{Byzantine Fault Tolerance.}
Lamport, Shostak and Pease~\cite{lamport1982Byzantine,pease1980reaching} show that Byzantine agreement can be solved in synchronous systems iff $n \geq 3f_P+1$. With unforgeable signatures, Byzantine agreement can be solved iff $n \geq 2f_P+1$. 
%
%
In asynchronous systems subject to failures, consensus cannot be solved~\cite{fischer1985impossibility}. However, this result is circumvented by making
additional assumptions for liveness, such as randomization~\cite{ben1983another} or partial synchrony~\cite{chandra1996unreliable,dwork1988consensus}.
Many Byzantine agreement algorithms focus on safety and implicitly use the
  additional assumptions for liveness.
Even with signatures, asynchronous Byzantine agreement can be solved only if $n\geq 3f_P+1$~\cite{bracha1985asynchronous}.

It is well known that the resilience of Byzantine agreement varies depending on various model assumptions like synchrony, signatures, equivocation, and the exact variant of the problem to be solved. A system that has non-equivocation is one that can prevent a Byzantine process from sending different values to different processes. Table~\ref{tab:related} summarizes some known results that are relevant to this paper. 


{\footnotesize
\begin{table}[h!]
	\begin{tabular}{cccccc}
	\toprule
		\textbf{Work} & \textbf{Synchrony} & \textbf{Signatures} & \textbf{Non-Equiv} & \begin{tabular}[x]{@{}c@{}}\textbf{Strong}\\\textbf{Validity}\end{tabular} & \textbf{Resiliency} \\
		\midrule
\cite{lamport1982Byzantine}			& \ding{51}             & \ding{51}             & \ding{55}                    & \ding{51}                   & $2f+1$                   \\
\cite{lamport1982Byzantine}		& \ding{51}             & \ding{55}              & \ding{55}                    & \ding{51}                   & $3f+1$                   \\
\cite{malkhi2003objects,alon2005tight}		& \ding{55}             & \ding{51}              & \ding{51}                    & \ding{51}                   & $3f+1$                   \\
	\cite{clement2012limited}		& \ding{55}             & \ding{51}              & \ding{55}                    & \ding{55}                   & $3f+1$                   \\
	\cite{clement2012limited}		& \ding{55}             & \ding{55}              & \ding{51}                    & \ding{55}                   & $3f+1$                   \\
		\cite{clement2012limited}		& \ding{55}             & \ding{51}              & \ding{51}                    & \ding{55}                   & $2f+1$    \\
		\midrule
	    This paper & \ding{55}             & \ding{51}              & \begin{tabular}[x]{@{}c@{}}\ding{55}\\(RDMA)\end{tabular}                     & \ding{55}                   & $2f+1$ \\
	    \bottomrule
	\end{tabular}
	    \caption{Known fault tolerance results for Byzantine agreement.}
	    	    \label{tab:related}
\end{table}
}

Our Byzantine agreement results share similarities with results for shared memory.
Malkhi \textit{et al.}~\cite{malkhi2003objects} and Alon \textit{et al.}~\cite{alon2005tight} show bounds on the resilience of strong and weak consensus in a model with reliable memory but Byzantine processes. They also provide consensus protocols, using read-write registers enhanced with sticky bits (write-once memory) and access control lists not unlike our permissions. Bessani \textit{et al.}~\cite{bessani2009sharing} propose an alternative to sticky bits and access control lists through Policy-Enforced Augmented Tuple Spaces.
All these works handle Byzantine failures with powerful objects rather than registers.
Bouzid \textit{et al.}~\cite{bouzid2016necessary} show that $3f_P+1$ processes are necessary for strong Byzantine agreement with read-write registers.

Some prior work solves Byzantine agreement with $2f_P{+}1$ processes
  using specialized trusted components that
  Byzantine processes cannot control~\cite{ChunMSK07,CorreiaNV04,KapitzaBCDKMSS12,VeroneseCBLV13,ChunMS08,correia2010asynchronous}.
Some schemes decide in two delays but require a large
  trusted component: a coordinator~\cite{ChunMS08}, reliable
  broadcast~\cite{correia2010asynchronous}, or message ordering~\cite{KapitzaBCDKMSS12}.
For us, permission checking in RDMA is a trusted component of sorts, but it is small  and readily available.

At a high-level, our improved Byzantine fault tolerance is achieved by preventing equivocation by Byzantine processes, thereby effectively translating each Byzantine failure into a crash failure. Such translations from one type of failure into a less serious one have appeared extensively in the literature~\cite{clement2012limited,bracha1985asynchronous,neiger1990automatically,bazzi1991optimally}. Early work~\cite{neiger1990automatically,bazzi1991optimally} shows how to translate a crash tolerant algorithm into a Byzantine tolerant algorithm in the synchronous setting.
Bracha~\cite{bracha1987asynchronous} presents a similar translation for the asynchronous setting, in which $n \geq 3f_P+1$ processes are required to tolerate $f_P$ Byzantine failures. Bracha's translation relies on the definition and implementation of a reliable broadcast primitive, very similar to the one in this paper. However, we show that using the capabilities of RDMA, we can implement it with higher fault tolerance.

\xparagraph{Faulty memory.}
Afek \textit{et al.}~\cite{AGMT1992} and Jayanti \textit{et al.}~\cite{JCT1998} study the problem of masking the benign failures of shared memory or objects.
We use their ideas of replicating data across memories.
Abraham \textit{et al.}~\cite{abraham2006Byzantine} considers honest processes but malicious memory. 

\xparagraph{Common-case executions.}
Many systems and algorithms tolerate adversarial scheduling but optimize for
  common-case executions without failures, asynchrony, contention, etc (e.g., \cite{dobre2006one,dutta2005fast,keidar2001cost,kursawe2002,lamport2006fast,MA2006,fastpaxos}).
None of these match both the resilience and performance of our algorithms.
Some algorithms decide in one delay but require $n \ge 5f_P+1$ for Byzantine failures~\cite{bosco} or $n\ge 3f_P+1$ for crash failures~\cite{dobre2006one,brasileiro2001consensus}.

	\section{Model and Preliminaries} \label{sec:model}

We consider a message-and-memory (M\&M) model, which allows processes to use
  both message-passing and shared-memory~\cite{aguilera2018passing}.
The system has $n$ processes $P= \{p_1,\ldots, p_{n}\}$
  and $m$ (shared) \emph{\disks} $M = \{ \mu_1, \ldots, \mu_m \}$.
Processes communicate by accessing \disks{} or sending messages.
Throughout the paper, \disk{} refers to the shared \disks{}, not the local state of processes.

The system is asynchronous in that it can experience arbitrary delays.
We expect algorithms to satisfy the safety properties of the problems we consider, under
  this asynchronous system.
For liveness, we require additional standard assumptions, such as
  partial synchrony, randomization, or failure detection.


\begin{figure}
\centering
\includegraphics[width=3in,angle=270,trim=0 100 0 0, clip]{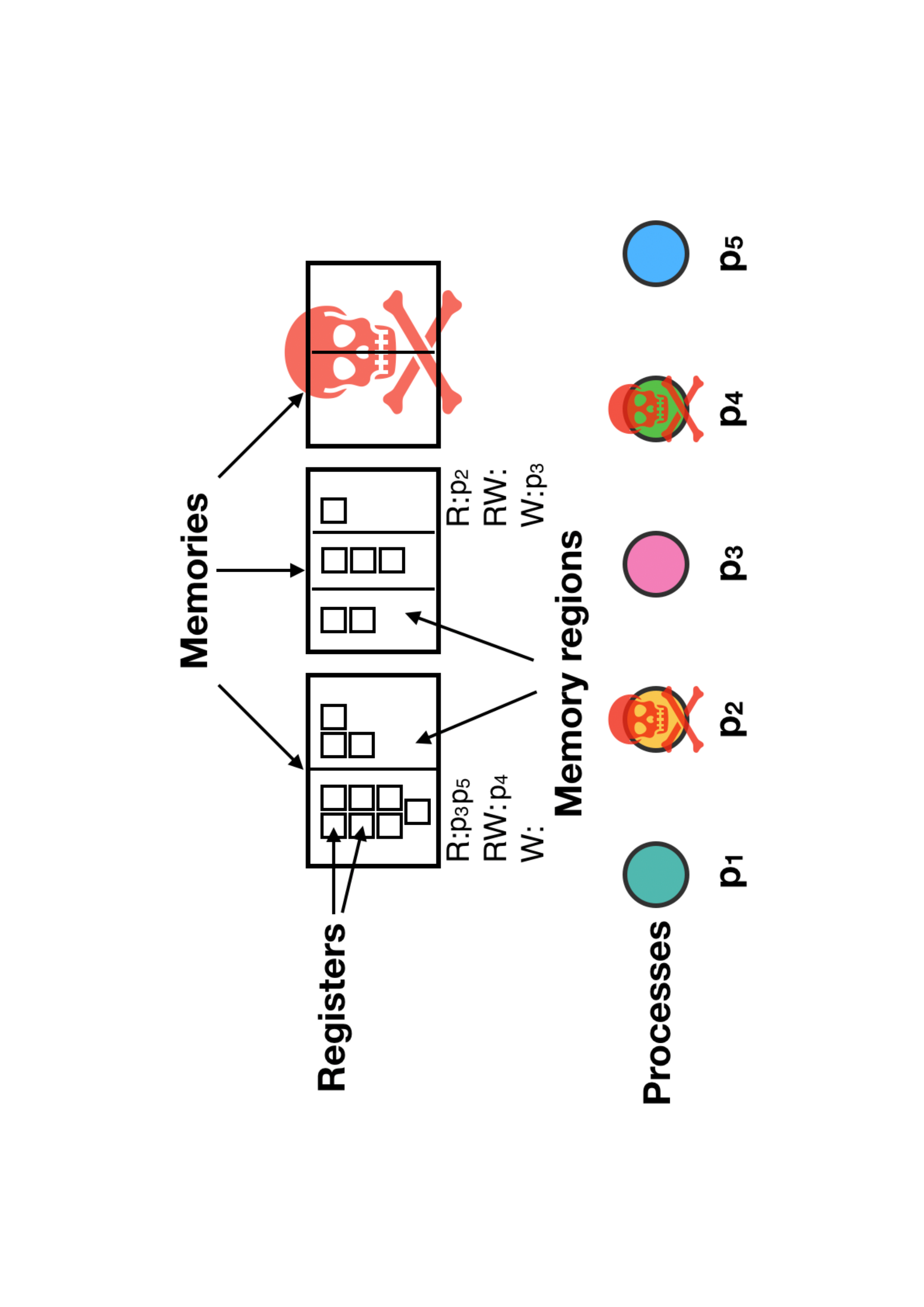}
\caption{Our model with processes and memories, which may both fail.
 Processes can send messages
 to each other or access registers in the memories. 
Registers in a memory are grouped into memory regions that may overlap,
   but in our algorithms they do not. Each region has a permission
   indicating what processes can read, write, and read-write
   the registers in the region (shown for two regions).
  }
\label{fig:myfig}
\end{figure}

\textbf{Memory permissions.}
Each \disk{} consists of a set of \emph{registers}.
To control access,
  an algorithm groups those registers into a set of (possibly overlapping)
  \emph{\mregion{s}}, and then defines permissions for those \mregion{s}.
Formally, a \mregion{} $\mr$ of a \disk{} $\mu$ is a subset of the registers of $\mu$.
We often refer to $\mr$ without specifying the \disk{} $\mu$ explicitly.
Each \mregion{} $\mr$ has a \emph{permission}, which consists of
%
%
three disjoint sets of processes $R_{\mr}$, $W_{\mr}$, $\RW_{\mr}$ indicating whether each process can read, write, or read-write the registers in the region.
We say that \emph{$p$ has read permission on $\mr$} if $p\in R_\t{mr}$ or $p\in\RW_\t{mr}$;
  we say that \emph{$p$ has write permission on $\mr$} if $p \in W_\t{mr}$ or $p \in\RW_\t{mr}$.
In the special case when $R_{\mr}=P\setminus\{p\}$, $W_{\mr}=\emptyset$, $\RW_{\mr}=\{p\}$, we say that
  $\mr$ is a Single-Writer Multi-Reader (SWMR) region---registers in $\mr$ correspond to the
  traditional notion of SWMR registers.
%
%
Note that a register may belong to several regions, and a process may have access
  to the register on one region but not another---this models the existing \rdma
  behavior.
Intuitively, when reading or writing data, a process specifies the region and the
  register, and
  the system uses the region to determine if access is allowed (we make this precise
  below).

\textbf{Permission change.}
An algorithm indicates an initial permission for each \mregion{} $\mr$.
Subsequently, the algorithm may wish to change the permission of $\mr$ during execution.
For that, processes can invoke an operation $\t{changePermission}(\mr,\t{new\_perm})$,
  where $\t{new\_perm}$ is a triple $(R,W,\RW)$.
This operation returns no results and it is intended to modify $R_{\mr},W_{\mr},\RW_{\mr}$ to
  $R,W,\RW$.
To tolerate Byzantine processes, an algorithm can restrict
  processes from changing permissions.
For that, the algorithm specifies
  a function $\t{legalChange}(p, \mr, \t{old\_perm}, \t{new\_perm})$
  which returns a boolean indicating whether process $p$ can change the
  permission of $\mr$ to $\t{new\_perm}$ when the current permissions are
  $\t{old\_perm}$.
More precisely, when $\t{changePermission}$ is invoked, the system
  evaluates $\t{legalChange}$ to determine whether $\t{changePermission}$ 
  takes effect or becomes a no-op.
When $\t{legalChange}$ always returns false, we say that the \emph{permissions are
  static}; otherwise, the \emph{permissions are dynamic}.

%
%

\textbf{Accessing \disks.} Processes access the \disks{} via operations
  $\t{write}(\mr,r,v)$ and $\t{read}(\mr, r)$ for \mregion{} $\mr$, register $r$, and
  value $v$.
A $\t{write}(\mr,r,v)$ by process $p$ changes register $r$ to $v$ and returns $\ack$
  if $r \in \mr$ and $p$ has write permission on $\mr$; otherwise, the operation
  returns $\nak$.
A $\t{read}(\mr,r)$ by process $p$ returns the last value successfully written
  to $r$ if $r \in \mr$ and $p$ has read permission on $\mr$; otherwise, the
  operation returns $\nak$.
In our algorithms, a register belongs to exactly one region, so we omit
  the $\mr$ parameter from write and read operations.


\textbf{Sending messages.} Processes can also communicate by sending messages over a set of directed links.
We assume messages are unique.
If there is a link from process $p$ to process $q$, then $p$
  can send messages to $q$.
Links satisfy two properties: \emph{integrity} and \emph{no-loss}.
Given two correct processes $p$ and $q$,
  integrity requires that a message $m$ be received by $q$ from $p$ at most once
  and only if $m$ was previously sent by $p$ to $q$.
No-loss requires that a message $m$ sent from $p$ to $q$ be
  eventually received by $q$.
In our algorithms, we typically assume a fully connected
  network so that every pair of correct processes can communicate.
We also consider the special case when there are no links
  (see below).

\textbf{Executions and steps.} An execution is as a sequence of process steps.
In each step, a process does the following, according to its local state:
   (1) sends a message or invokes an operation on a \disk{}
       (read, write, or changePermission), 
   (2) tries to receive a message or a response from an outstanding operation, and
   (3) changes local state.
We require a process to have at most one outstanding operation on
  each \disk.


\textbf{Failures.} A memory $m$ may fail by crashing,
  which causes subsequent operations on its registers to
  hang without returning a response.
Because the system is asynchronous, a process cannot differentiate
  a crashed memory from a slow one.
We assume there is an upper bound $f_M$ on the maximum number of
  \disks{} that may crash.
%
Processes may fail by crashing or
  becoming Byzantine.
If a process crashes, it stops taking steps forever.
If a process becomes Byzantine, it can deviate
  arbitrarily from the algorithm.
However, that process cannot operate on \disks{} without
  the required permission.
We assume there is an upper bound $f_P$ on the maximum number of
  processes that may be faulty.
  Where the context is clear, we omit the $P$ and $M$ subscripts from the number of failures, $f$.

\textbf{Signatures.}
Our algorithms assume unforgeable signatures:
  there are primitives $\t{sign}(v)$ and $\t{sValid}(p,v)$ which, respectively,
  signs a value $v$ and determines if $v$ is signed by process $p$.

\textbf{Messages and disks.}
The model defined above includes two common models
  as special cases.
In the \emph{message-passing} model, there are no \disks{} ($m=0$),
  so processes can communicate only by sending messages.
In the \emph{disk} model~\cite{gafni2003disk}, there are no
  links, so processes can communicate only via \disks;
  moreover, each \disk{} has a single region which always
  permits all processes to read and write all registers.

\subsection*{Consensus}

In the consensus problem, processes propose an initial value and
  must make an irrevocable decision on a value.
With crash failures, we require the following properties:
\begin{itemize}[noitemsep, nolistsep]
    \item \textbf{Uniform Agreement.} If processes $p$ and $q$ decide $v_p$ and $v_q$, then $v_p = v_q$.
    \item \textbf{Validity.} If some process decides $v$, then $v$ is the initial value proposed by some process.
    \item \textbf{Termination.} Eventually all correct processes decide.
\end{itemize}

We expect Agreement and Validity to hold in an asynchronous system, while Termination
  requires standard additional assumptions (partial synchrony, randomization, failure
  detection, etc).
With Byzantine failures, we change these definitions so the problem can be solved.
We consider weak Byzantine agreement~\cite{Lam83}, with the following properties:

\rmv{
    In a Byzantine setting, the requirements must be adjusted a little.  In particular, we cannot require that all agents that decide a value agree, since Byzantine agents may decide arbitrary values. Similarly, we cannot require that Byzantine agents output valid decision values. However, we do allow a correct agent to decide on a value that was proposed by a Byzantine proposer. This requirement is known as \emph{weak validity}.
}    

\begin{itemize}[noitemsep, nolistsep]
	\item \textbf{Agreement.} If correct processes $p$ and $q$ decide $v_p$ and $v_q$, then $v_p = v_q$.
	\item \textbf{Validity.}  With no faulty processes, if some process decides $v$, then $v$ is the input of some process.
    \item \textbf{Termination.} Eventually all correct processes decide.
\end{itemize}


\textbf{Complexity of algorithms.}
We are interested in the performance of algorithms in \emph{common-case executions}, 
  when the system is synchronous and there are no failures.
In those cases, we measure performance using the notion of \emph{delays}, which
  extends message-delays to our model.
Under this metric, computations are instantaneous, 
  each message takes one delay, and each memory operation takes two delays.
Intuitively, a delay represents the time incurred by the network to transmit a message;
  a memory operation takes two delays because its hardware implementation requires a round trip.
%
We say that a consensus protocol is \emph{$k$-deciding} if, in common-case executions,
  some process decides in $k$ delays.

\rmv{
    In this paper, we derive bounds on the complexity of consensus on the models defined above, and compare them against previously known bounds on the complexity of message-passing consensus. The complexity measure we use---\emph{operation depth}---is similar to the concept of \emph{event depth} defined by Lamport in~\cite{lamportConsensusBounds}, but applied to operations.
    
    Each process has at most one operation outstanding per \disk{} at any given time. Operations performed by a single process are totally ordered (concurrent operations by a single process to different disks may be ordered arbitrarily). Given an operation $op$, $op_{proc}$ denotes the invoking process, $op_{\disk}$ denotes the \disk{} on which the operation is called, and $op_{num}$ is the sequence number of $op$ at its invoking process that corresponds to the total order of operations of that process.
    
    
    We define the \emph{depth} of an operation $op$ in an execution $E$, denoted $d(op)$, as follows. The depth of $op$ is $1$ if $op$ is a source operation. Otherwise, the depth of $op$ is the maximum of 
    \begin{enumerate}
        \item $\{d(op') \mid op'_{proc}=op_{proc}$ and $op'_{num} < op_{num}\}$ 
        \item  $\{1+d(op') \mid op'_{\disk} = op_{\disk} = m \text{ and } op' \text{ is linearized before } op \text{ at } m\}$. 
    \end{enumerate}
    
    We say that a consensus protocol is \emph{$k$-deciding} if it admits an execution in which (1) some learner $l$ terminates and (2) the largest depth of an operation $op$ invoked by $l$ before deciding is $k$. \Naama{Do we want to split into replies from operations and not having to wait for a reply?}
}

	\section{Byzantine Failures}\label{sec:byz}

We now consider Byzantine failures and give a 2-deciding algorithm for weak Byzantine agreement
  with $n \ge 2f_P+1$ processes and $m \ge 2f_M + 1$ \disks{}. 
The algorithm consists of the composition of two sub-algorithms: a slow one that
  always works, and a fast one that gives up under hard conditions.

The first sub-algorithm, called \emph{\robustbyz}, is developed in two steps.
We first implement a \emph{reliable broadcast} primitive, 
  which prevents Byzantine processes from sending different values to different processes.
Then, we use the framework of Clement \textit{et al.}~\cite{clement2012limited} combined
  with this primitive to convert a message-passing
  consensus algorithm that tolerates crash failures
  into a consensus algorithm that tolerates Byzantine failures.
This yields \robustbyz.\footnote{The attentive reader may wonder why at this point we have not
        achieved a 2-deciding algorithm already: if we apply
        Clement \textit{et al.}~\cite{clement2012limited} to a 2-deciding crash-tolerant
        algorithm (such as Fast Paxos~\cite{fastpaxos}),
        will the result not be a 2-deciding Byzantine-tolerant algorithm?
        The answer is no, because Clement \textit{et al.} needs reliable broadcast,
        which incurs at least 6 delays.}
It uses only static permissions and assumes \disks{} are split into SWMR regions.
Therefore, this sub-algorithm works in the 
  traditional shared-memory model with SWMR registers, and it may be of independent interest.

\rmv{
    In this section we present a Byzantine fault tolerant consensus protocol for our model. Our protocol solves weak Byzantine agreement with $2f_P+1$ processes and $2f_M + 1$ \disks{}, and is 2-deciding. We present the protocol in two parts, which can be combined modularly to form the final algorithm. 
    Before presenting either part, we first show how to implement a key primitive that helps us achieve the high robustness of our algorithm. This primitive is called \emph{non-equivocating broadcast}. Intuitively, this primitive behaves like a classic broadcast primitive, but prevents Byzantine processes from sending different values to different processes. With non-equivocation broadcast, we can use the framework of Clement \textit{et al.}~\cite{clement2012limited} to essentially convert Byzantine failures into crash failures.
    This protocol uses static permissions only; it assumes all \disks{} are split into singe-writer multi-reader (SWMR) regions. Since it does not use dynamic permissions, this protocol can be implemented in classic SWMR shared memory (where memory does not fail), and may be of independent interest. 
}

The second sub-algorithm is called \emph{\fastbyz}.
It uses dynamic permissions to decide
  in two delays using one signature in common executions.
However, the sub-algorithm gives up if the system is not synchronous or there are Byzantine
  failures.

Finally, we combine both sub-algorithms using ideas from the
  Abstract framework of Aublin et al.~\cite{aublin2015next}.
More precisely, we start by running \fastbyz{}; if it aborts, we run \robustbyz{}.
There is a subtlety: for this idea to work, \robustbyz{} must decide on a value $v$ if 
  \fastbyz{} decided $v$ previously.
To do that, \robustbyz{} decides on a \emph{preferred value} if at least $f+1$ processes 
  have this value as input. 
To do so, we use the classic crash-tolerant Paxos algorithm (run under the \robustbyz{} algorithm to ensure Byzantine tolerance) but with an initial set-up phase 
  that ensures this safe decision.
We call the protocol \emph{\prefPax}.



\subsection{The \robustbyz{} Sub-Algorithm}\label{sec:robustProtocol}

%
We develop \robustbyz{} using the construction by 
  Clement \textit{et al.}~\cite{clement2012limited}, which we now explain.
Clement \textit{et al.} show how to transform a message-passing
  algorithm $\cal A$ that tolerates $f_P$ crash failures
  into a
  message-passing algorithm that tolerates $f_P$ Byzantine failures
  in a system where $n \ge 2f_P+1$ processes, assuming
  unforgeable signatures and a non-equivocation mechanism.
They do so by implementing trusted message-passing primitives, \emph{T-send} and \emph{T-receive}, using non-equivocation and signature verification on every
  message.
Processes include their full history with each message, and then verify locally whether a received message is consistent with the protocol.  This restricts Byzantine behavior to crash
failures.

To apply this construction in our model, we show that
  our model can implement non-equivocation and message passing.
We first show that shared-memory with SWMR registers (and no memory failures)
  can implement these primitives, and then show
  how our model can implement shared-memory with SWMR registers.
  
\subsubsection{Reliable Broadcast}
Consider a shared-memory system. We present a way to prevent equivocation through a solution to the \textit{reliable broadcast} problem, which we recall below. Note that our definition of reliable broadcast includes the sequence number $k$ (as opposed to being single-shot) so as to facilitate the integration with the Clement et al. construction, as we explain in Section~\ref{sec:applying-clement}.

\def\broadcast{\t{broadcast}}
\def\deliver{\t{deliver}}

\begin{definition}
	\emph{Reliable broadcast} is defined in terms of two primitives, 
	$\broadcast(k,m)$ and $\deliver(k,m,q)$.
	When a process $p$ invokes $\broadcast(k,m)$ we say that \emph{$p$ broadcasts
	$(k,m)$}. When a process $p$ invokes $\deliver(k,m,q)$ we say that
	\emph{$p$ delivers $(k,m)$ from $q$}.
	Each correct process $p$ must invoke $\broadcast(k,*)$ with $k$ one higher than $p$'s previous
	invocation (and first invocation with $k{=}1$).
	The following holds:
	\begin{enumerate}[noitemsep,nolistsep]
		\item If a correct process $p$ broadcasts $(k,m)$, then all correct 
		processes eventually deliver $(k,m)$ from $p$.
		\item 	If $p$ and $q$ are correct processes, $p$ delivers $(k,m)$ from $r$, and $q$ delivers $(k,m')$ from $r$, then $m{=}m'$.
		\item If a correct process delivers $(k,m)$ from a correct process $p$, then $p$ must have broadcast $(k,m)$.
		\item If a correct process delivers $(k,m)$ from $p$, then all correct processes eventually deliver $(k,m')$ from $p$ for some $m'$.
	\end{enumerate}
\end{definition}


\renewcommand{\figurename}{Algorithm}
\begin{figure}
	\caption{Reliable Broadcast}
	\begin{lstlisting}[columns=fullflexible,breaklines=true, keywords={}]
SWMR Value[n,M,n]; initialized to @$\bot$@. Value[p] is array of SWMR(p) registers.

SWMR L1Proof[n,M,n]; initialized to @$\bot$@. L1Proof[p] is array of SWMR(p) registers.

SWMR L2Proof[n,M,n]; initialized to @$\bot$@. L2Proof[p] is array of SWMR(p) registers.

Code for process p
	last[n]: local array with last k delivered from each process. Initially, last[q] = 0
	state[n]: local array of registers. state[q] @$\in$@ {WaitForSender,WaitForL1Proof,WaitForL2Proof}. Initially, state[q] =  WaitForSender
	
    broadcast (k,m){
        Value[p,k,p].write(sign((k,m))); @\label{line:broadcastWrite}@   }
	   
    value checkL2proof(q,k) {
        for i@$\in \Pi$@ { 
            proof = L2Proof[i,k,q].read();@\label{line:readL2proof}@
            if (proof != @$\bot$@ && check(proof)) {
                L2Proof[p,k,q].write(proof);@\label{line:copyL2proof}@
                return proof.msg; }   }
        return null;    }
	
	for q in @$\Pi$@ in parallel {@\label{line:forloop}@
        while true { 
            try_deliver(q);	@\label{line:tryDeliver}@}} 	}
		
	try_deliver(q) {
        k = last[q];
        val = checkL2Proof(q,k);
        if (val != null) {
            deliver(k, proof.msg, q);@\label{line:deliver}@
            last[q] += 1;
            state = WaitForSender;
            return;
        }
        
        if state == WaitForSender {
            val = Value[q,k,q].read();@\label{line:readMessage}@
            if (val==@$\bot$@ || !sValid(p, val) || key!=k)
                return;
            Value[p,k,q].write(sign(val));@\label{line:copyVal}@
            state = WaitForL1Proof; }
        
        if state == WaitForL1Proof {
            checkedVals = @$\emptyset$@;
            for i @$\in \Pi$@{
                val = Value[i,k,q].read();@\label{line:readValCopies}@
                if (val!=@$\bot$@ && sValid(p,val) && key==k) {
                    add val to checkedVals; }   }
            
            if size(checkedVals) @$\geq$@ majority and checkedVals contains only one value {@\label{line:checkL1proof}@
                l1prf = sign(checkedVals);
                L1Proof[p,k,q].write(l1prf);@\label{line:writel1prf}@
                state = WaitForL2Proof; }   }
        
        if state == WaitForL2Proof{
            checkedL1Proofs = @$\emptyset$@;
            for i in @$\Pi$@{
                proof = L1Proof[i,k,q].read();@\label{line:readL1prf}@
                if ( checkL1Proof(proof) ) {
                    add proof to checkedL1Proofs; }   }
            
            if size(checkedL1Proofs) @$\geq$@ majority {
                l2prf = sign(checkedL1Proofs);
                L2Proof[p,k,q].write(l2prf); }   } }
	\end{lstlisting}
	\label{alg:non-eq}
\end{figure}
\renewcommand{\figurename}{Figure}


Algorithm~\ref{alg:non-eq} shows how to implement reliable broadcast that is tolerant to a minority of Byzantine failures in shared-memory using SWMR registers.

To broadcast its $k$-th message $m$, $p$ simply signs $(k,m)$ and writes it in slot $Value[p,k,p]$ of its memory\footnote{The indexing of the slots is as follows: the first index is the writer of the SWMR register, the second index is the sequence number of the message, and the third index is the sender of the message.}.

Delivering a message from another process is more involved, requiring verification steps to ensure that 
all correct processes will eventually deliver the same message and no other.
The high-level idea is that before delivering a message $(k,m)$ from $q$, each process $p$ checks that no other process saw a different value from $q$, and waits to hear that ``enough'' other processes also saw the same value.
More specifically, each process $p$ 
has 3 slots per process per sequence number, that only $p$ can write to, but all processes can read from. These slots are initialized to $\bot$, and $p$ uses them to write the values that it has seen. The 3 slots represent 3 levels of `proofs' that this value is correct; for each process $q$ and sequence number $k$, $p$ has a slot to write (1) the initial value $v$ it read from $q$ for $k$, (2) a proof that at least $f+1$ processes saw the same value $v$ from $q$ for $k$, and (3) a proof that at least $f+1$ processes wrote a proof of seeing value $v$ from $q$ for $k$ in their second slot. We call these slots the Value slot, the L1Proof slot, and the L2Proof slot, respectively.

We note that each such valid proof has signed copies of only one value for the message. Any proof that shows copies of two different values or a value that is not signed is not considered valid. If a proof has copies of only value $v$, we say that this proof \emph{supports $v$}.

To deliver a value $v$ from process $q$ with sequence number $k$, process $p$ must successfully write a valid proof-of-proofs in its L2Proof slot supporting value $v$ (we call this an L2 proof). It has two options of how to do this; firstly, if it sees a valid L2 proof in some other process $i$'s $L2Proof[i,k,q]$ slot, it copies this proof over to its own L2 proof slot, and can then deliver the value that this proof supports. If $p$ does not find a valid L2 proof in some other process's slot, it must try to construct one itself. We now describe how this is done.

A correct process $p$ goes through three stages when constructing a valid L2 proof for $(k,m)$ from $q$. In the pseudocode, the three stages are denoted using states that $p$ goes through: WaitForSender, WaitForL1Proof, and WaitForL2Proof.

In the first stage, WaitForSender, $p$ reads $q$'s $Value[q,k,q]$ slot. If $p$ finds a $(k,m)$ pair, $p$ signs and copies it to its $Value[p,k,q]$ slot and enters the WaitForL1Proof state.

In the second stage, WaitForL1Proof, $p$ reads all $Value[i,k,q]$ slots, for $i\in\Pi$. If all the values $p$ reads are correctly signed and equal to $(k,m)$, and if there are at least $f+1$ such values, then $p$ compiles them into an L1 proof, which it signs and writes to $L1Proof[p,k,q]$; $p$ then enters the WaitForL2Proof state. 

In the third stage, WaitForL2Proof, $p$ reads all $L1Proof[i,k,q]$ slots, for $i\in\Pi$. If $p$ finds at least $f+1$ valid and signed L1 proofs for $(k,m)$, then $p$ compiles them into an L2 proof, which it signs and writes to $L2Proof[p,k,q]$. The next time that $p$ scans the $L2Proof[\cdot,k,q]$ slots, $p$ will see its own L2 proof (or some other valid proof for $(k,m)$) and deliver $(k,m)$.

This three-stage validation process ensures the following crucial property: no two valid L2 proofs can support different values. Intuitively, this property is achieved because for both L1 and L2 proofs, at least $f+1$ values of the previous stage must be copied, meaning that at least one correct process was involved in the quorum needed to construct each proof. Because correct processes read the slots of \emph{all} others at each stage before constructing the next proof, and because they never overwrite or delete values that they already wrote, it is guaranteed that no two correct processes will create valid L1 proofs for different values, since one must see the Value slot of the other. Thus, no two processes, Byzantine or otherwise, can construct valid L2 proofs for different values.

Notably, a weaker version of broadcast, which does not require Property 4 (i.e., Byzantine consistent broadcast~\cite[Module 3.11]{RachidBook}), can be solved with just the first stage of Algorithm~\ref{alg:non-eq}, without the L1 and L2 proofs. The purpose of those proofs is to ensure the 4th property holds; that is, to enable all correct processes to deliver a value once some correct process delivered.


In the appendix, we formally prove the above intuition and arrive at the following lemma.

\begin{lemma}\label{lem:SMnon-eq}
	Reliable broadcast can be solved in shared-memory with SWMR regular registers with $n\geq 2f+1$ processes.
\end{lemma}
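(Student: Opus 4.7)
The plan is to verify the four defining properties of reliable broadcast in turn. Properties 1 (delivery with a correct sender), 3 (integrity), and 4 (totality) will reduce fairly directly to the unforgeability of signatures plus the invariant that a correct process writes each of its Value and L1Proof slots at most once per $(k,r)$. The genuine obstacle will be Property 2 (agreement on delivery), since a Byzantine sender $r$ can equivocate on $Value[r,k,r]$ and in principle drive two correct deliverers to different values for the same $(k,r)$.

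The heart of the argument will be the following key lemma: for every sender $r$ and sequence number $k$, no two distinct correct processes ever write valid L1 proofs supporting different values. Once this lemma is in hand, Property 2 follows by signature counting: a valid L2 proof for $(k,m)$ embeds $\geq f{+}1$ L1 proofs from distinct authors, at most $f$ of whom can be Byzantine, so at least one correct process must have authored an L1 proof for $(k,m)$; two L2 proofs for different values would then force two correct processes to have authored conflicting L1 proofs, contradicting the lemma. To prove the lemma itself I plan a cyclic temporal ordering argument. Each correct process transitions \texttt{WaitForSender} $\to$ \texttt{WaitForL1Proof} by first writing its own $Value[\cdot,k,r]$ slot and only afterwards scanning the rest. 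If $c_1$ and $c_2$ are correct and wrote valid L1 proofs for $(k,m)$ and $(k,m')$ with $m\neq m'$, then validity of $c_1$'s proof forces every signed non-$\bot$ entry it read to equal $(k,m)$; since correct processes never overwrite their Value slots, $c_1$'s read of $Value[c_2,k,r]$ must precede $c_2$'s write there (otherwise $c_1$ would see $c_2$'s signed $(k,m')$ and fail the single-value check), and symmetrically for $c_2$. Combining these with each process's local write-own-then-read-others order yields the impossible cycle $c_1\text{-write} < c_1\text{-read of }c_2 < c_2\text{-write} < c_2\text{-read of }c_1 < c_1\text{-write}$.

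For the remaining properties: Property 1 uses that a Byzantine process cannot forge the correct sender's signature, so every signed value ever present in any $Value[i,k,r]$ slot must equal $(k,m)$; once the $\geq f{+}1$ correct processes have each copied $(k,m)$ into their Value slots, the L1 check succeeds at every correct process, then the L2 check likewise, and delivery follows. Property 3 is essentially the same signature-counting step tracked one layer further: the $\geq f{+}1$ signed Value entries underlying a valid L2 proof include at least one from the sender, so if the sender is correct it must have broadcast exactly the delivered value. Property 4 exploits the copy-on-deliver step in \texttt{checkL2proof}: as soon as any correct $p$ delivers $(k,m)$ from $r$, $L2Proof[p,k,r]$ permanently holds a valid L2 proof, so any correct $q$ that reaches sequence number $k$ in its \texttt{try\_deliver} loop will read that slot and deliver; a routine induction on $k$ then shows every correct $q$ reaches every sequence number.
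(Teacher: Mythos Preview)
Your proposal is correct and mirrors the paper's proof closely: the paper also proves the four properties separately, and for Property~2 it extracts a correct author of an L1 proof on each side (via the $f{+}1$ count) and then uses exactly the write-own-Value-slot-before-read-all ordering to derive a contradiction---your cycle formulation and the paper's ``WLOG $r$ wrote before $r'$'' are the same argument. One small imprecision: in your Property~3 sketch, the Value entries in an L1 proof need not include one \emph{written by} the sender; the correct point (which the paper uses) is that every such entry carries the sender's signature on $(k,m)$, so a correct sender must have signed and hence broadcast it.
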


\subsubsection{Applying Clement et al's Construction}
\label{sec:applying-clement}
Clement et al. show that given unforgeable transferable signatures and non-equivocation, one can reduce Byzantine failures to crash failures in message passing systems~\cite{clement2012limited}. They define non-equivocation as a predicate $valid_p$ for each process $p$, which takes a sequence number and a value and evaluates to true for just one value per sequence number. All processes must be able to call the same $valid_p$ predicate, which always terminates every time it is called.

We now show how to use reliable broadcast to implement messages with transferable signatures and non-equivocation as defined by Clement et al.~\cite{clement2012limited}. Note that our reliable broadcast mechanism already involves the use of transferable signatures, so to send and receive signed messages, one can simply use broadcast and deliver those messages. However, simply using broadcast and deliver is not enough to satisfy the requirements of the $valid_p$ predicate of Clement et al. The problem occurs when trying to validate nested messages recursively.

In particular, recall that in Clement et al's construction, whenever a message is sent, the entire history of that process, including all messages it has sent and received, is attached. Consider two Byzantine processes $q_1$ and $q_2$, and assume that $q_1$ attempts to equivocate in its $k$th message, signing both $(k,m)$ and $(k,m')$. Assume therefore that no correct process delivers any message from $q_1$ in its $k$th round. However, since $q_2$ is also Byzantine, it could claim to have delivered $(k,m)$ from $q_1$. If $q_2$ then sends a message that includes $(q,k,m)$ as part of its history, a correct process $p$ receiving $q_2$'s message must recursively verify the history $q_2$ sent. To do so, $p$ can call \texttt{try\_deliver} on $(q_1,k)$. However, since no correct process delivered any message from $(q_1,k)$, it is possible that this call never returns.

To solve this issue, we introduce a \texttt{validate} operation that can be used along with broadcast and deliver to validate  the correctness of a given message. The \texttt{validate} operation is very simple: it takes in a process id, a sequence number, and a message value $m$, and simply runs the \texttt{checkL2proof} helper function. If the function returns a proof supporting $m$, \texttt{validate} returns true. Otherwise it returns false. The pseudocode is shown in Algorithm~\ref{alg:validate}.

\renewcommand{\figurename}{Algorithm}
\begin{figure}
	\caption{Validate Operation for Reliable Broadcast}
\begin{lstlisting}[columns=fullflexible,breaklines=true, keywords={}]
bool validate(q,k,m){
    val = checkL2proof(q,k);
    if (val == m) {
        return true;    }
    return false;   }
\end{lstlisting}
\label{alg:validate}
\end{figure}

In this way, Algorithms~\ref{alg:non-eq} and~\ref{alg:validate} together provide signed messages and a non-equivocation primitive. Thus, combined with the construction of Clement  et al.~\cite{clement2012limited}, we immediately get the following result.

\begin{theorem} 
There exists an algorithm for weak Byzantine agreement in a shared-memory system with SWMR regular registers, signatures, and
  up to $f_P$ process crashes where $n \ge 2f_P+1$.
\end{theorem}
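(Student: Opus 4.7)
The plan is to instantiate the Clement et al.~\cite{clement2012limited} reduction with the primitives built above. That reduction converts any message-passing algorithm $\cal A$ tolerating $f_P$ crashes on $n \ge 2f_P+1$ processes into one tolerating $f_P$ Byzantine failures, provided three ingredients are available: unforgeable transferable signatures; a non-equivocation predicate $valid_p$ that always terminates and accepts at most one value per sender/sequence-number pair; and reliable point-to-point message delivery. To obtain weak Byzantine agreement it then suffices to take $\cal A$ to be a classical crash-tolerant consensus protocol (e.g.\ Paxos), which solves consensus on $n \ge 2f_P+1$ via majority quorums.

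Next I would exhibit the three ingredients in the SWMR shared-memory setting. Signatures are given by assumption. Reliable point-to-point messaging is standard in SWMR memory: each sender writes a numbered sequence of values into its own SWMR region and each receiver periodically scans that region, so Clement et al.'s use of messages is implementable. For non-equivocation I would plug in Lemma~\ref{lem:SMnon-eq} together with Algorithm~\ref{alg:validate}: whenever $\cal A$ sends its $k$-th message $m$, the sender invokes \textit{broadcast}$(k,m)$; receipt uses \textit{deliver}; and the predicate $valid_p(q,k,m)$ is implemented by invoking \textit{validate}$(q,k,m)$.

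The main obstacle is discharging the Clement et al.\ obligations on $valid_p$. Termination of \textit{validate} is immediate because \textit{checkL2proof} performs at most $n$ reads of SWMR registers with no waiting. The uniqueness requirement---that for each $(q,k)$ at most one $m$ satisfies $valid_p(q,k,m)$---is exactly the central invariant used to prove Lemma~\ref{lem:SMnon-eq}: no two valid L2 proofs for the same $(q,k)$ can support different values, because any two $f{+}1$-sized supporting sets share a correct process whose Value slot is monotone and is read before any L1 proof is formed. Finally, the recursive validation in Clement et al.\ (each message carries its sender's signed history, which the receiver must re-validate entry by entry) terminates because each nested \textit{validate} call only performs bounded SWMR reads rather than waiting on \textit{deliver}, avoiding the blocking scenario where an equivocator starves honest receivers. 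With these obligations met, composing our simulation with Clement et al.\ applied to Paxos immediately yields the claimed weak Byzantine agreement algorithm on $n \ge 2f_P+1$ processes using SWMR registers and signatures.
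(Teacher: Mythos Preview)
Your proposal is correct and follows essentially the same approach as the paper: instantiate Clement et al.'s reduction using the reliable-broadcast implementation (Lemma~\ref{lem:SMnon-eq}) together with the \texttt{validate} wrapper (Algorithm~\ref{alg:validate}) to supply non-equivocation, and then apply the reduction to a crash-tolerant consensus protocol such as Paxos. The paper states this in one sentence immediately preceding the theorem, whereas you additionally spell out why the $valid_p$ predicate terminates and is single-valued; this extra detail is sound and matches the invariant the paper establishes in its appendix proof of Lemma~\ref{lem:SMnon-eq}.
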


\paragraph{Non-equivocation in our model.}
To convert the above algorithm to our model, where memory may fail,
we use the ideas in~\cite{attiya1995sharing,AGMT1992,JCT1998} to
implement failure-free SWMR regular registers from the fail-prone
memory, and then run weak Byzantine agreement using those
regular registers. To implement an SWMR register,
a process writes or reads all \disks, and waits for a majority
  to respond.
When reading, if $p$ sees exactly one distinct non-$\bot$ value $v$ across
  the \disks, it returns $v$; otherwise, it returns $\bot$.

\rmv{
    To convert the above algorithm to our model, where memory may fail,
    we split every \disk{} into $n$ SWMR \mregion{s}---one region per process. On each \disk{} we run the shared memory algorithm.
    To read or write memory, process $p$ accesses all of the \disks, and waits for a response from a majority of them. When reading a value from a specific slot, $p$ should get $\bot$ or a unique value across the \disks{}, if the owner of the slot is not Byzantine. $p$'s output for the read is the non-$\bot$ value if it saw exactly one non-$\bot$ value, and $\bot$ otherwise. 
    This construction resembles~\cite{attiya1995sharing,AGMT1992,JCT1998},
     but without write-backs.
    This gives us regular registers on shared memory.
}

\rmv{
    Note that for this construction to work, it is important that half of the \emph{\disks{}} fail, with all regions on one \disk{} failing together. This is different from having $n^2$ regions that are allowed to fail separately. In such a scenario, the failed \mregion{s} could all belong to correct processes, and the algorithm would no longer work. Therefore, the ability to divide each \disk{} into regions with specific access permissions, but tie their failure to each other, is the key to achieving such a high Byzantine fault tolerance even with failure-prone \disks{} in our model.
}

\begin{definition}
    Let $\mathcal{A}$ be a message-passing algorithm. \emph{\robustbyz($\mathcal{A}$)} is the algorithm $\mathcal{A}$ in which all $send$ and $receive$ operations are replaced by T-send and T-receive operations (respectively) implemented with reliable broadcast.
\end{definition}

\ifcamera
From the result of Clement et al.~\cite{clement2012limited}, Lemma~\ref{lem:SMnon-eq}, and the above handling of memory failures, it is easy to see that with $\mathcal{A}$ being a correct consensus algorithm for the crash-only setting, \robustbyz($\mathcal{A}$) solves weak Byzantine agreement with the desired fault tolerance in our dynamic permission M\&M model. This is summarized in the follow theorem.
\else
Thus we get the following lemma, from the result of Clement et al.~\cite{clement2012limited}, Lemma~\ref{lem:SMnon-eq}, and the above handling of memory failures.

\begin{lemma}\label{lem:robustByz}
    If $\mathcal{A}$ is a consensus algorithm that is tolerant to $f$ process crash failures, then \robustbyz($\mathcal{A}$) is a weak Byzantine agreement algorithm that is tolerant to up to $f_P$ Byzantine processes and $f_M$ memory crashes,
    where $n \ge 2f_P+1$ and $m \ge 2f_M+1$ in the message-and-memory model.
\end{lemma}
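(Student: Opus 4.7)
The plan is to compose three layers and verify correctness at each one. At the bottom, I would implement failure-free SWMR regular registers on top of the $m \geq 2f_M+1$ fail-prone memories using the standard majority-based replication hinted at in the excerpt: each write is dispatched to all $m$ memories and completes once a majority acknowledge; each read polls all memories, waits for a majority of responses, and returns the unique non-$\bot$ value if one exists (and $\bot$ otherwise). Liveness of reads and writes follows because at most $f_M$ memories crash, so a majority is always live. Regularity follows because any two majorities intersect in at least one live memory, which retains the most recent written value, and because by construction a correct writer ever installs at most one value per register, so a reader cannot see two distinct non-$\bot$ values from a correct writer.

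On top of these emulated SWMR registers I would instantiate Algorithm~\ref{alg:non-eq} together with the \texttt{validate} helper of Algorithm~\ref{alg:validate}. Lemma~\ref{lem:SMnon-eq} then gives reliable broadcast tolerating $f_P$ Byzantine processes whenever $n\geq 2f_P+1$, and combining this with the signatures assumed by the model provides exactly the two ingredients required by Clement et al.~\cite{clement2012limited}: unforgeable transferable signatures and a per-process non-equivocation predicate that always terminates. The termination of the predicate hinges on property~(4) of reliable broadcast (totality), which is precisely why Algorithm~\ref{alg:non-eq} includes the L1 and L2 proof layers rather than stopping at Byzantine consistent broadcast.

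Next I would apply the Clement et al.\ transformation essentially verbatim to $\mathcal{A}$: every \texttt{send} in $\mathcal{A}$ becomes a T-send that signs and broadcasts the sender's entire history, and every \texttt{receive} becomes a T-receive that uses \texttt{validate} to recursively verify the attached history. The key correctness claim to prove is that, from the viewpoint of correct processes, any Byzantine process is indistinguishable from a crashed one: equivocation is ruled out by property~(2) of reliable broadcast, forging messages from other processes is ruled out by unforgeability of signatures, and omitting or rewriting past messages is detected by the history check and causes the offending message to be dropped. Thus the projection of any execution of \robustbyz($\mathcal{A}$) onto T-send/T-receive events of correct processes is indistinguishable from an execution of $\mathcal{A}$ with at most $f_P$ crashed processes, and Agreement and Termination transfer directly from the hypothesis on $\mathcal{A}$. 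Weak Validity is the easiest piece: in a failure-free run, all signatures and validations succeed, no messages are dropped, and the decision of $\mathcal{A}$ is by hypothesis the input of some process.

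The step I expect to be the main obstacle is the interaction between the three emulation layers. I need to check carefully that (i) the SWMR emulation returns $\bot$ precisely when the writer has not yet completed a write, so that the reads inside \texttt{checkL2proof} and the \texttt{WaitFor*} branches do not misclassify an in-progress value as absent; (ii) the $n\geq 2f_P+1$ majority used inside Algorithm~\ref{alg:non-eq} and the $m\geq 2f_M+1$ majority used in the register emulation are kept independent, so that the counts of Byzantine processes and crashed memories do not interfere; and (iii) the recursive invocation of \texttt{validate} inside the history check of T-receive always terminates, which is exactly where property~(4) of reliable broadcast is used — any message genuinely referenced by a correct history will eventually be deliverable by every correct process, so \texttt{checkL2proof} eventually returns a proof. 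Once these three technical points are settled, the lemma follows by assembling the transformation with Lemma~\ref{lem:SMnon-eq} and the construction of~\cite{clement2012limited}.
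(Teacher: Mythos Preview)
Your proposal is correct and follows exactly the route the paper takes: the paper does not give a standalone proof of this lemma but simply states it as a consequence of Clement et al.'s transformation, Lemma~\ref{lem:SMnon-eq}, and the majority-replication emulation of SWMR regular registers over the $m\geq 2f_M+1$ fail-prone memories, which are precisely your three layers in the same order.

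One small correction regarding your point~(iii): termination of \texttt{validate} does not hinge on property~(4) of reliable broadcast. The \texttt{checkL2proof} helper is non-blocking---it performs a single scan of the $L2Proof[\cdot,k,q]$ slots and returns either a value or \texttt{null}---so \texttt{validate} always terminates unconditionally. Property~(4) is instead what guarantees that once some correct process has delivered $(k,m)$ from $q$ (and hence written a valid L2 proof), every other correct process's later call to \texttt{validate}$(q,k,m)$ will succeed; this is a liveness-of-T-receive concern, not a termination-of-\texttt{validate} concern.
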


The following theorem is an immediate corrolary of the lemma.
\fi
\begin{theorem}
There exists an algorithm for Weak Byzantine Agreement in a message-and-memory model with up to $f_P$ Byzantine processes and $f_M$ memory crashes,
  where $n \ge 2f_P+1$ and $m \ge 2f_M+1$.
\end{theorem}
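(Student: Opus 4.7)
The plan is to derive the theorem as an immediate corollary of Lemma~\ref{lem:robustByz}, which is why the preceding paragraph labels it as such. The only real content of the proof is to exhibit a concrete crash-tolerant consensus algorithm $\mathcal{A}$ that satisfies the hypothesis of that lemma at the right resilience level. The canonical choice is asynchronous Paxos, which solves consensus in a message-passing system using majority quorums and hence tolerates up to $f_P$ crash failures whenever $n \ge 2f_P + 1$, providing safety under full asynchrony and liveness under the standard additional assumptions (e.g., partial synchrony) already adopted in the paper.

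Given such an $\mathcal{A}$, the next step is to apply the \robustbyz{} transformation. By the definition given earlier, \robustbyz{}(Paxos) is simply Paxos in which every send and receive has been replaced by the trusted counterparts T-send and T-receive. These trusted primitives are realized by layering Clement et al.'s history-attachment construction on top of the reliable broadcast primitive of Algorithm~\ref{alg:non-eq} (with the \t{validate} operation of Algorithm~\ref{alg:validate} to support recursive verification of nested histories), which in turn runs over the failure-free SWMR regular registers emulated from the fail-prone shared \disks. Lemma~\ref{lem:robustByz} then directly delivers a weak Byzantine agreement algorithm under exactly the parameter bounds claimed.

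The only thing left to check is that the two parameter constraints of the theorem coincide with what each ingredient of the construction requires: $n \ge 2f_P+1$ is simultaneously the bound needed by Paxos to form majority quorums, the bound established by Lemma~\ref{lem:SMnon-eq} for reliable broadcast, and the bound under which Clement et al.'s transformation is correct; $m \ge 2f_M+1$ is the bound needed by the SWMR-register emulation so that a majority of memories always responds. I do not expect any real obstacle here, since all three bounds appear verbatim in the statement of Lemma~\ref{lem:robustByz}; the proof therefore reduces to a one-line invocation of that lemma with $\mathcal{A}$ instantiated to Paxos.
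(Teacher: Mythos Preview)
Your proposal is correct and matches the paper's approach exactly: the paper states that the theorem is an immediate corollary of Lemma~\ref{lem:robustByz}, and you derive it by instantiating $\mathcal{A}$ with Paxos and checking that the parameter bounds line up. The only difference is that you are more explicit about naming Paxos, whereas the paper leaves the choice of crash-tolerant $\mathcal{A}$ unspecified.
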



\subsection{The \fastbyz{} Sub-Algorithm}\label{sec:fastbyz}

\renewcommand{\figurename}{Algorithm}
\begin{figure}
	\caption{\fastbyz{} normal operation---code for process $p$}
\begin{lstlisting}[columns=fullflexible,breaklines=true, keywords={}]
Leader code
propose(v) {
    sign(v);
    status = Value[@$\ell$@].write(v); @\label{line:propose}@
    if (status == nak) Panic_mode();
    else decide(v); @\label{line:leaderDecide}@}

Follower code
propose(w){
    do {v = read(Value[@$\ell$@]);
        for all q @$\in \Pi$@ do pan[q] = read(Panic[q]);
    } until (v @$ \neq \bot$@ || pan[q] == true for some q || timeout); @\label{line:timeout1}@
    if (v @$ \neq \bot$@ && sValid(p1,v)) { @\label{line:checkLeader}@
        sign(v);
        write(Value[p],v);@\label{line:replicate}@
        do {for all q @$\in \Pi$@ do val[q] = read(Value[q]);
            if |{q : val[q] == v}| @$\ge$@ n then {
                Proof[p].write(sign(val[1..n]));
                for all q @$\in \Pi$@ do prf[q] = read(Proof[q]);
                if |{q : verifyProof(prf[q]) == true}| @$\ge$@ n { decide(v); exit; } @\label{line:countCheck}@}
            for all q @$\in \Pi$@ do pan[q] = read(Panic[q]);
        } until (pan[q] == true for some q || timeout); @\label{line:timeout2}@}
    Panic_mode();}
\end{lstlisting}
\label{alg:normal}
\end{figure}
\renewcommand{\figurename}{Figure}

%


\renewcommand{\figurename}{Algorithm}
\begin{figure}
	\caption{\fastbyz{} panic mode---code for process $p$}
\begin{lstlisting}[breaklines=true, keywords={}]
panic_mode(){
    Panic[p] = true;
    changePermission(Region[@$\ell$@], R: @$\Pi$@, W: {}, RW: {});  // remove write permission @\label{line:revokeLeader}@
    v = read(Value[p]); @\label{line:readOwn}@
    prf = read(Proof[p]);
    if (v @$\neq \bot$@){  Abort with <v, prf>; return; } @\label{line:abortWithOwn}@ @\label{line:myval}@
    LVal = read(Value[@$\ell$@]); 
    if (LVal @$\neq \bot$@) {Abort with <LVal, @$\bot$@>; return;}@\label{line:LVal}@
    Abort with <myInput, @$\bot$@>; }
\end{lstlisting}
\label{alg:panic}
\end{figure}
\renewcommand{\figurename}{Figure}

%

We now give an algorithm that decides in two delays in common executions in which the system is synchronous and there are no failures.
It requires only
 one signature for a fast decision, whereas the
best prior algorithm requires $6f_P + 2$ signatures and $n \ge 3f_P+1$~\cite{aublin2015next}.
%
Our algorithm, called \fastbyz{}, is not in itself a complete consensus algorithm; it may abort in some executions.
%
If \fastbyz{} aborts, it outputs an \textit{abort value}, which is used to initialize the  \robustbyz{} so that their composition preserves weak Byzantine agreement.
This composition is inspired by the Abstract framework of Aublin \textit{et al.}~\cite{aublin2015next}.

%

\rmv{
    The algorithm uses the framework of Aublin \textit{et al.}~\cite{aublin2015next}, which consists of
      a sequence of consensus instances, such that if an instance aborts, the next instance
      is invoked.
    \fastbyz{} is an algorithm for the first instance, which provides the required performance
      for common-case executions. 
    We show that the robust non-aborting algorithm presented in Section~\ref{sec:robustProtocol} can serve as another instance of Abstract. Thus the two together yield a non-aborting Byzantine agreement protocol that is works when there are at least $2f_P+1$ processes with a decision in two delays in the common case. Other Abstract instances can be put between these two protocols to improve it further, but this does not concern us in this paper.
      \Naama{Added the last few sentences. Need to verify this. The reason I am worried is that I am not sure we actually show that either of them is really an algorithm for Abstract. In particular we don't show anything for the robust version. We just say it as a side comment.}
}
  

The algorithm has a special process $\ell$, say $\ell = p_1$, which serves both as a \emph{leader} and a \emph{follower}. Other processes act only as \emph{followers}.
The \disk{} is partitioned into $n+1$ regions denoted $\t{Region}[p]$ for each $p\in\Pi$, plus an extra one for $p_1$, $\t{Region}[\ell]$  in which it proposes a value.
Initially, $\t{Region}[p]$ is a regular SWMR region where $p$ is the writer.
Unlike in Algorithm~\ref{alg:non-eq}, some of the permissions are dynamic; processes may remove $p_1$'s write permission to $\t{Region}[\ell]$ (i.e., the \textit{legalChange} function returns false to any permission change requests, except for ones revoking $p_1$'s permission to write on $\t{Region}[\ell]$).

Processes initially execute under a \emph{normal} mode in common-case executions,
but may switch to \emph{panic} mode if they intend to abort, as in~\cite{aublin2015next}.
The pseudo-code of the normal mode is in Algorithm~\ref{alg:normal}.
$\t{Region}[p]$ contains three registers $\t{Value}[p]$, $\t{Panic}[p]$, $\t{Proof}[p]$ initially set to $\bot$, $\t{false}$, $\bot$.
To propose $v$, the leader $p_1$ signs $v$ and writes it to $\t{Value}[\ell]$.
If the write is successful (it may fail because its write permission was removed),
  then $p_1$ decides $v$; otherwise $p_1$ calls $\t{Panic\_mode}()$. Note that all processes, including $p_1$, continue their execution after deciding. However, $p_1$ never decides again if it decided as the leader.
A follower $q$ checks if $p_1$ wrote to $\t{Value}[\ell]$ and, if so, whether
  the value is properly signed.
If so, $q$ signs $v$, writes it to $\t{Value}[q]$, and waits for other processes to
  write the same value to $\t{Value}[*]$. If $q$ sees $2f+1$ copies of $v$ signed by different processes, $q$ assembles these copies in a \textit{unanimity proof}, which it signs and writes to $\t{Proof}[q]$. $q$ then waits for $2f+1$ unanimity proofs for $v$ to appear in $\t{Proof}[*]$, and checks that they are valid,
   in which case $q$ decides $v$.
This waiting continues until a timeout expires\footnote{The timeout is chosen to be an upper bound on the communication, processing and computation delays in the common case.}, 
  at which time $q$ calls $\t{Panic\_mode}()$.
In $\t{Panic\_mode}()$, a process $p$ sets $\t{Panic}[p]$ to $\t{true}$ to tell other processes
  it is panicking; other processes periodically check to see if they should panic too.
$p$ then removes write permission from $\t{Region}[\ell]$, and decides on a value to abort: either
  $\t{Value}[p]$ if it is non-$\bot$,  $\t{Value}[\ell]$ if it is non-$\bot$, or $p$'s input value.
  If  $p$ has a unanimity proof in $\t{Proof}[p]$, it adds it to the abort value. 

\ifcamera
In the full version of this paper~\cite{fullversion}, we prove the correctness of \fastbyz, and in particular we show the following two important agreement properties:
\else
In Appendix~\ref{sec:fastCorrectness}, we prove the correctness of \fastbyz, and in particular we show the following two important agreement properties:
\fi

\begin{lemma}[\fastbyz{} Decision Agreement]\label{lem:decisionAgreement}
	Let $p$ and $q$ be correct processes. If $p$ decides $v_1$ while $q$ decides $v_2$, then $v_1=v_2$.
\end{lemma}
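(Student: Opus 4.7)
The plan is to case-split on who decides, and rely on two invariants of the normal-mode code. \emph{Invariant 1:} a correct follower $r$ writes its own register $\t{Value}[r]$ at most once, and only after verifying on line~\ref{line:checkLeader} that the value it read from $\t{Value}[\ell]$ is signed by $p_1$ (the single write on line~\ref{line:replicate}). \emph{Invariant 2:} a correct leader $p_1$ invokes the leader's $\t{propose}$ exactly once, so it signs and writes $\t{Value}[\ell]$ with a unique value (line~\ref{line:propose}). Because $\t{Value}[r]$ is an SWMR region owned by $r$, Invariant 1 implies that a correct follower's slot is \emph{monotonic}: it is $\bot$ initially and then holds a single fixed value forever (inspection of $\t{Panic\_mode}()$ confirms no further writes to $\t{Value}[r]$ occur).

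First, suppose that both $p$ and $q$ decide in the follower branch; this case also subsumes any execution in which $\ell = p_1$ is Byzantine, since then no correct process is the leader. Before $p$ decides $v_1$, it must observe $\t{Value}[s] = v_1$ for every $s \in \Pi$ (the $\geq n$ count on line~\ref{line:countCheck}), and in particular for every correct $s$. By the monotonicity noted above, each such correct slot is permanently set to $v_1$. Applying the same argument to $q$'s decision of $v_2$, every correct follower's slot is permanently set to $v_2$. Since at least one correct follower exists (in fact $\geq n-f \geq f+1$) and each correct slot stores a unique value for its entire lifetime, we must have $v_1 = v_2$.

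Next, suppose $p_1$ is correct and decides $v_1$ in the leader branch on line~\ref{line:leaderDecide}; by Invariant 2, $v_1$ is the unique value ever signed by $p_1$. Any other correct decider $q$ acts as a follower, and before replicating its value on line~\ref{line:replicate} must pass the $\t{sValid}(p_1, v_2)$ check on line~\ref{line:checkLeader}; unforgeability of signatures therefore forces $v_2$ to coincide with the value signed by $p_1$, so $v_2 = v_1$. Combining this with the previous case yields agreement among all correct deciders.

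The main obstacle I anticipate is handling the non-atomicity of the scan on line~\ref{line:countCheck}: $p$ reads the $n$ slots sequentially, so it is not immediate that ``there is a single global moment at which all correct slots hold $v_1$.'' The SWMR monotonicity granted by Invariant 1 sidesteps this cleanly, and the symmetric argument on $q$ lets us compare the two observations without having to reason about Byzantine-owned slots, whose contents are unconstrained. A secondary subtlety is to confirm that no execution path---normal mode, panic mode, or the $p_1$-as-follower path that runs alongside the leader code---violates Invariant 1; this reduces to a straightforward inspection of Algorithms~\ref{alg:normal} and~\ref{alg:panic}.
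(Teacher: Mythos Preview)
Your argument is correct. Your Case~1 is essentially the paper's proof: the paper observes that a correct follower $p$ who decides $v_1$ must have read $v_1$ from all $n$ $\t{Value}$ slots, including $q$'s; since $q$ is correct, that slot is write-once (Observation~\ref{obs:fastonce}, your Invariant~1), and $q$ can only decide the value it itself wrote there, forcing $v_2=v_1$. You phrase this via ``some correct follower's slot must simultaneously witness $v_1$ and $v_2$,'' whereas the paper just points at $q$'s slot directly (which is slightly leaner, since $q$ is already assumed correct), but the mechanism is the same.

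The one genuine difference is your explicit Case~2. The paper's short proof does not separate out the leader-decides case; it is covered implicitly because $p_1$ also executes the follower branch and therefore owns a $\t{Value}[p_1]$ slot that participates in the same write-once argument (swap the roles so the follower among $p,q$ is the one doing the reading). Your route through signature unforgeability---$p_1$ signs a unique value, and every follower's $\t{sValid}$ check on line~\ref{line:checkLeader} pins its replicated value to that one---is an alternative that does not depend on $p_1$'s follower thread having reached line~\ref{line:replicate} before the other decider reads. Both decompositions are sound; yours is more explicit, the paper's more compact.
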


\begin{lemma}[\fastbyz{} Abort Agreement]\label{lem:abortAgreement}
	Let $p$ and $q$ be correct processes (possibly identical). If $p$ decides $v$ in \fastbyz{} while $q$ aborts from \fastbyz{}, then
	$v$ will be $q$'s abort value. Furthermore, if $p$ is a follower, $q$'s abort proof is a correct unanimity proof. 
\end{lemma}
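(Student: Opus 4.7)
\emph{Proof proposal.} I would prove this by a case analysis on whether the deciding process $p$ is the leader $\ell$ or a follower. Note first that if $p=q$, the hypothesis is vacuous since the code of a correct process exits (does not continue to panic) after \t{decide}; so we may assume $p\neq q$.

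\textbf{Case 1: $p=\ell$.} For $p$ to reach \t{decide} at line~\ref{line:leaderDecide}, its write on line~\ref{line:propose} must have returned \ack, so $\t{Value}[\ell]$ holds a value signed by $p$ and equal to $v$. Since $p$ is correct it does not write $\t{Value}[\ell]$ again, and by the initial permission setup no other process can ever write to $\t{Region}[\ell]$; hence $\t{Value}[\ell]=v$ from that point on. When $q$ enters panic mode and executes line~\ref{line:readOwn}, one of two things happens. Either $\t{Value}[q]$ is non-$\bot$, in which case $q$ itself previously wrote it in the follower code (line~\ref{line:replicate}) after passing the check at line~\ref{line:checkLeader}; since $p$ is correct and only ever signed $v$, the signed value must be $v$, and $q$ aborts with $v$ at line~\ref{line:abortWithOwn}. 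Or $\t{Value}[q]=\bot$, so $q$ falls through to line~\ref{line:LVal}, reads $\t{Value}[\ell]=v$, and aborts with $v$. Either way $v$ is $q$'s abort value.

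\textbf{Case 2: $p$ is a follower.} For $p$ to decide at line~\ref{line:countCheck}, $p$'s reads must have returned $n$ copies of $v$ in the $\t{Value}[\cdot]$ slots and $n$ valid unanimity proofs in the $\t{Proof}[\cdot]$ slots. In particular, $p$ read $v$ from $\t{Value}[q]$ and a valid unanimity proof for $v$ from $\t{Proof}[q]$. Since $\t{Region}[q]$ is SWMR with $q$ as the only writer and $q$ is correct, $q$ must have written those values itself; moreover, inspection of the normal-mode code shows a correct follower writes each of $\t{Value}[q]$ and $\t{Proof}[q]$ at most once and never overwrites, so these registers still hold $v$ and the unanimity proof when $q$ reads them on line~\ref{line:readOwn}. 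Therefore $q$ takes the first abort branch and aborts with $\langle v, \text{unanimity proof for }v\rangle$, establishing both conclusions of the lemma.

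\textbf{Main obstacle.} The only subtle point is justifying that correct processes do not overwrite their SWMR slots, and that the signature check at line~\ref{line:checkLeader} together with $p$'s correctness (in Case 1) pins down the unique value any correct $q$ could have deposited into $\t{Value}[q]$. Both facts are invariants to be read off the pseudocode, but they must be stated explicitly; once they are, the rest of the argument is essentially bookkeeping over the two branches of \t{panic\_mode}.
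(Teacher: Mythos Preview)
Your proof is correct and follows the same case analysis as the paper (leader versus follower for the deciding process $p$), relying on the same two invariants: SWMR slots of correct processes are written at most once, and the leader's signature pins down the unique value any correct follower can copy.

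One minor correction: your claim that the case $p=q$ is vacuous is not quite right. The paper explicitly notes that $p_1$ plays both the leader and follower roles and that ``all processes, including $p_1$, continue their execution after deciding''; hence $p_1$ may decide as leader and subsequently abort from its follower execution, so $p=q=p_1$ is a live case. The paper handles $p=q$ separately by pointing to the panic-mode reads. That said, your Case~1 argument already covers $q=p_1$ without modification (nothing in it excludes $q=\ell$), so the overall proof still goes through; you should simply drop the vacuity remark rather than rely on it.
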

\rmv{
    We show that we can solve Byzantine consensus more efficiently than state-of-the-art message passing protocols. Currently, the most efficient common-case protocol we are aware of is Quorum~\cite{aublin2015next}, which requires a single round trip (2 message delays) and $2(3f+1)$ message signatures per decision. We propose a new protocol called \fastbyz{}, which maintains the  complexity of Quorum (a single round trip) while reducing the number of signatures on the critical path to one. Both the complexity and the number of signatures of \fastbyz{} are thus obviously optimal.
    
    We assume initially that there is only one (reliable) \disk{} and discuss later how we can extend our results to unreliable memory. 
    
    We assume that there is a predetermined leader for a \fastbyz{} instance. A practical choice would be for the process with the lowest id to be the leader. Thus, when initializing a \fastbyz{} instance, a process will execute the leader code if it has the lowest id and the follower code otherwise. As before, the \disk{} is partitioned into SWMR regions: each process can write in only one region, while all processes can read all regions. All region permissions are static, except the leader's region, which has dynamic permission: initially (only) the predetermined leader can write, but any process can revoke the leader's write access forever using the idempotent primitive \textit{revokeWritePermission}.
    
    Like Quorum, \fastbyz{} is designed for the common case in which there are no failures, asynchrony or contention. As long as these conditions are satisfied, processes execute in \textit{normal operation}. As soon as one of the conditions is violated, processes may \textit{abort} and return an optional \textit{abort value}. This abort value can be used later by the calling layer to initialize a subsequent consensus instance (more details in Aublin \textit{et al.}~\cite{aublin2015next}). 
    
    The pseudocode of normal operation is shown in Algorithms~\ref{alg:proposer} and~\ref{alg:followers}. When the proposer $p$ receives a new value $v$, it signs and appends $v$ to $p$'s region. If the append is successful, $p$ is done. When a listener/follower $q$ sees a new value $v$ from the proposer, it signs and copies $v$ to $q$'s region and then waits for all other processes to report the same value.
    
    If anything goes wrong (e.g., proposer write fails, listeners detect disagreement etc.), any process can invoke panic mode to prepare to abort. The pseudocode for panic mode is shown in Algorithm~\ref{alg:panic}. When process $r$ sees panic in any of the regions, $r$ stops executing in normal operation, switches to panic mode, revokes write permission to the leader slot, prepares an abort value if necessary, and aborts.

    
    
    The pseudocode uses $\Delta$, an upper bound on the total communication and processing delay. For brevity we omit the code for the \textit{panicDeclared}; this method simply iterates through all the regions, searching for $\lbrace PANIC\rbrace$; if the search succeeds, \textit{panicDeclared} returns true, otherwise it returns false.  
}

The above construction assumes a fail-free memory with regular
  registers, but we can extend it to tolerate memory failures
  using the approach of Section~\ref{sec:robustProtocol}, noting that
  each register has a single writer process.

\subsection{Putting it Together: the \composedAlgo{} Algorithm}\label{sec:composition}

\begin{figure}
\centering
\includegraphics[width=3in]{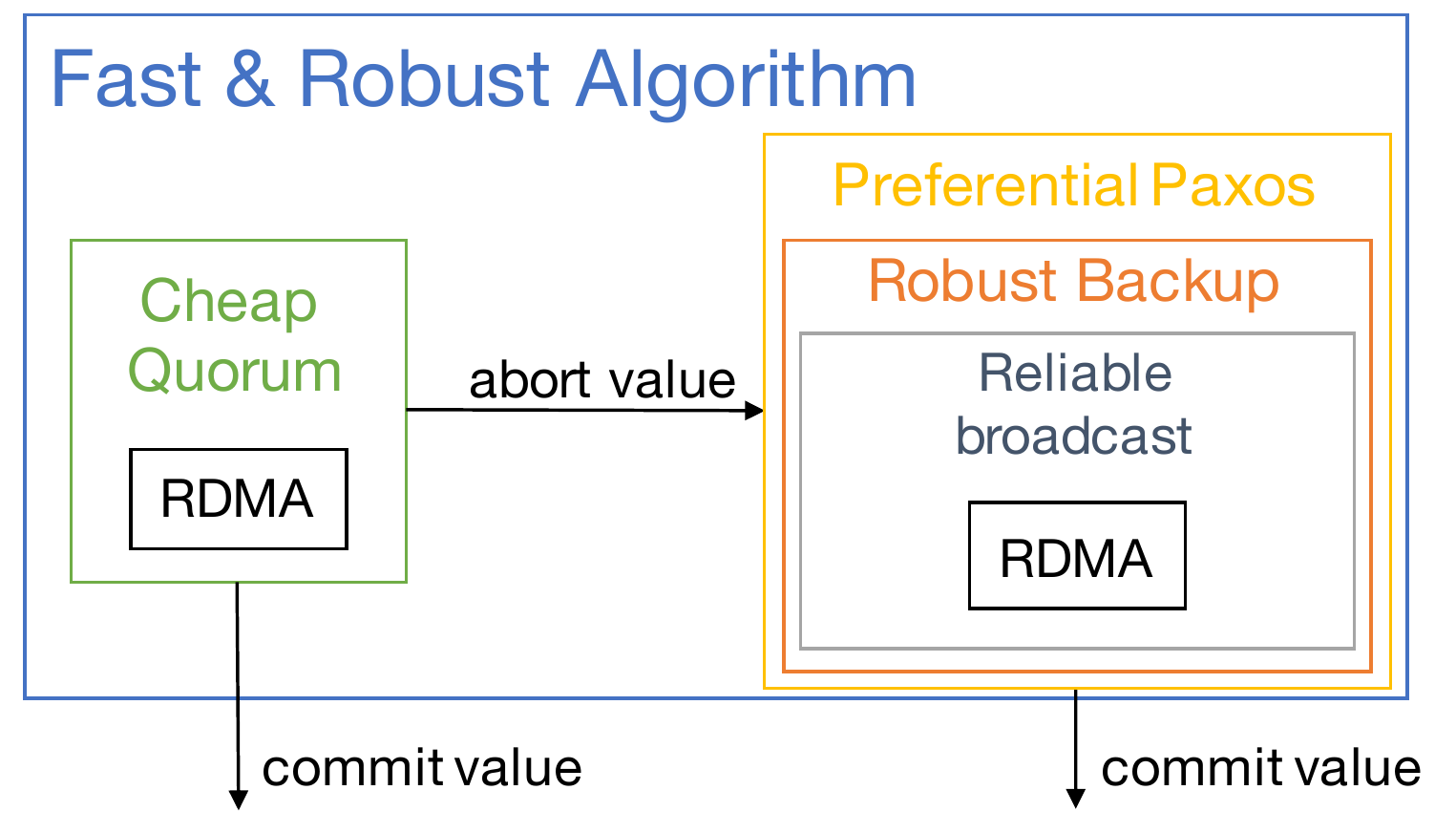}
\caption{Interactions of the components of the \composedAlgo{} Algorithm.}
\label{fig:byz_diagram}
\end{figure}

The final algorithm, called \composedAlgo{}, combines
  \fastbyz{} (\S\ref{sec:fastbyz}) and
  \robustbyz{} (\S\ref{sec:robustProtocol}), as we now explain.
Recall that \robustbyz{} 
  is parameterized by a message-passing consensus algorithm $\cal A$
  that tolerates crash-failures. $\cal A$ can be any such algorithm (e.g., Paxos).

Roughly, in \composedAlgo{}, we run \fastbyz{} and, if it aborts, 
  we use a process's
  abort value as its input value to \robustbyz{}.
However, we must carefully glue the two algorithms together  to ensure that
  if some correct process decided $v$ in \fastbyz{}, 
  then $v$ is the only value that can be decided in \robustbyz{}.

For this purpose, we propose a simple wrapper for \robustbyz{}, called \emph{\prefPax}. \prefPax{} first runs a set-up phase, in which processes may adopt new values,
and then runs \robustbyz{} with the new values. 
More specifically, there are some \emph{preferred} input values $v_1 \ldots v_k$, ordered by priority. We guarantee that every process adopts one of the top $f+1$ priority inputs. In particular, this means that if a majority of processes get the highest priority value, $v_1$, as input, then $v_1$ is guaranteed to be the decision value. 
The set-up phase is simple; all processes send each other their input values. Each process $p$ waits to receive $n-f$ such messages, and adopts the value with the highest priority that it sees. This is the value that $p$ uses as its input to Paxos.
\ifcamera
The pseudocode for \prefPax{} is given in the full version of our paper~\cite{fullversion}, where we also prove the following lemma about \prefPax:
\else
The pseudocode for \prefPax{} is given in Algorithm~\ref{alg:pref} in Appendix~\ref{sec:combinedCorrectness}, where we also prove the following lemma about \prefPax: 
\fi

\begin{lemma}[\prefPax{} Priority Decision]\label{lem:pref}
	\prefPax{} implements weak Byzantine agreement with $n \geq 2f_P+1$ processes. Furthermore, let $v_1, \ldots, v_n$ be the input values of an instance $C$ of \prefPax, ordered by priority. The decision value of correct processes is always one of $v_1, \ldots, v_{f+1}$. 
\end{lemma}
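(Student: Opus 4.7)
My plan is to decompose the lemma into two parts: that \prefPax{} implements weak Byzantine agreement, and that every decision value lies in $\{v_1,\ldots,v_{f+1}\}$. Throughout, I will use the fact that \prefPax{} post-composes a simple set-up round with an inner \robustbyz{} instance, and I will rely on Lemma~\ref{lem:robustByz} for the properties of \robustbyz.

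For weak Byzantine agreement, agreement is inherited directly from the agreement of \robustbyz. Termination of the set-up round is immediate because there are at least $n-f\ge f+1$ correct senders, so every correct process eventually receives the $n-f$ messages it is waiting for and enters \robustbyz, whose termination gives the global one. Weak validity is only required under the hypothesis of no faults; in that case each adopted value is necessarily an initial input of some correct process, and the validity of the underlying crash-tolerant $\mathcal{A}$ forces the decision of \robustbyz{} to be one of the values supplied to it, hence an initial input to \prefPax.

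For the priority property, my strategy is to show that \emph{every} value that is fed into the inner \robustbyz{} -- whether by a correct or by a Byzantine participant -- already lies in $\{v_1,\ldots,v_{f+1}\}$; combined with the validity of $\mathcal{A}$, this forces the decision to have the same form. For a correct process the argument is a simple counting one: it receives inputs from $n-f$ distinct processes and therefore misses at most $f$ of the $n$ values $v_1,\ldots,v_n$, so its received set must intersect $\{v_1,\ldots,v_{f+1}\}$, and the highest-priority value among those received must itself lie in that set.

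The main technical obstacle is extending this reasoning to Byzantine participants, who could in principle claim to have adopted $v_n$. Here I will invoke the Clement et al.~\cite{clement2012limited} transformation underlying \robustbyz: every message produced under T-send carries the sender's entire signed protocol history, and receivers locally reject messages whose history is inconsistent with the prescribed algorithm, so Byzantine behavior is reduced to crash behavior with a single consistent history. Because the set-up round is part of the protocol wrapped by Clement et al., any message from a Byzantine process $q$ that is accepted by a correct participant must exhibit signed set-up messages from $n-f$ distinct processes together with an adopted value that respects the highest-priority rule applied to those messages. The same counting argument as above then pins $q$'s adopted value -- and hence its input to $\mathcal{A}$ -- inside $\{v_1,\ldots,v_{f+1}\}$, completing the proof.
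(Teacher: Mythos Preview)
Your proposal is correct and follows essentially the same route as the paper's proof: both reduce weak Byzantine agreement to Lemma~\ref{lem:robustByz}, use the pigeonhole/counting argument (miss at most $f$ values, so at least one of $v_1,\ldots,v_{f+1}$ is received) for correct processes, and invoke the T-send/T-receive (Clement et al.) machinery to argue that Byzantine participants cannot lie about their adopted value. Your treatment is simply more explicit---you spell out termination of the set-up round and unpack \emph{why} the full-history verification forces a Byzantine process's adopted value into $\{v_1,\ldots,v_{f+1}\}$---whereas the paper compresses this into the single sentence ``No process can lie or pick a different value, since we use T-send and T-receive throughout.''
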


We can now describe \composedAlgo{} in detail.
We start executing \fastbyz{}. If \fastbyz{} aborts, we execute
  \prefPax{}, with each process receiving its abort value from \fastbyz{} as its input value to \prefPax{}. We define the priorities of inputs to \prefPax{} as follows.
  
  \begin{definition}[Input Priorities for \prefPax]\label{def:priorities}
  	The input values for \prefPax{} as it is used in \composedAlgo{} are split into three sets (here, $p_1$ is the leader of \fastbyz):
  	\begin{itemize}
  		\item $T = \{ v \mid v \text{ contains a correct unanimity proof }\}$
  		\item $M = \{v \mid v \not\in T \wedge v \text{ contains the signature of } p_1\}$
  		\item $B = \{v \mid v \not\in T \wedge v \not\in M\}$
  	\end{itemize}

  The priority order of the input values is such that for all values $v_T\in T$, $v_M \in M$, and $v_B \in B$, $priority(v_T) > priority(v_M) > priority(v_B)$.
  \end{definition} 
 Figure~\ref{fig:byz_diagram} shows how the various algorithms presented in this section come together to form the \composedAlgo{} algorithm.
\ifcamera
In the full version,
\else
In Appendices~\ref{sec:fastCorrectness} and~\ref{sec:combinedCorrectness}, 
\fi
we show that \composedAlgo{} is correct, with the following key lemma:

\begin{lemma}[Composition Lemma]\label{lem:composition}
	If some correct process decides a value $v$ in \fastbyz{} before an abort, then $v$ is the only value that can be decided in \prefPax{} with priorities as defined in Definition~\ref{def:priorities}. 
\end{lemma}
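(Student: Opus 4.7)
The plan is to do a case analysis on whether the deciding correct process $p^*$ is the leader $p_1$ or a follower, and in each case to argue that every correct process's input to \prefPax{} equals $v$ and has priority in $T\cup M$, while Byzantine inputs with priority high enough to matter must also equal $v$. Since there are at least $f_P+1$ correct processes, the top $f_P+1$ priority inputs will all be $v$, and Lemma~\ref{lem:pref} will then force the \prefPax{} decision to be $v$.

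The first building block I would establish is that no valid unanimity proof exists for any value $v'\neq v$. A unanimity proof requires $2f_P+1$ signed copies obtained from the $\t{Value}[\cdot]$ registers. Each correct follower writes $\t{Value}[q]$ at most once, and only after verifying $p_1$'s signature, so every correct follower signs the unique value $p_1$ signed, which $p^*$'s decision pins down to $v$ (if $p^*=p_1$ then $p_1$ is correct and signed only $v$; if $p^*$ is a follower that decided $v$, the decision path forces $p_1$ to have signed $v$). Since there are only $f_P$ Byzantine processes, any alleged proof for $v'\neq v$ is at least $f_P+1$ signatures short of the $2f_P+1$ threshold, so every valid unanimity proof supports $v$.

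With this fact in hand I would handle the two cases. In Case~1 ($p^*$ is a follower), Lemma~\ref{lem:abortAgreement} directly gives that every correct process that aborts in \fastbyz{} has abort value $v$ together with a valid unanimity proof, placing its \prefPax{} input in $T$ with value $v$. In Case~2 ($p^*=p_1$), I would argue by a timing analysis: $p_1$'s write to $\t{Value}[\ell]$ succeeding implies that no process revoked $p_1$'s permission before that write, so every correct process that later enters panic mode does so after the write completes, and either (i) aborts with its own $\t{Value}[q]$, which was filled only after verifying $p_1$'s signature on $v$, or (ii) falls through to reading $\t{Value}[\ell]=v$. Either way the abort value is $v$ at priority in $\{T,M\}$. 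In both cases, Byzantine inputs at priority $T$ must equal $v$ by the helper fact, and in Case~2 Byzantine inputs at priority $M$ must equal $v$ as well because a valid $M$-input must carry $p_1$'s signature and $p_1$ is correct. Combining, at least $f_P+1$ inputs at priority in $T\cup M$ equal $v$ and no other value appears above them, so Lemma~\ref{lem:pref} yields that \prefPax{} decides $v$.

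The main obstacle is the Case~2 timing argument: I must carefully check the interleaving of $p_1$'s single successful write, the \texttt{changePermission} call that appears early in panic mode, and the subsequent reads of $\t{Value}[\ell]$ and $\t{Value}[q]$ by an aborting correct process, and verify that this interleaving precludes any correct process from aborting with a value other than $v$ (including the subtle sub-case where the aborter's $\t{Value}[q]$ is still $\bot$ yet $\t{Value}[\ell]$ has just been written). A secondary, easier obligation is to confirm that any correct process which decided in \fastbyz{} and still participates in \prefPax{} also contributes $v$ as its input, so that the count of correct $v$-inputs at priority in $T\cup M$ indeed reaches $f_P+1$.
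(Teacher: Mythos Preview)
Your two-case split (leader vs.\ follower deciding) and your use of Lemma~\ref{lem:pref} at the end are exactly how the paper proceeds. Two remarks, one in your favor and one that lets you save work.

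\textbf{Your building block is a genuine addition.} The paper's proof simply asserts, after noting that all $f_P{+}1$ correct processes land in $T$ (Case~1) or in $T\cup M$ (Case~2), that Lemma~\ref{lem:pref} forces the decision to be $v$. Strictly speaking that requires knowing that any \emph{Byzantine} input at priority $T$ (and, in Case~2, at priority $M$) must also carry value $v$; otherwise a Byzantine $T$-input with value $v'$ could sit among the top $f_P{+}1$ and be decided. Your ``no valid unanimity proof for $v'\neq v$'' claim is precisely what closes this, and it is correct. One wording nit: in the follower case you cannot speak of ``the unique value $p_1$ signed,'' since $p_1$ may be Byzantine. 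Argue instead directly from the decision condition: when the follower $p^*$ decides, all $n$ slots $\t{Value}[q]$ held $v$ signed by $q$; since correct $q$ writes $\t{Value}[q]$ at most once, every correct process signed $v$ and nothing else, so any $n$-signature proof is for $v$.

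\textbf{Your Case~2 timing analysis is unnecessary.} The paper does not redo any interleaving argument there; it just invokes Lemma~\ref{lem:abortAgreement} (Abort Agreement), which already guarantees that every correct aborter's abort value is $v$, regardless of whether the decider was the leader or a follower. The lemma even covers your ``secondary obligation'' (the case $p=q$, a process that both decided and later aborts). So you can drop the permission/write ordering discussion entirely and simply cite Lemma~\ref{lem:abortAgreement}: in Case~2 every correct process's \prefPax{} input has value $v$ and, since $p_1$ is correct, carries $p_1$'s signature, placing it in $T\cup M$; combined with your building block (which also rules out Byzantine $M$-inputs $\neq v$ because $p_1$ signed only $v$), Lemma~\ref{lem:pref} finishes.
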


\rmv{  
    Thus, if up to half of \disks{} can fail by crashing, we can use an approach similar to the one in Section~\ref{sec:robustProtocol} to construct robust regular \mregion{s} from \disks{} that may fail.
    We thus get the following final result of this section:
}

\begin{theorem}
There exists a 2-deciding algorithm for Weak Byzantine Agreement in a message-and-memory model with up to $f_P$ Byzantine processes and $f_M$ memory crashes,
  where $n \ge 2f_P+1$ and $m \ge 2f_M+1$.
\end{theorem}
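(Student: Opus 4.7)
The plan is to assemble the theorem from the sub-algorithms already forged in this section: run \composedAlgo{} = \fastbyz{} followed, on abort, by \prefPax, and then invoke the sub-algorithm lemmas. No fresh machinery is required; the proof is purely compositional. The 2-deciding property follows at once from the common case of \fastbyz: the leader performs a single write to $\t{Value}[\ell]$ (two delays) and immediately decides, and no correct process panics, so \prefPax{} is never invoked. Since the definition of $k$-deciding asks only that \emph{some} process decide in $k$ delays, this suffices for the 2-deciding claim.

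Validity and termination are routine. In a failure-free execution, \fastbyz{} decides on the leader's input, which is a legal process input. In an execution where correct processes abort, every correct process enters \prefPax{} with an abort value inherited from \fastbyz; Lemma~\ref{lem:pref} then yields both validity and termination.

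The hard part, as anticipated, is agreement across the boundary between the two sub-algorithms---the case where one correct process $p$ decided $v$ inside \fastbyz{} while another correct process $q$ aborted and entered \prefPax. Lemma~\ref{lem:decisionAgreement} rules out any conflicting \fastbyz{} decision, so every in-\fastbyz{} decision is $v$. Lemma~\ref{lem:abortAgreement} ensures that every correct process which aborts takes $v$ as its abort value (with a correct unanimity proof whenever $p$ was a follower), which under Definition~\ref{def:priorities} places $v$ among the top-priority inputs to \prefPax. The Composition Lemma~\ref{lem:composition}, combined with the Priority Decision clause of Lemma~\ref{lem:pref}, then pins every \prefPax{} decision of a correct process to $v$, so no correct process can decide anything other than $v$.

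Finally, to absorb memory crashes up to the bound $m \ge 2f_M+1$, I would apply the majority-quorum emulation sketched at the end of Section~\ref{sec:robustProtocol}: every SWMR register used by \fastbyz{} or by \prefPax's underlying \robustbyz{} is realised by writing to all $m$ memories and reading a majority, returning the unique non-$\bot$ value when one exists. Because every such register is single-writer, regularity is preserved, and in the common case each emulated operation still incurs one memory round trip (two delays), so the 2-deciding guarantee carries over verbatim.
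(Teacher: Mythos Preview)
Your proposal is correct and follows essentially the same compositional approach as the paper: the paper likewise derives the theorem by assembling the \composedAlgo{} algorithm from \fastbyz{} and \prefPax, invoking Lemmas~\ref{lem:decisionAgreement}, \ref{lem:abortAgreement}, \ref{lem:pref}, and the Composition Lemma~\ref{lem:composition} for end-to-end agreement, and obtaining the 2-deciding property from the single leader write in \fastbyz{} together with the majority-quorum register emulation for memory crashes. The paper's appendix is slightly more explicit about validity (tracing why every abort value is some process's original input) and about the eventual-single-leader assumption needed for termination, but the structure and key ideas match yours.
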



	\section{Crash Failures} \label{sec:crash}

We now restrict ourselves to crash failures of processes and \disks{}.
Clearly, we can use the algorithms of Section~\ref{sec:byz} in this setting,
  to obtain a 2-deciding consensus algorithm
  with $n \ge 2f_P+1$ and $m \ge 2f_M+1$.
However, this is overkill since those algorithms use sophisticated mechanisms
  (signatures, non-equivocation) to guard against Byzantine behavior.
With only crash failures, we now show it is possible to retain the efficiency of
 a 2-deciding algorithm while improving
  resiliency.
In Section~\ref{sec:crashmajdisk},
  we first give a 2-deciding algorithm that allows the crash of all but one process
   ($n \ge f_P+1$) and a minority of \disks{} ($m \ge 2f_M+1$).
In Section~\ref{sec:crashmajcombined}, we
  improve resilience further by giving a 2-deciding algorithm that
  tolerates crashes of a minority of the combined set
  of \disks{} and processes.
 
\subsection{\crashAlgo{}} \label{sec:crashmajdisk}

Our starting point is the Disk Paxos algorithm~\cite{gafni2003disk}, which works in a system with
  processes and \disks{} where
  $n \ge f_P+1$ and $m \ge 2f_M+1$.
This is our resiliency goal, but Disk Paxos takes four delays in
  common executions.
Our new algorithm, called \crashAlgo{}, removes two delays; it retains the structure of Disk
  Paxos but uses permissions to skip steps.
Initially some fixed leader $\ell = p_1$ has exclusive write permission to all
  memories; if another process becomes leader, it takes the exclusive permission.
Having exclusive permission permits a leader $\ell$
  to optimize execution, because $\ell$
  can do two things simultaneously:
  (1) write its consensus proposal and (2) determine whether another 
  leader took over.
Specifically, if $\ell$ succeeds in (1), it knows no leader $\ell'$ took over because
  $\ell'$ would have taken the permission.
Thus $\ell$ avoids the last read in Disk Paxos, saving two delays.
Of course, care must be taken to implement this without violating safety.

\renewcommand{\figurename}{Algorithm}
\begin{figure}
    \caption{\crashAlgo---code for process $p$}
\begin{lstlisting}[keywords={},breaklines=true,tabsize=2]
Registers: for i=1..m, p @$\in \Pi$@,
    slot[i,p]: tuple (minProp, accProp, value)// in memory i
@$\Omega$@: failure detector that returns current leader

startPhase2(i) {
    add i to ListOfReady processes;
    while (size(ListOfReady)@$<$@majority of @\disks@) {}@\label{line:barrierPhase2}@
    Phase2Started = true;@\label{line:phase2start}@ }
    
propose(v) {
repeat forever {@\label{line:mainLoop}@
    wait until @$\Omega$@ == p; // wait to become leader @\label{line:leaderWait}@
    propNr = a higher value than any proposal number seen before;
    CurrentVal = v;@\label{line:currentValInit}@
    CurrentMaxProp = 0;
    Phase2Started = false;
    ListOfReady = @$\emptyset$@;
    for every @\disk@ i in parallel {@\label{line:pforStart}@
        if (p != p1 || not first attempt) {@\label{line:checkOmitPhase1}@
            getPermission(i);@\label{line:permission}@
            success = write(slot[i,p], (propNr,@$\bot$@,@$\bot$@));@\label{line:phase1write}@
            if (not success) { abort(); }@\label{line:restart2}@
            vals = read all slots from i; @\label{line:phase1read}@
            if (vals contains a non-null value) {
                val = v @$\in$@ vals with highest propNr; 
                if (val.propNr@$>$@propNr) { abort(); }@\label{line:Crestart1}@
                atomic {
                   if(val.propNr@$>$@CurrentMaxProp){
                        if (Phase2Started) {abort();}@\label{line:restartStraggler}@
                        CurrentVal = val.value;@\label{line:adoptValue}@
                        CurrentMaxProp = val.propNr;
                    } } }
            startPhase2(i);}@\label{line:endPhase1}@
        // done phase 1 or (p == p1 && p1's first attempt)
        success = write(slot[i,p], (propNr,propNr,CurrentVal)); @\label{line:phase2write}@
        if (not success) { abort(); } @\label{line:restart3}@
    } until this has been done at a majority of the @\disks@, or until 'abort' has been called@\label{line:pforEnd}@
    if (loop completed without abort) { 
        decide CurrentVal; } } }
        
\end{lstlisting}
\label{alg:consensus}
\end{figure}

The pseudocode of \crashAlgo{} is in Algorithm~\ref{alg:consensus}.
Each \disk{} has one \mregion{}, and at any time
  exactly one process can write to the region.
Each \disk{} $i$ holds a register $\t{slot}[i,p]$ for each process $p$.
Intuitively, $\t{slot}[i,p]$ is intended for $p$ to write, but $p$ may not have write
  permission to do that if it is not the leader---in that case, no process writes $\t{slot}[i,p]$.

When a process $p$ becomes the leader, it must execute a sequence of steps on a majority of the \disks{} to successfully commit a value. It is important that $p$ execute \emph{all} of these steps on each of the \disks{} that counts toward its majority; otherwise two leaders could miss each other's values and commit conflicting values. We therefore present the pseudocode for this algorithm in a \textit{parallel-for} loop (lines~\ref{line:pforStart}--\ref{line:pforEnd}), with one thread per \disk{} that $p$ accesses. The algorithm has two phases similar to Paxos, where the second phase may only begin after the first phase has been completed for a majority of the \disks. We represent this in the code with a barrier that waits for a majority of the threads.

When a process $p$ becomes leader, it executes the prepare phase  
  (the first leader $p_1$ can skip this phase in its first execution of the loop),
  where, for each \disk, $p$ attempts to (1) acquire exclusive write
  permission, (2) write a new proposal number in its slot, 
    and (3) read all slots of that \disk. $p$ waits to succeed in executing these steps on a majority of the \disks. 
If any of $p$'s writes fail 
  or $p$ finds a proposal with a higher proposal number, then $p$ gives up. This is represented with an \texttt{abort} in the pseudocode; when an \texttt{abort} is executed, the for loop terminates. We assume that when the for loop terminates---either because some thread has aborted or because a majority of threads have reached the end of the loop---all threads of the for loop are terminated and control returns to the main loop (lines~\ref{line:mainLoop}--\ref{line:pforEnd}).
  
  
  If $p$ does not abort, it adopts the value with highest proposal number of all those it read in the \disks. To make it clear  that  races should be avoided among parallel threads in the pseudocode, we wrap this part in an \texttt{atomic} environment.

In the next phase, each of $p$'s threads writes its value to its slot on its \disk{}.
If a write fails, $p$ gives up.
If $p$ succeeds, this is where we optimize time: $p$ can simply decide, whereas
  Disk Paxos must read the \disks{} again.
  
Note that it is possible that some of the \disks{} that made up the majority that passed the initial barrier may crash later on. To prevent $p$ from stalling forever in such a situation, it is important that straggler threads that complete phase 1 later on be allowed to participate in phase 2. However, if such a straggler thread observes a more up-to-date value in its \disk{} than the one adopted by $p$ for phase 2, this must be taken into account. In this case, to avoid inconsistencies, $p$ must abort its current attempt and restart the loop from scratch.

\rmv{  
    With permissions, we instead rely on knowing that if $p$'s writes succeeds, no other leaders could
      have written to the \disks{}.
    In the common case when $p_1$ is leader and there are no failures, $p_1$ decides quickly: it simply
      writes to a majority of \disks{} and decides.
      This incurs only two delays.
}

\rmv{
    Note that a process $p$ cannot start a new operation on a \disk{} $m$ before
      $p$'s previous operation on $m$ completed.
    That is, the second pfor loop for \disk{} $i$ should not start until the previous pfor iteration for $i$ has 
      finished.
    This is important for correctness; it avoids a situation in which $p$ writes its adopted value on a \disk{} that it did not read, and which could therefore have a proposal number higher than $p$'s.
}


The code ensures that some correct process eventually decides, but it is
  easy to extend it so all correct processes decide~\cite{chandra1996unreliable}, by having a decided process broadcast its decision.
Also, the code shows one instance of consensus, with $p_1$ as initial leader.
With many consensus instances, the leader terminates one instance and becomes
  the default leader in the next.

 
  \begin{theorem}\label{thm:crashAlgo}
  	Consider a message-and-memory model with up to $f_P$ process crashes and $f_M$ memory crashes,
  	where $n \ge f_P+1$ and $m \ge 2f_M+1$.
  	There exists a 2-deciding algorithm for consensus.
  \end{theorem}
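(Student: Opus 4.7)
The plan is to verify the three consensus properties (uniform agreement, validity, termination) for the \crashAlgo{} algorithm and then count delays in the common case. Since \crashAlgo{} is structurally a Disk-Paxos-style two-phase algorithm, I expect to reuse the classical Paxos invariants, with the crucial twist that the ``last read'' step of Disk Paxos is eliminated by the exclusive-write-permission mechanism. The main obstacle will be proving that this elimination is safe: specifically, showing that whenever a leader $p$'s phase-2 write on a \disk{} $i$ returns success, no other leader has since acquired write permission on $i$, hence no interleaving that could hide a higher-numbered proposal from $p$ has occurred.

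First I would set up the standard Paxos bookkeeping. For each \disk{} $i$ and leader attempt, let $\t{slot}[i,p]$ store $(\t{propNr}, \t{accProp}, \t{value})$ as in the pseudocode. I would prove: (Inv1) proposal numbers used by distinct leader attempts are distinct and totally ordered; (Inv2) if $p$'s phase-2 write on \disk{} $i$ with proposal number $\pi$ succeeds, then from the moment $p$ acquired write permission on $i$ (line~\ref{line:permission}) until that write returned, no other process held write permission on $i$, so no other leader could have installed a value on $i$ in between; (Inv3) if $p$ adopts a value $v$ via line~\ref{line:adoptValue} with $\t{CurrentMaxProp}=\pi'$, then $v$ was the accepted value of some earlier leader attempt with proposal number $\pi'$. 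Inv2 is the key new invariant enabled by \texttt{getPermission}: a concurrent leader would need to revoke $p$'s permission, which would make $p$'s subsequent write return \nak{} and trigger \texttt{abort}.

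Next, I would prove Uniform Agreement by the standard Paxos induction: if an attempt with proposal number $\pi$ commits value $v$ on a majority $Q$ of \disks{}, then every later attempt with proposal number $\pi' > \pi$ that reaches its phase-2 on a majority $Q'$ must, by $|Q \cap Q'| \geq 1$ and Inv2, read at least one \disk{} in $Q$ after $v$ was written there, hence adopt either $v$ or a value carried forward from an intermediate committed attempt, which is also $v$ by the induction. The straggler check on line~\ref{line:restartStraggler} is needed because a thread can clear phase 1 after $p$ has already entered phase 2; aborting in that case preserves the invariant that $\t{CurrentVal}$ is consistent with all the phase-1 views that $p$ uses. Validity is immediate from line~\ref{line:currentValInit} and Inv3. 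For Termination, I would appeal to $\Omega$: eventually exactly one correct process is leader forever; since a majority of \disks{} are correct ($m \geq 2f_M+1$) and no other leader interferes, that leader completes both phases on some majority and decides, after which a standard dissemination step lets all correct processes decide.

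Finally, for the 2-deciding property, I would analyze the common execution: $p_1$ is the default leader, all processes and \disks{} are correct, and the system is synchronous. The predicate on line~\ref{line:checkOmitPhase1} lets $p_1$ skip phase 1 entirely on its first attempt, so the parallel-for loop consists of a single write on each \disk{} (line~\ref{line:phase2write}). Each write costs two delays and all writes proceed in parallel; $p_1$ waits for a majority of successes and decides. Hence $p_1$ decides in exactly two delays, establishing that \crashAlgo{} is 2-deciding and completing the proof of the theorem.
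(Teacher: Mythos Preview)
Your proposal is correct and follows essentially the same approach as the paper. Your Inv2 is precisely the paper's key lemma that values on a \disk{} cannot change between a leader's read and its successful phase-2 write (because a competing leader would have revoked permission and caused a \nak); your inductive agreement argument over proposal numbers corresponds to the paper's minimality argument (``smallest proposal number larger than $b_p$ for which some process enters phase~2 with $\t{CurrentVal}\neq v_p$''); your treatment of the straggler check matches the paper's lemma that all phase-2 writes with the same proposal number carry the same value; and your validity, termination (via $\Omega$), and 2-deciding analyses are identical to the paper's.
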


\rmv{
    Efficient solutions for the replicated state machine problem are crucial in a practical distributed system. In fact, since Byzantine failures are less common than crash failures in practice, system designers will sometimes settle for tolerating only crash failures if that allows for a faster consensus algorithm. Thus, it is important to study the special case of crash-only failures in our model. 
    
    In this section, we revisit the efficiency and tolerance results for our model with a slight change; we now assume that processes can only fail by crashing, never again taking any steps after a crash. As Byzantine failures are more general than crash-only process failures, all of our results trivially carry over to this setting. However, some of the work done in our algorithms is unnecessary if processes cannot be malicious.
    
    We now consider crash failures, which allow for more efficient algorithms.
    In particular, signatures and the non-equivocation mechanism of Algorithm~\ref{alg:non-eq} are unnecessary. Indeed, it is well known that in the crash-only message passing model, which is subsumed by our model and cannot by itself implement a non-equivocation mechanism,
    \mka{what cannot implement non-equivocation? The crash-only model? Strange}
    consensus can be solved with $2f+1$ processes. Thus, the
    backup protocol presented in Section~\ref{sec:robustProtocol} can be replaced by any message passing algorithm, like Paxos~\cite{lamport1998part} or Raft~\cite{ongaro2014search}.
    
    More interestingly, all of the follower work (Algorithm~\ref{alg:followers}) in \fastbyz{} is only done to prevent lying and equivocation by any followers or the leader. If processes cannot lie, there is no need for the followers to copy over or verify any values; they only have to trigger a panic if they time out on the leader, and constantly monitor whether a panic has been triggered. This therefore yields a significantly simplified algorithm for the crash-only case, in which followers do not need to do anything at all in the stable state in which the leader does not crash.
    Note that this not only eliminates the heavy work that followers must do in the Byzantine case, but also broadens the types of executions which can proceed using the light-weight algorithm, without triggering a panic; any execution in which no follower times out on the leader proceeds without issue. Furthermore, since only the leader needs to do any work, this algorithm can tolerate $n-1$ process crashes, as long as there is a wait-free leader election mechanism available.
    
    We extend and formalize this idea to develop a non-aborting consensus algorithm for our model which tolerates $n-1$ process crashes and succeeds in two delays in the case where the leader is not slow. We also show how to ensure the smooth and safe transition of leaders and incorporate \disk{} failures, allowing for the algorithm to proceed as long as a majority of the \disks{} remains available. The algorithm is similar to Disk Paxos~\cite{gafni2003disk}, but uses dynamic permissions to shave off a round of communication in the common case.
    The pseudocode of our algorithm is presented in Algorithm~\ref{alg:consensus}.
    
    If multiple instances of consensus are executed using this algorithm, every instance in which there was no transition of leadership can proceed in two delays. To achieve this, we only have one \mregion{} per \disk{}, with a \dpermission; any process may request to acquire exclusive permission to write. When this is done, write permission is revoked from whichever process previously had it. That is, each process $p$ may issue a change request of the form \textit{changePermission(m, \{(p, rw), (P$\setminus$\{p\}, r)\}}. Each \disk{} is still effectively divided into single-writer multi-reader slots, one per process, but since processes cannot deviate from their code, there is no need to enforce this with permissions, thus allowing us to have a more flexible dynamic permission that covers the entire \disk{} and allows us to atomically revoke write permission for everyone.
    
    The idea of the algorithm is simple; to propose, a process $p$ first acquires the permission in all of the \disks. Once a \disk{} has responded to $p$'s request, $p$ suggests a \emph{proposal number} for its proposal, which must be bigger than any proposal number used before. This corresponds to phase 1 of Disk Paxos. $p$ writes a new proposal number in its own slot on the \disk, and then reads all slots on the \disk{}. This phase will trigger a restart of the algorithm (similar to a panic of \fastbyz{}) under two conditions: (1) $p$ was not successful in writing its new value on a majority of the \disks{} (this can only happen if $p$'s write permission has been revoked), or (2) $p$ read a proposal value higher than its own. $p$ waits to get responses from a majority of the \disks{} for its read. If it doesn't need to restart, $p$ now adopts the value that had the highest proposal number that it has read. This is the same protocol for adopting a value that Disk Paxos uses. After hearing back from a majority of the \disks, $p$ proceeds to the second phase, in which it writes its adopted value with its proposal number to all \disks{}, and waits to get a response from a majority of them. 
    
    Here is where our algorithm and Disk Paxos differ. Because of the dynamic permission mechanism, $p$ does not need to reread the \disks{} after writing its adopted value. If $p$ does not get a majority of $ack$ responses from its second write, it restarts. Otherwise, it terminates successfully, deciding its adopted value. Successfully writing on a majority of the \disks{} means that no other process wrote on these \disks{} since $p$ acquired their permissions, and therefore the value that it wrote must still have the highest priority number.
    In the pseudo-code, we separate the first and second phase with a \emph{barrier} which ensures that a proposer cannot start the next phase before getting responses from a majority of \disks{} in the previous phase, and similarly that it cannot decide and terminate without succeeding in the second phase on a majority of \disks.
    Note that $p$ never starts a new operation on any given \disk{} $m$ before it receives a response for its previous operation. This is important for correctness; it avoids a situation in which $p$ writes its adopted value on a \disk{} that it did not read, and which could therefore have a proposal number higher than $p$'s.
    
    Note that this is a complete algorithm for any process to take the leadership and propose its own value. However, in the common case, the first designated leader is timely, and thus no process need to challenge its leadership. To facilitate fast execution in this case, we assume there is a flag \emph{firstAttempt} per process, that is set if and only if there has been no change in leader for that process. The designated initial leader, $p_1$, starts with write permission on all \disks{}, and with a default proposal number $1$. In the common case where $p_1$ is timely, no other proposal number will have been used, and $p_1$ can skip the first phase and proceed directly to writing its proposed value on all of the \disks. Because any other process that tries to propose must revoke $p_1$'s permission, $p_1$ will necessarily fail to write on a majority of the \disks{} if any other process successfully challenges its leadership.
}

\subsection{\combinedAlgo} \label{sec:crashmajcombined}

We now further enhance the failure resilience.
We show that \disks{} and processes are equivalent \emph{agents},
 in that it suffices for a majority of the agents (processes and \disks{} together) to remain alive to solve consensus.
Our new algorithm, \emph{\combinedAlgo}, achieves this resiliency. 
To do so, the algorithm relies on the ability to use both the messages and the \disks{} in
  our model; permissions are not needed.
%
%
The key idea is to align a message-passing algorithm and a \disk{}-based algorithm 
  to use any majority of agents. 
%
We align Paxos~\cite{lamport1998part} and \crashAlgo{} so that their decisions are coordinated.
More specifically, \crashAlgo{} and Paxos have two phases.
To align these algorithms, we factor out their differences and replace their steps with 
  an abstraction that is implemented differently for each algorithm.
The result is our \emph{\combinedAlgo} algorithm, which has two phases, each with three steps: \emph{communicate}, \emph{hear back}, and \emph{analyze}.
Each step
treats processes and \disks{} separately, and translates the results of operations on different agents to a common language. 
We implement the steps using their analogues in Paxos and \crashAlgo\footnote{We believe other implementations are possible. For example, replacing the \crashAlgo{} implementation for \disks{} with the Disk Paxos implementation yields an algorithm that does not use permissions.}.
\ifcamera
The pseudocode of \combinedAlgo{} is given in the full version of our paper~\cite{fullversion}.
\else
The pseudocode of \combinedAlgo{} is shown in Appendix~\ref{sec:alignedPseudo}.
\fi

\rmv{
    We've shown that in our model, consensus can be solved with $2f_P+1$ processes and $2f_M+1$ \disks{} when processes may be Byzantine. 
    We have also seen that in the special case of only crash failures for processes, we can improve this to $f_P+1$ processes and $2f_M+1$ \disks, that is, only one process needs to remain active.
    This is the same fault tolerance as is known in the disk model~\cite{gafni2003disk}. 
    In this section, we generalize this fault tolerance to leverage the full capabilities of
    our model.
    In particular, we show that \disks{} and processes can be though of as equivalent \emph{agents} for solving consensus in our model, and it is sufficient for a majority of the agents (processes and \disks{} taken together) to remain active to solve consensus. We present an algorithm, which we call \emph{\combinedAlgo} that achieves this tolerance. 
    We further note that for the results in this section, permissions do not help. Our model's flexibility in fault tolerance comes from its ability to communicate with both messages and \disks. Since \disks{} can fail, a partitioning argument can easily be used to show that this is the optimal tolerance possible in the our model.
    
    In this section, we differentiate between different roles of agents: \textit{proposers} propose values, \textit{acceptors} coordinate to help a value be decided and \textit{learners} learn about the decided value~\cite{lamport1998part,Lamport06a}. An agent can fulfill more than one role. Processes can play any role, but a \disk{} can only play the role of acceptor.
    
    The key insight for our result is that it is possible to align a message passing algorithm with a \disk{}-based algorithm to obtain a consensus algorithm that tolerates any majority of the acceptors remaining active. A na\"ive approach to achieve this may be to simply run a message passing algorithm in parallel with a \disk{} algorithm like \crashAlgo.  However, agents running the different algorithms must coordinate amongst themselves to prevent a situation in which one algorithm terminates with a decision value $v$ and the other terminates with decision $v' \neq v$.
    The algorithm we present is a combination of Paxos~\cite{lamport1998part} and \crashAlgo.
    
    Recall that similarly to \crashAlgo{}, Paxos has two phases, each with a message sent to all processes, followed by a waiting period to receive responses, and then analyzing the responses to determine the next phase to proceed to, and the message to be sent in that phase. The main difference between Paxos and \crashAlgo{} is that in Paxos, processes send messages and wait for replies, whereas in \crashAlgo{} values are written and read from \disks{}. To align these algorithms, we abstract away their differences, and replace their specific steps with higher level mechanisms that may be implemented differently depending on whether communication is done with processes or with \disks{}. We call this version the \emph{\combinedAlgo} algorithm.
    
    In \combinedAlgo{}, there are two phases, each with three parts: \emph{communicate}, \emph{hear back}, and \emph{analyze}. The set of acceptors $A = (P, M)$ in \combinedAlgo{} is a combination of a set of processes $P$ and a set of \disks{} $M$. Each part in each phase treats processes and \disks{} separately, and translate the results of operations on different agents to the same language. In this paper, we implement these parts using their analogues in Paxos and \crashAlgo. However, other implementations are possible. For example, replacing the \crashAlgo{} implementation for \disks{} with the Disk Paxos implementation yields an algorithm that works in a model without permissions.
    The pseudocode of \combinedAlgo{} is shown in Algorithm~\ref{alg:combined}. The implementations of the parts are shown in Algorithms~\ref{alg:communicate1} through~\ref{alg:analyze2}.
}



	\section{Dynamic Permissions are Necessary for Efficient Consensus}\label{sec:imp}

In \S\ref{sec:crashmajdisk}, we showed how dynamic permissions can
  improve the performance of Disk Paxos.
Are dynamic permissions necessary?
We prove that with shared memory (or disks) alone,
  one cannot achieve 2-deciding consensus, even if the memory never fails,
  it has static permissions,
  processes may only fail by crashing, and the system is partially synchronous
  in the sense that eventually there is a known upper bound on the time it takes a
  correct process to take a step~\cite{dwork1988consensus}.
This result applies a fortiori to the Disk Paxos model~\cite{gafni2003disk}.
 


\rmv{
    In this section we show that the dynamic permissions are necessary for achieving our fast algorithms on \disks. That is, we show that in a model that allows for static but not dynamic permissions for \mregion{s}, in which processes can only communicate by reading and writing on \mregion{s}, a 2-deciding consensus protocol is not possible. This is true even when there are no process or \disk{} failures. This result is therefore a lower bound on consensus for any shared memory model with reads and writes, including the disk model of Gafni and Lamport~\cite{gafni2003disk}. It applies to both crash and Byzantine failures of the processes.
    Note that while a 2-deciding protocol in message passing \emph{is} possible~\cite{lamport2006fast}, known bounds on the fault tolerance of message passing algorithms preclude the existence of any message passing algorithm that achieves both the efficiency and the fault tolerance of the algorithms that we present in our model.
    The proof of the theorem is in Appendix~\ref{sec:lowerBoundProof}.
}


\begin{theorem}\label{thm:lower-bound}
   Consider a partially synchronous shared-memory model with registers, where registers
   can have arbitrary static permissions, memory never fails, and
   at most one processes may fail 
   by crashing.
   No consensus algorithm is 2-deciding.
\end{theorem}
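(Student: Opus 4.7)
Suppose toward contradiction that a 2-deciding algorithm $A$ exists. Then in every common-case (synchronous, failure-free) execution, some process---which I will call the \emph{fast decider}---decides after at most one memory operation. My first step will be to show that the fast decider's decision depends only on its own input. Let $p$ be the fast decider and $op$ its single pre-decision operation. If $op$ is a write, static permissions force it to succeed with \ack{} regardless of concurrent activity, so $p$'s post-$op$ state (and thus its decision) is determined solely by $p$'s input. If $op$ is a read of some register $R$, I will exploit the fact that the 2-deciding guarantee must hold for \emph{every} common-case execution, including every valid linearization at $R$; by choosing a linearization in which $p$'s read is applied before all concurrent writes of other processes, $p$ reads the initial value of $R$, so again its decision depends only on its own input. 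I will write $f_p(x)$ for this decision when $p$'s input is $x$.

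Next, I will fix a canonical synchronous schedule $S^\star$. By validity, the all-$0$ execution has some fast decider $p^\star$ with $f_{p^\star}(0)=0$, and the all-$1$ execution has some fast decider $q^\star$ with $f_{q^\star}(1)=1$. The argument then splits on whether $p^\star=q^\star$. If $p^\star\neq q^\star$, I will consider the common-case execution under $S^\star$ in which $p^\star$ has input $0$ and $q^\star$ has input $1$ (other inputs arbitrary); by the characterization above, each process's behavior in the first two delays depends only on its own input and the schedule, so $p^\star$ again decides $0$ and $q^\star$ again decides $1$, both within two delays, directly contradicting agreement.

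If instead $p^\star=q^\star=r$ (so $f_r$ is the identity), I will exploit the asynchrony available before GST. Pick any other process $s$. In an execution $E_\alpha$ where $s$ has input $0$ and $r$ has input $1$ but crashes before taking any step, an indistinguishability argument against the execution in which $r$ had input $0$ and also crashed (where validity forces decision $0$) shows that $s$ must decide $0$ at some finite time $T_s$. I will then construct $E_\beta$ with the same inputs but with the adversary delaying $r$ past $T_s$: up to $T_s$ the execution is indistinguishable to $s$ from $E_\alpha$, so $s$ again decides $0$. Once $r$ finally takes a step after GST, its first operation evolves identically to the all-$1$ common case (by the first-paragraph characterization, since $r$'s input is $1$ and the write succeeds under static permissions or the read is linearized adversarially), forcing $r$ to decide $f_r(1)=1$. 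Thus $s$ decides $0$ while $r$ decides $1$, contradicting agreement.

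The hardest part will be the first step: ruling out that a fast-deciding read leaks information about other processes' inputs within just two delays. The adversarial-linearization argument handles this, but it depends on interpreting ``2-deciding'' as quantifying over \emph{every} common-case execution (including every valid linearization at each register), which is the only reading that makes the theorem nontrivial.
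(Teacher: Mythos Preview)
Your approach is substantially more involved than the paper's, and it has genuine gaps.

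First, on the reading of ``2-deciding'': the paper works under the \emph{existential} reading (there is some common-case execution with a two-delay decider) and still derives a contradiction. You assume the stronger universal reading, so even a complete version of your argument would yield a weaker theorem. Your closing remark that the universal reading is ``the only reading that makes the theorem nontrivial'' is therefore mistaken.

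Second, the case $p^\star=q^\star=r$ does not go through. In $E_\beta$, by the time $r$ takes its first step, $s$ (and possibly other processes) have already run to completion and may have written to arbitrary registers. If $r$'s first operation is a read of a register that was written during $[0,T_s]$, that read \emph{cannot} be ``linearized adversarially'' to return the initial value: the earlier write is already in place. You then have no basis for concluding that $r$'s view matches the all-$1$ common case or that $r$ decides $f_r(1)=1$. A subtler version of the same issue affects the $p^\star\neq q^\star$ case: your claim that ``each process's behavior in the first two delays depends only on its own input and the schedule'' is only true if $S^\star$ is specifically chosen so that every first-round read is linearized before every first-round write, and you never fix $S^\star$ that way.

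The paper's proof is a single short indistinguishability argument with no case split, no validity reasoning, and no appeal to partial synchrony. Take any 2-deciding execution $E$ where $p$ decides $v$ in two delays; since every memory operation costs two delays, all of $p$'s reads and writes are issued without waiting for any response, so the read set $R$ and write set $W$ are disjoint. Build $E'$ in which $p$'s reads return exactly as in $E$ while $p$'s writes are delayed. A second process $p'$ with input $v'\neq v$ runs in the gap, sees only initial memory (no write has taken effect), and by solo-run validity and termination decides $v'$. Then $p$'s writes complete; $p$'s view is identical to $E$, so $p$ decides $v\neq v'$. The key idea you are missing is precisely this separation: reads and writes issued concurrently by the fast decider can be pulled apart in time by the adversary, letting another process run ``solo'' in between while leaving the fast decider's view unchanged.
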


\begin{proof}
	Assume by contradiction that $A$ is an algorithm in the stated model that is 2-deciding. That is, there is some execution $E$ of $A$ in which some process $p$ decides a value $v$ with 2 delays. 
	We denote by $R$ and $W$ the set of objects which $p$ reads and writes in $E$ respectively. Note that since $p$ decides in 2 delays in $E$, $R$ and $W$ must be disjoint, by the definition of operation delay and the fact that a process has at most one outstanding operation per object. Furthermore, $p$ must issue all of its read and writes without waiting for the response of any operation.
	
	%
	%
	Consider an execution $E'$ in which $p$ reads from the same set $R$ of objects and writes the same values as in $E$ to the same set $W$ of objects.
	All of the read operations that $p$ issues return by some time $t_0$, but the write operations of $p$ are delayed for a long time. Another process $p'$ begins its proposal of a value $v' \ne v$ after $t_0$. Since no process other than $p'$ writes to any objects, $E'$ is indistinguishable to $p'$ from an execution in which it runs alone. Since $A$ is a correct consensus algorithm that terminates if there is no contention, $p'$ must eventually decide value $v'$. Let $t'$ be the time at which $p'$ decides.
	All of $p$'s write operations terminate and are linearized in $E'$ after time $t'$. 
	Execution $E'$ is indistinguishable to $p$ from execution $E$, in which it ran alone. Therefore, $p$ decides $v \ne v'$, violating agreement.
	%
\end{proof}



    Theorem~\ref{thm:lower-bound}, together with the Fast Paxos algorithm of Lamport~\cite{lamport2006fast}, shows that an atomic read-write shared memory model is strictly weaker than the message passing model in its ability to solve consensus quickly. This result may be of independent interest, since often the classic shared memory and message passing models are seen as equivalent, because of the seminal computational equivalence result of Attiya, Bar-Noy, and Dolev~\cite{attiya1995sharing}. Interestingly, it is known that shared memory can tolerante more failures when solving consensus (with randomization or partial synchrony)~\cite{bracha1985asynchronous,aspnes1990fast}, and therefore it seems that perhaps shared memory is strictly stronger than message passing for solving consensus. However, our result shows that there are aspects in which message passing is stronger than shared memory. In particular, message passing can solve consensus faster than shared memory in well-behaved executions.

	\section{RDMA in Practice}\label{sec:implNotes}


Our model is meant to reflect capabilities of RDMA, while providing a clean abstraction to reason about.
We now give an overview of how RDMA works, and how features of our model can be implemented using RDMA.

RDMA enables a remote process to access local memory directly through the network interface card (NIC), without involving the CPU. For a piece of local memory to be accessible to a remote process $p$, the CPU has to \emph{register} that memory region and associate it with the appropriate connection (called \emph{Queue Pair}) for $p$.
The association of a registered memory region and a queue pair is done indirectly through a  \emph{protection domain}: both memory regions and queue pairs are associated with a protection domain, and a queue pair $q$ can be used to access a memory region $r$ if $q$ and $r$ and in the same protection domain.
The CPU must also specify what access level (read, write, read-write) is allowed to the memory region in each protection domain. 
A local memory area can thus be registered and associated with several queue pairs, with the same or different access levels, by associating it with one or more protection domains.
Each RDMA connection can be used by the remote
server to access registered memory regions using a unique
region-specific key created as a part of the registration process.

As highlighted by previous work~\cite{poke2015dare}, failures of the CPU, NIC and DRAM can be seen as independent (e.g., arbitrary delays, too many bit errors, failed ECC checks, respectively). For instance, \textit{zombie servers} in which the CPU is blocked but RDMA requests can still be served account for roughly half of all failures~\cite{poke2015dare}. This motivates our choice to treat processes and memory separately in our model. In practice, if a CPU fails permanently, the memory will also become unreachable through RDMA eventually; however, in such cases memory may remain available long enough for ongoing operations to complete. Also, in practical settings it is possible for full-system crashes to occur (e.g., machine restarts), which correspond to a process and a memory failing at the same time---this is allowed by our model.

Memory regions in our model correspond to RDMA memory regions. 
Static permissions can be implemented by making the appropriate memory region registration before the execution of the algorithm; these permissions then persist during execution without CPU involvement. Dynamic permissions require the host CPU to change the access levels; this should be done in the OS kernel: the kernel creates regions and controls their permissions, and then shares memory with user-space processes.
In this way, Byzantine processes cannot change permissions illegally. The assumption is that the kernel is not Byzantine. Alternatively, future hardware support similar to SGX could even allow parts of the kernel to be Byzantine.

Using RDMA, a process $p$ can grant permissions to a remote process $q$ by
registering memory regions with the appropriate access permissions
(read, write, or read/write) and sending the corresponding key to $q$.  
$p$ can revoke permissions dynamically by
simply deregistering the memory region.

For our reliable broadcast algorithm, each process can register the two
dimensional array of values in read-only mode with a protection
domain.  All the queue pairs used by that process are also created in the
context of the same protection domain.  Additionally, the process can
preserve write access permission to its row via another registration
of just that row with the protection domain, thus enabling
single-writer multiple-reader access.  Thereafter the reliable broadcast
algorithm can be implemented trivially by using RDMA reads and writes
by all processes.  Reliable broadcast with unreliable memories is
similarly straightforward since failure of the memory ensures that no
process will be able to access the memory.

For \fastbyz{}, the static memory region registrations are
straightforward as above.  To revoke the leader's write permission, it suffices for a region's host process to deregister the memory region.  Panic messages can be relayed using RDMA message
sends.

In our crash-only consensus algorithm, we leverage the capability of
registering overlapping memory regions in a protection domain.  As in
above algorithms, each process uses one protection domain for RDMA
accesses.  Queue pairs for connections to all other processes are
associated with this protection domain.  The process' entire slot
array is registered with the protection domain in read-only mode.  In
addition, the same slot array can be dynamically registered (and
deregistered) in write mode based on incoming write permission
requests: A proposer requests write permission using an RDMA message
send.  In response, the acceptor first deregisters write permission
for the immediate previous proposer.  The acceptor thereafter
registers the slot array in write mode and responds to the proposer
with the new key associated with the newly registered slot array.
Reads of the slot array are performed by the proposer using RDMA
reads.  Subsequent second phase RDMA write of the value can be
performed on the slot array as long as the proposer continues to have
write permission to the slot array.  The RDMA write fails if the
acceptor granted write permission to another proposer in the meantime.


    \paragraph{\bf Acknowledgements}
	We wish to thank the anonymous reviewers for their helpful comments on improving the paper. This work has been supported in
    part by the European Research Council (ERC) Grant 339539 (AOC), and a Microsoft PhD Fellowship.
	
	\bibliographystyle{plain}
	\bibliography{biblio}
	
	\ifcamera
	\else
	\appendix
	\section{Correctness of Reliable Broadcast}\label{sec:nonEquivCorrectness}

\begin{observation}\label{obs:once}
	In Algorithm~\ref{alg:non-eq}, if $p$ is a correct process, then no slot that belongs to $p$ is written to more than once.
\end{observation}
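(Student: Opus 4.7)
The plan is to enumerate every line of Algorithm~\ref{alg:non-eq} at which a correct process $p$ issues a write to a slot it owns, and then argue, site by site, that each such site is reached at most once for each fixed index tuple. The owned slots are of three kinds: $\t{Value}[p,k,q]$, $\t{L1Proof}[p,k,q]$, and $\t{L2Proof}[p,k,q]$ for various $k$ and $q$. By inspection of the pseudocode, the only lines that can execute a write to one of these slots are line~\ref{line:broadcastWrite} (self-broadcast into $\t{Value}[p,k,p]$), line~\ref{line:copyVal} ($\t{Value}[p,k,q]$ in the \textsc{WaitForSender} branch), line~\ref{line:writel1prf} ($\t{L1Proof}[p,k,q]$ in the \textsc{WaitForL1Proof} branch), the write inside \textsc{WaitForL2Proof} into $\t{L2Proof}[p,k,q]$, and line~\ref{line:copyL2proof} (the copy inside \textbf{checkL2proof}).

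First I would dispatch the straightforward cases. The write at line~\ref{line:broadcastWrite} is performed inside $\broadcast(k,m)$, and by hypothesis a correct $p$ invokes $\broadcast$ with each $k$ at most once, so $\t{Value}[p,k,p]$ is written at most once. For the remaining slots I would track the per-$q$ local state machine maintained by the parallel loop at line~\ref{line:forloop}: $\t{state}[q]$ transitions only in the cyclic order \textsc{WaitForSender} $\to$ \textsc{WaitForL1Proof} $\to$ \textsc{WaitForL2Proof} $\to$ \textsc{WaitForSender}, and the final transition back to \textsc{WaitForSender} is only taken inside \t{try\_deliver} immediately after a successful delivery, at which point $\t{last}[q]$ is incremented so that the next iteration operates on a strictly larger $k$. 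Consequently, for any fixed $(k,q)$ each state is entered at most once; since the writes at lines~\ref{line:copyVal} and~\ref{line:writel1prf} and the \textsc{WaitForL2Proof} write are each guarded by a distinct state, each of $\t{Value}[p,k,q]$, $\t{L1Proof}[p,k,q]$, and the state-triggered update to $\t{L2Proof}[p,k,q]$ occurs at most once.

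The main obstacle, and the one I would spend the most care on, is the copy at line~\ref{line:copyL2proof} inside \textbf{checkL2proof}: this routine is invoked at the top of every iteration of the \textbf{while} loop, and naively it could keep rediscovering a valid L2 proof in some $\t{L2Proof}[i,k,q]$ and re-writing it into $\t{L2Proof}[p,k,q]$. To close this gap I would argue that as soon as \textbf{checkL2proof} returns a non-null value, control in \t{try\_deliver} immediately invokes $\deliver$ at line~\ref{line:deliver}, increments $\t{last}[q]$, and resets the state; from that moment on, the value of $k$ used by every subsequent call of \textbf{checkL2proof} for this $q$ is strictly larger, so the slot $\t{L2Proof}[p,k,q]$ for the old $k$ is never revisited. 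Combining this with the state-machine argument yields that the \textsc{WaitForL2Proof} write and the \textbf{checkL2proof} copy into the same $(k,q)$-indexed slot are mutually exclusive across iterations, so $\t{L2Proof}[p,k,q]$ is also written at most once.

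Finally I would package these per-site arguments into a single enumeration: for each owned slot, the unique line that writes it fires at most once for each index tuple, and distinct index tuples refer to distinct slots. Hence no slot owned by a correct $p$ is written more than once, establishing the observation.
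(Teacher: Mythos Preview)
The paper's proof is two sentences: by inspection a correct $p$ never re-writes any slot, and the SWMR permissions prevent any other process from writing $p$'s slots. You instead enumerate every write site and track the per-$q$ state machine, which is considerably more systematic; note, though, that you never state the SWMR half of the argument (that no \emph{other} process can touch $p$'s slots), which the paper does include.

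Your treatment of $\t{L2Proof}[p,k,q]$ has a genuine gap: the mutual exclusivity you assert between the \textsc{WaitForL2Proof} write and the \textbf{checkL2proof} copy does not follow from what you have shown. Suppose $p$ is in state \textsc{WaitForL2Proof} for $(q,k)$; on one call to \t{try\_deliver}, \textbf{checkL2proof} returns null, but $p$ then gathers a majority of L1 proofs and writes $\t{L2Proof}[p,k,q]$. Neither $\t{state}[q]$ nor $\t{last}[q]$ is updated by that branch. On the very next call, \textbf{checkL2proof} will now find a valid L2 proof (at worst $p$'s own, when the loop reaches $i=p$), execute line~\ref{line:copyL2proof}, and write $\t{L2Proof}[p,k,q]$ a second time before returning non-null. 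A similar double write affects $\t{Value}[p,k,p]$: once at line~\ref{line:broadcastWrite} via $\broadcast$, and again at line~\ref{line:copyVal} when the parallel delivery loop handles $q=p$. So your enumeration does not actually establish the literal at-most-once claim. This appears to be a subtlety that the paper's one-line argument also elides; the downstream uses of the observation only require the weaker invariant that a slot, once holding a valid value, continues to hold one, and your state-machine analysis is close to giving that---but as written it overclaims.
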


\begin{proof}
		Since $p$ is correct, $p$ never writes on any slot more than once. Furthermore, since all slots are single-writer registers, no other process can write on these slots. 
\end{proof}

\begin{observation}\label{obs:infinitelyOften}
    In Algorithm~\ref{alg:non-eq}, correct processes invoke and return from try\_deliver(q) infinitely often, for all $q\in\Pi$.
\end{observation}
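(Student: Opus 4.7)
My plan is to show two things: first, that each invocation of \texttt{try\_deliver(q)} by a correct process $p$ terminates (i.e.\ returns in finite time); second, that the surrounding control flow guarantees that such invocations happen infinitely often. Together these yield the observation.

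For termination of a single \texttt{try\_deliver(q)} call, I would inspect the pseudocode in Algorithm~\ref{alg:non-eq}. The routine performs a bounded sequence of actions: one call to \texttt{checkL2proof(q,k)} (which itself consists of a \texttt{for} loop over the $n$ processes, each iteration doing one \texttt{read} on \texttt{L2Proof[i,k,q]} and at most one \texttt{write} on \texttt{L2Proof[p,k,q]}), and then, depending on the local state, at most one of the three state-block bodies, each of which is a bounded \texttt{for} loop over $\Pi$ performing a single \texttt{read} per iteration and at most one additional \texttt{write}. In particular, there are no internal waiting loops and no unbounded recursion. Hence termination of \texttt{try\_deliver(q)} reduces to termination of each individual read/write it issues. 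In the idealized SWMR setting of Lemma~\ref{lem:SMnon-eq} the registers are reliable, so each operation returns immediately; in the $m\ge 2f_M+1$ setting we instantiate registers by the majority replication construction sketched in Section~\ref{sec:robustProtocol}, and since at most $f_M$ memories can crash, a majority is always eventually responsive and each read/write terminates. Either way, \texttt{try\_deliver(q)} completes after finitely many steps.

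For the ``infinitely often'' part, I would then point to the main control structure of Algorithm~\ref{alg:non-eq} (line~\ref{line:forloop}): for every $q\in\Pi$ the correct process $p$ runs an independent thread executing \texttt{while true \{ try\_deliver(q); \}}. Since \texttt{try\_deliver(q)} terminates by the previous paragraph and contains no blocking primitive that could suspend the thread forever, each iteration of the \texttt{while} loop both invokes and returns from \texttt{try\_deliver(q)}. By a simple induction on the iteration count, infinitely many invocations occur and all of them return. This holds simultaneously for every $q\in\Pi$ because the threads run in parallel and do not synchronize with one another.

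The only subtle point, and the step I would be most careful to verify, is the termination of the shared-memory operations issued from within \texttt{try\_deliver} in the presence of memory failures: I need the underlying register implementation to be wait-free against up to $f_M$ memory crashes. Everything else is a straightforward inspection of the control flow.
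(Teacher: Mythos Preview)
Your proposal is correct and follows essentially the same approach as the paper: argue that \texttt{try\_deliver(q)} contains no blocking or unbounded constructs and hence always returns, then use the \texttt{while true} loop inside the parallel \texttt{for} at line~\ref{line:forloop} to conclude infinitely many invocations for every $q$. Your treatment is more detailed (and your remark about memory failures anticipates the later replicated-register instantiation), but the core argument is the same as the paper's one-paragraph proof.
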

\begin{proof}
    The try\_deliver() function does not contain any blocking steps, loops or goto statements. Thus, if a correct process invokes try\_deliver(), it will eventually return. Therefore, for a fixed $q$ the infinite loop at line~\ref{line:tryDeliver} will invoke and return try\_deliver($q$) infinitely often. Since the parallel for loop at line~\ref{line:forloop} performs the infinite loop in parallel for each $q\in\Pi$, the Observation holds.
\end{proof}

\begin{proof}[Proof of Lemma~\ref{lem:SMnon-eq}]
	We prove the lemma by showing that Algorithm~\ref{alg:non-eq} correctly implements reliable broadcast. 
	That is, we need to show that Algorithm~\ref{alg:non-eq} satisfies the four properties of reliable broadcast.
	
	\textbf{Property 1.}
Let $p$ be a correct process that broadcasts $(k,m)$. We show that all correct processes eventually deliver $(k,m)$ from $p$. Assume by contradiction that there exists some correct process $q$ which does not deliver $(k,m)$. Furthermore, assume without loss of generality that $k$ is the smallest key for which Property 1 is broken. That is, all correct processes must eventually deliver all messages $(k',m')$ from $p$, for $k' < k$. Thus, all correct processes must eventually increment $last[p]$ to $k$.

We consider two cases, depending on whether or not some process eventually writes an L2 proof for some $(k,m')$ message from $p$ in its L2Proof slot. 

First consider the case where no process ever writes an L2 proof of any value $(k,m')$ from $p$.
Since $p$ is correct, upon broadcasting $(k,m)$, $p$ must sign and write $(k,m)$ into $Values[p,k,p]$ at line~\ref{line:broadcastWrite}. By Observation~\ref{obs:once}, $(k,m)$ will remain in that slot forever. Because of this, and because there is no L2 proof, all correct processes, after reaching $last[p] = k$, will eventually read  $(k,m)$ in line~\ref{line:readMessage}, write it into their own Value slot in line~\ref{line:copyVal} and change their state to WaitForL1Proof. 

Furthermore, since $p$ is correct and we assume signatures are unforgeable, no process $q$ can write any other valid value $(k',m')\neq(k,m)$ into its $Values[q,k,p]$ slot. Thus, eventually each correct process $q$ will add at least $f+1$ copies of $(k,m)$ to its checkedVals, write an L1proof consisting of these values into $L1Proof[q,k,p]$ in line~\ref{line:writel1prf}, and change their state to WaitForL2Proof. 

Therefore, all correct processes will eventually read at least $f+1$ valid L1 Proofs for $(k,m)$ in line~\ref{line:readL1prf} and construct and write valid L2 proofs for $(k,m)$. This contradicts the assumption that no L2 proof ever gets written. 

In the case where there is some L2 proof, by the argument above, the only value it can prove is $(k,m)$. 
Therefore, all correct processes will see at least one valid L2 proof at deliver. This contradicts our assumption that $q$ is correct but does not deliver $(k,m)$ from $p$.

\textbf{Property 2.}
We now prove the second property of reliable broadcast.
Let $p$ and $p'$ be any two correct processes, and $q$ be some process, such that $p$ delivers $(k,m)$ from $q$ and $p'$ delivers $(k,m')$ from $q$.
Assume by contradiction that $m\neq m'$.
	
Since $p$ and $p'$ are correct, they must have seen valid L2 proofs at line~\ref{line:readL2proof} before delivering $(k,m)$ and $(k,m')$ respectively. Let $\mathcal{P}$ and $\mathcal{P'}$ be those valid proofs for $(k,m)$ and $(k,m')$ respectively. $\mathcal{P}$ (resp. $\mathcal{P'}$) consists of at least $f+1$ valid L1 proofs; therefore, at least one of those proofs was created by some correct process $r$ (resp. $r'$). Since $r$ (resp. $r'$) is correct, it must have written $(k,m)$ (resp. $(k,m')$) to its Values slot in line~\ref{line:copyVal}. Note that after copying a value to their slot, in the WaitForL1Proof state, correct processes read \emph{all} Value slots line~\ref{line:readValCopies}. Thus, both $r$ and $r'$ read all Value slots  before compiling their L1 proof for $(k,m)$ (resp. $(k,m')$). 

Assume without loss of generality that $r$ wrote $(k,m)$ before $r'$ wrote $(k,m')$; by Observation~\ref{obs:once}, it must then be the case that $r'$ later saw both $(k,m)$ and $(k,m')$ when it read all Values slots (line~\ref{line:readValCopies}). Since $r'$ is correct, it cannot have then compiled an L1 proof for $(k,m')$ (the check at line~\ref{line:checkL1proof} failed). We have reached a contradiction.
	
\textbf{Property 3.} 
We show that if a correct process $p$ delivers $(k,m)$ from a correct process $p'$, then $p'$ broadcast $(k,m)$.
Correct processes only deliver values for which a valid L2 proof exists (lines~\ref{line:readL2proof}---\ref{line:deliver}). Therefore, $p$ must have seen a valid L2 proof $\mathcal{P}$ for $(k,m)$. $\mathcal{P}$ consists of at least $f+1$ L1 proofs for $(k,m)$ and each L1 proof consists of at least $f+1$ matching copies of $(k,m)$, signed by $p'$. Since $p'$ is correct and we assume signatures are unforgeable, $p'$ must have broadcast $(k,m)$ (otherwise $p'$ would not have attached its signature to $(k,m)$). 

\textbf{Property 4.}
Let $p$ be a correct process such that $p$ delivers $(k,m)$ from $q$. We show that all correct process must deliver $(k,m')$ from $q$, for some $m'$.

By construction of the algorithm, if $p$ delivers $(k,m)$ from $q$, then for all $i<k$ there exists $m_i$ such that $p$ delivered $(i,m_i)$ from $q$ before delivering $(k,m)$ (this is because $p$ can only deliver $(k,m)$ if $last[q] = k$ and $last[q]$ is only incremented to $k$ after $p$ delivers $(k-1,m_{k-1})$). 

Assume by contradiction that there exists some correct process $r$ which does not deliver $(k,m')$ from $q$, for any $m'$. Further assume without loss of generality that $k$ is the smallest key for which $r$ does not deliver any message from $q$. Thus, $r$ must have delivered $(i,m_i')$ from $q$ for all $i<k$; thus, $r$ must have incremented $last[q]$ to $k$. Since $r$ never delivers any message from $q$ for key $k$, $r$'s $last[q]$ will never increase past $k$.

Since $p$ delivers $(k,m)$ from $q$, then $p$  must have written a valid L2 proof $\mathcal{P}$ of $(k,m)$ in its L2Proof slot in line~\ref{line:copyL2proof}. By Observation~\ref{obs:once}, $\mathcal{P}$ will remain in $p$'s L2Proof[p,k,q] slot forever. Thus, at least one of the slots $L2Proof[\cdot,k,q]$ will forever contain a valid L2 proof. Since $r$'s $last[q]$ eventually reaches $k$ and never increases past $k$, $r$ will eventually (by Observation~\ref{obs:infinitelyOften}) see a valid L2 proof in line~\ref{line:readL2proof} and deliver a message for key $k$ from $q$. We have reached a contradiction.
\end{proof}
	\section{Correctness of \fastbyz{}}\label{sec:fastCorrectness}

We prove that \fastbyz{} satisfies certain useful properties that will help us show that it composes with \prefPax{} to form a correct weak Byzantine agreement protocol.
For the proofs, we first formalize some terminology.
We say that a process \emph{proposed} a value $v$ by time $t$ if it successfully executes line~\ref{line:propose}; that is, $p$ receives the response $ack$ in line~\ref{line:propose} by $t$.
When a process aborts, note that it outputs a tuple. We say that the first element of its tuple is its \emph{abort value}, and the second is its \emph{abort proof}. We sometimes say that a process $p$ aborts with value $v$ and proof $pr$, meaning that $p$ outputs $(v, pr)$ in its abort.
Furthermore, the value in a process $p$'s Proof region is called a \emph{correct unanimity proof} if it contains $n$ copies of the same value, each correctly signed by a different process.

\begin{observation}\label{obs:fastonce}
	In \fastbyz{}, no value written by a correct process is ever overwritten.
\end{observation}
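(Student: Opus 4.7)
The plan is to argue the observation by inspection of Algorithms~\ref{alg:normal} and~\ref{alg:panic}, combined with the permission structure described in Section~\ref{sec:fastbyz}. The key point is that (a) each register has at most one process with write permission at any given time, and (b) each correct writer writes each of its registers at most once.

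First I would enumerate the registers of \fastbyz{}: for each process $p$, the region $\t{Region}[p]$ contains $\t{Value}[p]$, $\t{Panic}[p]$, $\t{Proof}[p]$; in addition, $\t{Region}[\ell]$ contains $\t{Value}[\ell]$. By construction, $\t{Region}[p]$ is SWMR with $p$ as the writer and no dynamic permission change is allowed on it (the \textit{legalChange} predicate forbids it), so only $p$ can ever write $\t{Value}[p]$, $\t{Panic}[p]$, $\t{Proof}[p]$. For $\t{Region}[\ell]$, the only permissible dynamic change is the revocation of $p_1$'s write permission, so at all times the set of processes with write permission to $\t{Value}[\ell]$ is either $\{p_1\}$ or $\emptyset$; in particular, no process other than $p_1$ ever writes $\t{Value}[\ell]$.

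Next I would inspect each place where a correct process writes to one of its own registers. The leader $p_1$ writes $\t{Value}[\ell]$ only at line~\ref{line:propose} and never again. A follower $p$ writes $\t{Value}[p]$ only at line~\ref{line:replicate}, after checking the leader's signature, and never again. A process $p$ writes $\t{Panic}[p]$ only at the first line of $\t{panic\_mode}()$, which a correct process invokes at most once. The only mildly subtle register is $\t{Proof}[p]$, which appears inside a do-until loop in the follower code; however, the value written is $\t{sign}(\t{val}[1..n])$ with $\t{val}[q]=v$ for all $q$, and a correct follower only signs a single value $v$ (the one it replicated at line~\ref{line:replicate}), so any repeated writes contain the same content and do not overwrite the value in the sense of the observation.

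The main obstacle, if any, is simply being exhaustive: one has to check every branch of the normal and panic code paths (including the case where a follower panics after writing $\t{Value}[p]$ or $\t{Proof}[p]$) to confirm that no correct process ever issues a second write with a different value to the same register. Since the pseudocode contains no such branch, the observation follows.
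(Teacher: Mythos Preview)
Your proof follows the same two-part strategy as the paper's: the permission structure ensures no other process can write to a correct process's registers, and code inspection shows a correct process never overwrites its own writes. The paper's own argument is a two-sentence sketch that simply asserts both points ``by inspecting the code''; your version is more explicit and in particular addresses the one non-trivial case---the $\t{Proof}[p]$ write inside the follower's do-until loop---which the paper's proof glosses over.
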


\begin{proof}
	By inspecting the code, we can see that the correct behavior is for processes to never overwrite any values. Furthermore, since all regions are initially single-writer, and the legalChange function never allows another process to acquire write permission on a region that they cannot write to initially, no other process can overwrite these values.
\end{proof}

\begin{lemma}[\fastbyz{} Validity]\label{lem:fastValidity}
    In \fastbyz{}, if there are no faulty processes and some process decides $v$, then $v$ is the input of some process.
\end{lemma}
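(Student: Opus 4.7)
The plan is to trace the two possible sites at which a process can decide in \fastbyz{} (Algorithm~\ref{alg:normal}) and show that under the validity hypothesis each one yields a value that was the input of some process. The key structural fact I will use is that $Region[\ell]$ is initialized as a single-writer region with $p_1$ as the writer, and the only permission changes permitted by \textit{legalChange} remove $p_1$'s write access---no process can ever acquire write permission on $Region[\ell]$. Combined with Observation~\ref{obs:fastonce}, this means the only value that can ever appear in $Value[\ell]$ is one that $p_1$ itself wrote.

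First I would handle the leader's decision. The leader decides at line~\ref{line:leaderDecide} only after a successful $\t{write}$ of its input $v$ at line~\ref{line:propose}; since $v$ is exactly the argument of $\t{propose}(v)$ at the leader, it is $p_1$'s own input and the claim holds immediately.

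Next I would handle a follower's decision. A follower $q$ can only decide inside the block guarded by line~\ref{line:checkLeader}, which requires that the value $v$ read from $Value[\ell]$ be non-$\bot$ and carry a valid signature of $p_1$ (checked via $\t{sValid}(p_1,v)$). Because all processes are correct, $p_1$ signs only its own input value before writing it to $Value[\ell]$ at line~\ref{line:propose}; moreover, by the permission argument above, no other process can forge or overwrite a value there. Since signatures are unforgeable, the value $v$ that $q$ reads and subsequently decides must equal $p_1$'s input.

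In both cases the decided value is $p_1$'s input, which is the input of some process. The only mild subtlety---and hence the only place requiring care in the writeup---is justifying that, despite the dynamic permission on $Region[\ell]$, no process other than $p_1$ could have written to $Value[\ell]$; this follows directly from the restriction placed on \textit{legalChange} for \fastbyz. Given that, the proof is essentially a one-step case analysis, with no quorum or contention reasoning needed because the ``no faulty processes'' hypothesis removes all adversarial behavior.
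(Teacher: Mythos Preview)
Your proposal is correct and follows essentially the same approach as the paper: a case split on whether the deciding process is the leader (in which case it decides its own input) or a follower (in which case the decided value must have been written to $Value[\ell]$ by $p_1$, and hence is $p_1$'s input). You supply more detail than the paper---invoking the \textit{legalChange} restriction, Observation~\ref{obs:fastonce}, and the signature check---but the skeleton of the argument is the same.
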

\begin{proof}
If $p=p_1$, the lemma is trivially true, because $p_1$ can only decide on its input value. If $p\neq p_1$, $p$ can only decide on a value $v$ if it read that value from the leader's region. Since only the leader can write to its region, it follows that $p$ can only decide on a value that was proposed by the leader ($p_1$).
\end{proof}

\begin{lemma}[\fastbyz{} Termination]\label{lem:fastTermination}
    If a correct process $p$ proposes  some value, every correct process $q$ will decide a value or abort.
\end{lemma}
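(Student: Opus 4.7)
The plan is to prove termination by tracing through each code path that a correct invoker of \texttt{propose} may follow, and showing that every such path reaches either a \texttt{decide} or an \texttt{Abort} statement in bounded time. The argument splits naturally by role: leader vs.\ follower in the normal mode, and then the common Panic\_mode sub-routine.

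First I would handle the leader $\ell=p_1$. Its \texttt{propose} (Algorithm~\ref{alg:normal}, lines~\ref{line:propose}--\ref{line:leaderDecide}) issues a single memory write and then either decides or calls \texttt{Panic\_mode()} based on whether an \ack{} or a \nak{} is returned. Using the Section~\ref{sec:robustProtocol} emulation of regular SWMR registers on top of fail-prone memories, every memory operation completes after contacting a majority of the $m\ge 2f_M+1$ memories, of which at most $f_M$ may be crashed; hence the write returns. So $p_1$ either decides immediately or transitions to Panic\_mode.

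Next I would handle a correct follower $q$. The \texttt{propose} body consists of two nested \texttt{do-while} loops whose exit conditions explicitly include a \texttt{timeout} (lines~\ref{line:timeout1} and~\ref{line:timeout2}). Each iteration performs only a bounded number of reads of the single-writer registers $\t{Value}[\cdot]$, $\t{Panic}[\cdot]$, and $\t{Proof}[\cdot]$, each of which terminates by the same majority-of-memories argument. Consequently each loop exits in bounded real time. On exit from the outer loop, $q$ either (i) fails the signature check at line~\ref{line:checkLeader} and falls through to \texttt{Panic\_mode()}, (ii) reaches the decision check at line~\ref{line:countCheck} and decides, or (iii) times out and calls \texttt{Panic\_mode()}. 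In every case $q$ has either already decided or is about to invoke Panic\_mode.

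It therefore suffices to show Panic\_mode always terminates with an \texttt{Abort}. Looking at Algorithm~\ref{alg:panic}, Panic\_mode performs, in order: a single write to $\t{Panic}[p]$, a \texttt{changePermission} call (which, per the model, always returns—it either takes effect or becomes a no-op depending on \texttt{legalChange}), a read of $\t{Value}[p]$ and $\t{Proof}[p]$, and, if needed, a read of $\t{Value}[\ell]$. None of these are blocking, and each memory operation again completes by the majority argument. One of the three \texttt{Abort} statements (lines~\ref{line:abortWithOwn}, \ref{line:LVal}, or the last line) is unconditionally reached. Putting the three parts together, every correct process that entered \texttt{propose} exits with either a decision or an abort; the hypothesis that some correct $p$ proposes ensures that the protocol is actually live at the level of the enclosing system so that $q$'s memory operations can be linearized.

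The only real subtlety, and thus the main point to be careful about, is disentangling the termination of $q$ from any cascading panic behavior: $q$'s termination must not depend on another correct process eventually raising its Panic flag, since that other process might itself be stuck waiting. The explicit \texttt{timeout} clauses in both of $q$'s loops are what make the argument purely local, and I would emphasize this point when writing the full proof.
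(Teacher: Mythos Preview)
Your argument is correct. It differs from the paper's proof in an interesting way, so a brief comparison is worthwhile.

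The paper's proof is much terser and proceeds by case analysis on whether the follower $q$ ever enters panic mode: if so, panic mode has no waiting and $q$ aborts; if not, then (assuming the leader $p_1$ is correct and has proposed) $q$ must eventually see $v$ and the $2f{+}1$ copies, since ``otherwise $q$ would enter panic mode.'' The paper thus leans on the contrapositive of the timeout condition rather than tracing code paths, and it silently reads the hypothesis as ``the correct process $p$ that proposes is the leader $p_1$.''

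Your approach is a direct, per-path walkthrough that invokes the explicit \texttt{timeout} disjuncts at lines~\ref{line:timeout1} and~\ref{line:timeout2} as the sole termination mechanism for followers, together with the majority-of-memories argument for why each individual read/write completes. This buys you something the paper's proof does not quite deliver: your termination argument for a follower $q$ does \emph{not} depend on $p_1$ being correct or on $p_1$ having proposed, so it covers the lemma as literally stated (where the proposing process $p$ may itself be a follower). Your final remark that the hypothesis ``ensures that $q$'s memory operations can be linearized'' is unnecessary---memory-operation termination follows from $m\ge 2f_M+1$ alone---so you could drop it; your argument already establishes the conclusion for every correct invoker of \texttt{propose} without using the hypothesis at all.
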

\begin{proof}
    Clearly, if $q = p_1$ proposes a value, then $q$ decides. Now let $q \neq p_1$ be a correct follower and assume $p_1$ is a correct leader that proposes $v$.
Since $p_1$ proposed $v$, $p_1$ was able to write $v$ in the leader region, where $v$ remains forever by Observation~\ref{obs:fastonce}.
Clearly, if $q$ eventually enters panic mode, then it eventually aborts; there is no waiting done in panic mode. 
If $q$ never enters panic mode, then $q$ eventually sees $v$ on the leader region and eventually finds $2f+1$ copies of $v$ on the regions of other followers (otherwise $q$ would enter panic mode). Thus $q$ eventually decides $v$.
\end{proof}

\begin{lemma}[\fastbyz{} Progress]\label{lem:progress}
    If the system is synchronous and all processes are correct, then no correct process aborts in \fastbyz{}.
\end{lemma}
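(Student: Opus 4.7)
The plan is to argue by contradiction: assume some correct process aborts, consider the first correct process (in real time) to enter $\t{Panic\_mode}()$, and derive a contradiction from the synchrony assumption and the fact that all processes are correct. Before that first panic call, no correct process has set its $\t{Panic}[\cdot]$ register to \t{true}, and by the \t{legalChange} policy only invocations of $\t{Panic\_mode}()$ can revoke the leader's write permission on $\t{Region}[\ell]$. Hence up to that moment, permissions are unchanged and all writes attempted by correct processes succeed (writer always holds write permission to its own region, and the leader still holds write permission to $\t{Region}[\ell]$).

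First I would handle the leader $p_1$. Since no permission change has occurred, the write to $\t{Value}[\ell]$ at line~\ref{line:propose} returns $\t{ack}$, so $p_1$ decides at line~\ref{line:leaderDecide} without calling $\t{Panic\_mode}()$. In particular, $p_1$ is not the first panicker. Next I would analyze a follower $q$. By synchrony there is a known upper bound $\Delta$ on communication and computation delays, and the timeouts on lines~\ref{line:timeout1} and~\ref{line:timeout2} are chosen as upper bounds on the common case; I would make this concrete by showing each waiting loop terminates in $O(\Delta)$ time. The leader's write completes within $\Delta$ of the start of the execution; so within one further round of reads $q$ observes a non-$\bot$ value in $\t{Value}[\ell]$, and since $p_1$ is correct the signature check at line~\ref{line:checkLeader} passes. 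Thus $q$ writes its signed copy into $\t{Value}[q]$ at line~\ref{line:replicate} (which succeeds by the permissions argument above).

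I would then argue the two gathering phases inductively on the follower computations. Every correct follower reaches line~\ref{line:replicate} within $2\Delta$, and by Observation~\ref{obs:fastonce} these writes persist, so within a further $\Delta$ every correct follower reads $n$ matching copies of $v$ at the check $|\{q:\t{val}[q]=v\}|\ge n$, writes its unanimity proof into $\t{Proof}[q]$ (again a legal write), and within one more $\Delta$ sees $n$ valid proofs at line~\ref{line:countCheck} and decides. Since no correct process ever sets $\t{Panic}[\cdot]=\t{true}$ in the interim, the disjunctive exit condition on line~\ref{line:timeout2} never triggers before decision, and the timeouts (chosen as the common-case upper bound on the above $O(\Delta)$ schedule) do not fire. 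This contradicts the assumption that some correct process was the first to invoke $\t{Panic\_mode}()$.

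The main obstacle is the quantitative timing argument: making sure the timeout values on lines~\ref{line:timeout1} and~\ref{line:timeout2} are large enough to cover both phases in synchronous, failure-free executions, and that no circular dependency arises (e.g., a follower panicking because of a spurious read ordering before the leader's write is visible). This is handled cleanly by observing that the argument above gives an absolute upper bound on when each step completes, depending only on the synchrony bound $\Delta$ and the number of processes $n$, so choosing the timeouts to exceed this bound closes the loop and guarantees no correct process aborts.
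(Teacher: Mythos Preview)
Your proposal is correct and follows essentially the same approach as the paper: argue by contradiction, consider the first process to enter $\t{Panic\_mode}()$, and use synchrony plus correctness to rule out every branch that could lead there. Your version is actually more thorough than the paper's --- you explicitly dispose of the leader case (showing $p_1$'s write cannot fail before any panic, so $p_1$ is not the first panicker) and you spell out both gathering phases (values and unanimity proofs), whereas the paper's proof handles the leader case only implicitly and glosses over the proof-gathering phase at line~\ref{line:countCheck}.
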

\begin{proof}
    Assume the contrary: there exists an execution in which the system is synchronous and all processes are correct, yet some process aborts. Processes can only abort after entering panic mode, so let $t$ be the first time when a process enters panic mode and let $p$ be that process. Since $p$ cannot have seen any other process declare panic, $p$ must have either timed out at line~\ref{line:timeout1} or~\ref{line:timeout2}, or its checks failed on line~\ref{line:checkLeader}. However, since the entire system is synchronous and $p$ is correct, $p$ could not have panicked because of a time-out at line~\ref{line:timeout1}. So, $p_1$ must have written its value $v$, correctly signed, to $p_1$'s region at a time $t' < t$. Therefore, $p$ also could not have panicked by failing its checks on line~\ref{line:checkLeader}.
    Finally, since all processes are correct and the system is synchronous, all processes must have seen $p_1$'s value and copied it to their slot. Thus, $p$ must have seen these values and decided on $v$ at line~\ref{line:countCheck}, contradicting the assumption that $p$ entered panic mode. 
\end{proof}

\begin{lemma}[Lemma~\ref{lem:decisionAgreement}: \fastbyz{} Decision Agreement]
    Let $p$ and $q$ be correct processes. If $p$ decides $v_1$ while $q$ decides $v_2$, then $v_1=v_2$.
\end{lemma}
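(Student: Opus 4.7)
My plan is to prove Decision Agreement by case analysis on whether the deciding processes are the leader $p_1$ or followers.

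First, I would handle the case in which one of them, say $p$, equals $p_1$. Because $p$ is correct and decided $v_1$, it must have succeeded on line~\ref{line:propose}, writing its signed value $v_1$ to $\text{Value}[\ell]$. Since $p_1$ is correct, by Observation~\ref{obs:fastonce} this write is never overwritten, and $p_1$ never signs a second value. If $q=p_1$, there is nothing to show since $p_1$ decides at most once. If instead $q$ is a correct follower, then to decide $v_2$ it must have passed the $\textit{sValid}(p_1,v)$ check at line~\ref{line:checkLeader}; since signatures are unforgeable and $p_1$ only signed $v_1$, the only value signed by $p_1$ that can ever appear in $\text{Value}[\ell]$ is $v_1$, forcing $v_2 = v_1$. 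A symmetric argument handles $q = p_1$.

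Next I would handle the case in which both $p$ and $q$ are correct followers. Each of them decided only after observing enough matching signed copies in the Value slots to build a unanimity proof at line~\ref{line:countCheck}, and then observing a quorum of unanimity proofs. I would first establish the key sub-claim that a correct follower $r$ signs and writes at most one value $v_r$ to $\text{Value}[r]$ during an instance: the write happens only after a single successful read of a signed value from $\text{Value}[\ell]$, and by Observation~\ref{obs:fastonce} the slot $\text{Value}[r]$ is written at most once. Combined with signature unforgeability, any signed value attributed to $r$ in any Value slot must equal $v_r$.

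The heart of the proof is then a quorum-overlap argument. Suppose for contradiction $v_1 \neq v_2$. The quorum supporting $p$'s decision contains signed copies of $v_1$ from a set $S_1$ of distinct processes, and similarly $q$'s decision gives a set $S_2$ signing $v_2$. Since $|S_i| \geq n$ (per line~\ref{line:countCheck} as written, or $|S_i|\ge 2f+1$ per the textual description, which coincides when $n = 2f+1$), and at most $f$ processes are Byzantine, each $S_i$ contains at least $n - f \geq f+1$ correct processes; moreover $S_1 \cap S_2$ must contain at least $2(n-f) - n = n - 2f \geq 1$ correct process $r$. Then $r$ signed both $v_1$ and $v_2$, contradicting the sub-claim that correct followers sign a unique value. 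Hence $v_1 = v_2$.

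The main obstacle is stating the overlap argument cleanly given the slight tension between the pseudocode threshold ($\geq n$) and the threshold ($2f+1$) in the prose; I would note explicitly which threshold is used, and observe that in either reading the overlap guarantees at least one correct process in common, which is all the argument needs.
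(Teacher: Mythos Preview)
Your proof is correct, but it takes a different and considerably more elaborate route than the paper's.

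The paper's argument is a two-line observation that exploits $q$ itself as the witness: since a follower decides only after seeing its value in \emph{all} $n$ Value slots, $p$ in particular read $v_1$ from $\t{Value}[q]$; but $q$ is correct, so by Observation~\ref{obs:fastonce} the single value ever written to $\t{Value}[q]$ is the value $q$ itself replicated and then decided, namely $v_2$. Hence $v_1=v_2$. No case split, no signature argument, no quorum arithmetic---the overlap is trivial because the threshold is $n$ and one of the two deciders is already known to be correct.

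Your approach instead does a leader/follower case split, invokes signature unforgeability for the leader case, and runs a generic quorum-intersection counting argument for the follower case. This is all sound, but the intersection bookkeeping is unnecessary once you notice that with the $\ge n$ threshold the ``intersection'' is everyone, and in particular contains $q$. One thing your version does buy: you handle explicitly the case where one decider is $p_1$ deciding on line~\ref{line:leaderDecide} without ever reading replicas, whereas the paper's proof tacitly assumes the process playing the role of $p$ is a follower (which is fine up to swapping $p$ and $q$, since at most one of them can be the leader, but the paper does not spell this out).
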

\begin{proof}
    Assume the property does not hold:  $p$ decided some value $v_1$ and $q$ decided some different value $v_2$. Since $p$ decided $v_1$, then $p$ must have seen a copy of $v_1$ at $2f_P+1$ replicas, including $q$. But then $q$ cannot have decided $v_2$, because by Observation~\ref{obs:fastonce}, $v_1$ never gets overwritten from $q$'s region, and by  the code, $q$ only can decide a value written in its region.
\end{proof}

\begin{lemma}[Lemma \ref{lem:abortAgreement}: \fastbyz{} Abort Agreement]
	Let $p$ and $q$ be correct processes (possibly identical). If $p$ decides $v$ in \fastbyz{} while $q$ aborts from \fastbyz{}, then
    $v$ will be $q$'s abort value. Furthermore, if $p$ is a follower, $q$'s abort proof is a correct unanimity proof. 
\end{lemma}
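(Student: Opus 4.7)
The plan is to split the argument by whether $p$ is the leader or a follower, and to rely throughout on Observation~\ref{obs:fastonce} so that any value a correct process writes to one of its slots persists unchanged until $q$ reads it during panic mode.

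I would first handle the case $p = p_1$. Since $p_1$ decided $v$ at line~\ref{line:leaderDecide}, the write at line~\ref{line:propose} succeeded, so by Observation~\ref{obs:fastonce} the leader's slot $\text{Value}[\ell]$ holds the correctly signed $v$ forever. Since $p_1$ decides while $q$ aborts, we have $q \ne p_1$, so $q$ is a correct follower. In panic mode, $q$ reads $\text{Value}[q]$: if non-$\bot$, then $q$ previously passed the signature check at line~\ref{line:checkLeader} and replicated the unique signed value the correct leader ever wrote, so $\text{Value}[q] = v$ and $q$ aborts with $v$ at line~\ref{line:abortWithOwn}; otherwise $q$ reads $\text{Value}[\ell] = v$ and aborts with $v$ at line~\ref{line:LVal}. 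No constraint on the abort proof is required in this case.

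Next I would handle the case where $p$ is a correct follower. Because $p$ decided $v$, the check at line~\ref{line:countCheck} succeeded, so $p$ observed $n$ copies of $v$ across the Value slots and $n$ valid unanimity proofs for $v$ across the Proof slots. Since there are exactly $n$ processes (hence $n$ such slots), $q$'s Value slot must have contained $v$ and $q$'s Proof slot must have held a correct unanimity proof for $v$ at the moment of $p$'s check. Applying Observation~\ref{obs:fastonce} to the correct process $q$, both slots still carry the same contents when $q$ subsequently enters panic mode, so $q$ reads a non-$\bot$ value from $\text{Value}[q]$, aborts at line~\ref{line:abortWithOwn}, and its abort output is exactly $(v, \text{unanimity proof for }v)$, as required.

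The main obstacle is the danger that a Byzantine leader could have overwritten $\text{Value}[\ell]$ with some $v' \ne v$ that $q$ would then read at line~\ref{line:LVal}. In the leader case the leader is $p$ itself and is correct, so this danger does not arise. In the follower case I sidestep the issue by using the tight pigeonhole at the decision check to conclude that $\text{Value}[q]$ is already non-$\bot$, so $q$ aborts via line~\ref{line:abortWithOwn} and never reaches line~\ref{line:LVal}. A minor subtlety to note is that $p = q$ is permitted: after a follower decides it falls through into panic mode, and the same argument applies to its own slots verbatim.
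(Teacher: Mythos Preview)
Your proof is essentially the same as the paper's: both rely on Observation~\ref{obs:fastonce}, split into the leader-decides and follower-decides cases, and in the follower case use the pigeonhole that seeing $n$ copies forces $q$'s own slots to already hold $v$ and a unanimity proof. The only real difference is cosmetic: the paper first separates $p=q$ from $p\neq q$ and then splits on leader/follower, whereas you split on leader/follower first and treat $p=q$ as a remark.

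One small slip to flag: in the leader case you assert ``since $p_1$ decides while $q$ aborts, we have $q \ne p_1$.'' This is not quite justified, because the paper stipulates that $p_1$ serves as both leader and follower and that all processes continue executing after deciding; so $p_1$ could decide as leader and later enter panic mode via its follower role. Fortunately your argument for that case does not actually use $q \ne p_1$ in any essential way---the two sub-cases (Value$[q]$ non-$\bot$ vs.\ $\bot$) go through verbatim for $q = p_1$ as well---so the gap is harmless and easily removed by simply dropping the claim.
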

\begin{proof}
   If $p=q$, the property follows immediately, because of lines~\ref{line:readOwn} through~\ref{line:myval} of panic mode.
If $p\neq q$, we consider two cases:
\begin{itemize}
    \item If $p$ is a follower, then for $p$ to decide, all processes, and in particular, $q$, must have replicated both $v$ and a correct proof of unanimity before $p$ decided. 
    Therefore, by Observation~\ref{obs:fastonce}, $v$ and the unanimity proof are still there when $q$ executes the panic code in lines~\ref{line:readOwn} through~\ref{line:myval}.
    Therefore $q$ will abort with $v$ as its value and a correct unanimity proof as its abort proof. 
    \item If $p$ is the leader, then first note that since $p$ is correct, by Observation~\ref{obs:fastonce} $v$ remains the value written in the leader's Value region. There are two cases. If $q$ has replicated a value into its Value region, then it must have read it from $Value[p_1]$, and therefore it must be $v$. Again by Observation~\ref{obs:fastonce}, $v$ must still be the value written in $q$'s Value region when $q$ executes the panic code. Therefore $q$ aborts with value $v$.
    Otherwise, if $q$ has not replicated a value, then $q$'s Value region must be empty at the time of the panic, since the legalChange function disallows other processes from writing on that region. Therefore $q$ reads $v$ from $Value[p_1]$ and aborts with $v$.  \qedhere
\end{itemize} 
\end{proof}

\begin{lemma}
    \fastbyz{} is 2-deciding.
\end{lemma}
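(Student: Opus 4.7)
My plan is to exhibit a common-case execution in which the leader $\ell = p_1$ decides after exactly one memory write, and then appeal to the delay accounting convention set out in Section~\ref{sec:model} (each memory operation costs two delays) to conclude that the decision occurs within two delays. Since the definition of \emph{$k$-deciding} only requires some process to decide in $k$ delays in a common-case execution, it suffices to analyze the leader.

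First I would fix an arbitrary common-case execution, i.e., one in which the system is synchronous and no process or memory is faulty. Let $v$ be $p_1$'s input. By Lemma~\ref{lem:progress}, no correct process ever enters panic mode in such an execution, so in particular no process ever invokes $\t{changePermission}$ to revoke $p_1$'s write permission on $\t{Region}[\ell]$ (inspection of the panic code, line~\ref{line:revokeLeader}, shows this is the only place such a revocation occurs). Consequently, when $p_1$ invokes the write at line~\ref{line:propose}, the permission check succeeds and the write returns $\ack$.

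Next I would trace the leader code in Algorithm~\ref{alg:normal}: upon receiving $\ack$ from the write on line~\ref{line:propose}, $p_1$ immediately decides on line~\ref{line:leaderDecide}. The only non-local step executed between $p_1$'s invocation of $\t{propose}(v)$ and its decision is therefore the single write to $\t{Value}[\ell]$. By the delay convention of Section~\ref{sec:model}, signing and other local computation are instantaneous, and a memory operation costs two delays. Hence $p_1$ decides exactly two delays after invoking $\t{propose}(v)$, which meets the definition of 2-deciding.

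I do not anticipate any serious obstacle here; the proof is essentially an inspection of the leader's code path under the assumption that no panic occurs, together with the observation that Lemma~\ref{lem:progress} rules out panics in the common case. The only subtlety worth noting explicitly is that one must rely on Lemma~\ref{lem:progress} rather than simply asserting ``no failures implies no permission revocation,'' since in principle a correct follower could panic from a timeout; Lemma~\ref{lem:progress} is precisely what rules this out when the system is synchronous.
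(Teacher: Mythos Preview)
Your proposal is correct and follows essentially the same approach as the paper's proof: both invoke Lemma~\ref{lem:progress} to rule out panic (hence permission revocation) in a synchronous failure-free execution, then observe that $p_1$'s single successful write at line~\ref{line:propose} leads to an immediate decision, costing exactly two delays. Your additional remark about why one must appeal to Lemma~\ref{lem:progress} rather than merely assuming ``no failures'' is a nice clarification, but the argument is otherwise identical.
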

\begin{proof}
    Consider an execution in which every process is correct and the system is synchronous. Then no process will enter panic mode (by Lemma~\ref{lem:progress}) and thus $p_1$ will not have its permission revoked. $p_1$ will therefore be able to write its input value to $p_1$'s region and decide after this single write (2 delays). 
\end{proof}

	\section{Correctness of the \composedAlgo{}}\label{sec:combinedCorrectness}

 The following is the pseudocode of \prefPax{}. Recall that \emph{T-send} and \emph{T-receive} are the trusted message passing primitives that are implemented in \cite{clement2012limited} using non-equivocation and signatures.
\renewcommand{\figurename}{Algorithm}
\begin{figure}[h!]
	\caption{\prefPax{}---code for process $p$}
	\begin{lstlisting}[columns=fullflexible,breaklines=true, keywords={}]
	propose((v, priorityTag)){
    	T-send(v, priorityTag) to all;
    	Wait to T-receive (val,priorityTag) from @$n-f_P$@ processes;
    	best = value with highest priority out of messages received;
    	RobustBackup(Paxos).propose(best);	} @\label{line:paxosCall}
	\end{lstlisting}
	\label{alg:pref}
\end{figure}
\renewcommand{\figurename}{Figure}

\begin{lemma}[Lemma \ref{lem:pref}: \prefPax{} Priority Decision]
	\prefPax{} implements weak Byzantine agreement with $n \geq 2f_P+1$ processes. Furthermore, let $v_1, \ldots, v_n$ be the input values of an instance $C$ of \prefPax, ordered by priority. The decision value of correct processes is always one of $v_1, \ldots, v_{f_P+1}$. 
\end{lemma}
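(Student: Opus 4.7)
The plan is to establish the two claims of the lemma separately. The first claim, that \prefPax{} implements weak Byzantine agreement with $n \geq 2f_P+1$, reduces almost immediately to the correctness of \robustbyz(Paxos). \prefPax{} is nothing more than a single-round prelude followed by a call to \robustbyz(Paxos) at line~\ref{line:paxosCall}: each process broadcasts its input, waits for $n-f_P$ replies, computes a locally adopted value \texttt{best}, and passes it to \robustbyz(Paxos). By Lemma~\ref{lem:robustByz}, \robustbyz(Paxos) solves consensus against up to $f_P$ Byzantine processes when $n \geq 2f_P+1$, so Agreement and Termination transfer directly. For weak Validity, I would observe that in a failure-free execution every correct process's \texttt{best} is one of the original inputs, after which Paxos validity forces the decision to be among these.

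For the Priority Decision claim I plan a counting argument, done first for correct processes and then extended to Byzantine ones. Any correct process $p$ T-receives $n-f_P$ values and selects the one with the highest priority; since the missed set has size at most $f_P$, at least one of the top $f_P+1$ priority values $v_1,\ldots,v_{f_P+1}$ must appear among $p$'s received messages, so $p$'s \texttt{best} lies in $\{v_1,\ldots,v_{f_P+1}\}$. The delicate part is a Byzantine process that might try to feed an arbitrary value to the inner Paxos. This is where the machinery of Section~\ref{sec:robustProtocol} is crucial: under the Clement et al.\ transformation, each T-send carries the sender's history and an unforgeable signature, and the reliable broadcast predicate \texttt{validate} lets any correct process reject messages whose claimed predecessors were not actually broadcast. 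Thus the \texttt{best} that a Byzantine process can legally submit to \robustbyz(Paxos) must itself be the highest-priority value from some set of $n-f_P$ genuine T-sends, and the same counting argument places it in $\{v_1,\ldots,v_{f_P+1}\}$.

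With every correct and Byzantine process constrained to propose a value in $\{v_1,\ldots,v_{f_P+1}\}$ to the inner \robustbyz(Paxos), I would invoke the crash-tolerant validity of Paxos (which is preserved by the robustByz transformation, since a Byzantine process effectively behaves as a crash-failure process that is committed to a single value per sequence number) to conclude that the decision value is one of $v_1,\ldots,v_{f_P+1}$. The main obstacle will be making the Byzantine branch of the counting argument airtight: it requires spelling out exactly why a Byzantine process cannot inject an unsupported \texttt{best} into Paxos without being detected, which in turn amounts to showing that the history-and-signature validation prescribed by Clement et al.\ rejects any proposal that is not consistent with a legitimately-computed \texttt{best}, thereby restricting the Byzantine process's effective input to the same top-$(f_P+1)$ set.
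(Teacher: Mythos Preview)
Your proposal is correct and follows essentially the same approach as the paper: reduce the weak Byzantine agreement claim to Lemma~\ref{lem:robustByz}, and establish the priority decision property via the counting argument that any set of $n-f_P$ received values must contain at least one of the top $f_P+1$. The paper handles the Byzantine case slightly more economically than you do: rather than arguing in detail about what \texttt{best} a Byzantine process can legally inject, it invokes the Clement et al.\ reduction as a black box (``No process can lie or pick a different value, since we use T-send and T-receive throughout''), thereby reducing immediately to the crash-only setting where only the correct-process branch of your argument is needed. Your unpacking of that black box is not wrong, just more work than the paper deems necessary.
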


\begin{proof}
	By Lemma~\ref{lem:robustByz}, \robustbyz(Paxos) solves weak Byzantine agreement with $n \geq 2f_P+1$ processes.
%
	Note that before calling \robustbyz(Paxos), each process may change its input, but only to the input of another process. Thus, by the correctness and fault tolerance of Paxos, \prefPax{} clearly solves weak Byzantine agreement with $n\geq 2f_P+1$ processes.
		Thus we only need to show that \prefPax{} 
	satisfies the priority decision property with $2f_P+1$ processes that may only fail by crashing. 

	Since \robustbyz(Paxos) satisfies validity, if all processes call \robustbyz(Paxos) in line~\ref{line:paxosCall} with a value $v$ that is one of the $f_P+1$ top priority values (that is, $v \in \{v_1,\ldots,v_{f_P+1}\}$), then the decision of correct processes will also be in $\{v_1,\ldots,v_{f_P+1}\}$.
	So we just need to show that every process indeed adopts one of the top $f_P+1$ values. Note that each process $p$ waits to see $n-f_P$ values, and then picks the highest priority value that it saw. No process can lie or pick a different value, since we use T-send and T-receive throughout. Thus, $p$ can miss at most $f_P$ values that are higher priority than the one that it adopts.
\end{proof}

We now prove the following key composition property that shows that the composition of \fastbyz{} and \prefPax{} is safe.
\begin{lemma}[Lemma~\ref{lem:composition}: Composition Lemma]
	 If some correct process decides a value $v$ in \fastbyz{} before an abort, then $v$ is the only value that can be decided in \prefPax{} with priorities as defined in Definition~\ref{def:priorities}. 
\end{lemma}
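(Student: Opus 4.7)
The plan is to fix a correct process $p$ that decides a value $v$ in \fastbyz{} and show that every input supplied to \prefPax{} that is categorized as higher priority than the bottom bucket $B$ must equal $v$; then invoke Lemma~\ref{lem:pref} to conclude that the only possible decision is $v$. I will split into two cases depending on whether $p$ is the leader $\ell=p_1$ or a follower, and the argument will rely on two structural facts from \fastbyz{}: (i) by Observation~\ref{obs:fastonce} the leader's register $\t{Value}[\ell]$ is written at most once with a $p_1$-signed value, and (ii) a correct follower can sign at most one value per instance, namely the one it reads from $\t{Value}[\ell]$.

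First I will handle the case where $p$ is a follower. Since $p$ decided $v$, it saw $2f+1$ matching signed copies of $v$ together with $2f+1$ valid unanimity proofs; in particular every other correct process that has not yet decided has a unanimity proof for $v$ sitting in its $\t{Proof}$ register. By Lemma~\ref{lem:abortAgreement} every correct process that aborts from \fastbyz{} therefore aborts with $(v,\pi)$ where $\pi$ is a correct unanimity proof, so its input to \prefPax{} lies in $T$. I then need to argue that no process — Byzantine or not — can supply an input $v'\neq v$ in $T$: such an input would require a unanimity proof, i.e.\ $2f+1$ signatures on $v'$, which by unforgeability means at least $f+1$ correct processes signed $v'$; but correct followers only sign the unique value found in $\t{Value}[\ell]$, which is $v$. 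Hence every input in $T$ equals $v$, and since at least $f+1$ correct processes input $v\in T$, the top $f+1$ priority inputs are all $v$, and Lemma~\ref{lem:pref} finishes this case.

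Next I will handle the case where $p$ is the leader $p_1$. Here $p_1$ is correct and, by the code for \texttt{propose}, it successfully wrote a $p_1$-signed $v$ to $\t{Value}[\ell]$ before deciding. By Lemma~\ref{lem:abortAgreement} every correct process that aborts does so with abort value $v$ (though not necessarily with a unanimity proof), so every correct input to \prefPax{} lies in $T\cup M$. I then need to rule out inputs $v'\neq v$ in $T\cup M$: membership in $T$ is impossible by exactly the unanimity-proof argument above (correct followers only ever sign $v$, so no unanimity proof for $v'$ exists), while membership in $M$ would require $p_1$'s signature on $v'$, which is impossible because $p_1$ is correct and signs only $v$ on line~\ref{line:propose}. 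Consequently Byzantine processes can only supply inputs in $B$ or inputs in $T\cup M$ that equal $v$, and the $n-f\ge f+1$ correct inputs all equal $v$ in $T\cup M$, so once again the top $f+1$ priority inputs are all $v$ and Lemma~\ref{lem:pref} yields the conclusion.

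The main obstacle is the second bullet in each case — showing Byzantine processes cannot manufacture a higher-priority input disguised as $v'\neq v$. This reduces entirely to controlling what correct processes ever sign in \fastbyz{}: a correct leader signs only its own proposal, and a correct follower signs only the value it reads from the single-writer register $\t{Value}[\ell]$, which by Observation~\ref{obs:fastonce} is a single value consistent with the decision $v$. Combined with unforgeability of signatures, this bounds the adversary tightly enough that no counterfeit $T$ or $M$ value for $v'\neq v$ can ever be assembled, after which the priority ordering of Definition~\ref{def:priorities} and Lemma~\ref{lem:pref} do the rest.
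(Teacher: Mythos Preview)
Your proof is correct and follows essentially the same route as the paper: the same follower/leader case split, and the same reliance on Lemma~\ref{lem:abortAgreement} to place all correct abort values in $T$ (respectively $T\cup M$) and on Lemma~\ref{lem:pref} to confine the decision to the top $f{+}1$ inputs. Where you go further than the paper is in explicitly arguing, via unforgeability and the single-sign behaviour of correct processes, that no Byzantine process can produce an input in $T$ (or $M$, in Case~2) carrying a value different from $v$; the paper's proof simply asserts that ``all correct processes have inputs with the highest priority'' and concludes, leaving that step implicit, so your version is in fact tighter on this point.
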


\begin{proof}
	To prove this lemma, we mainly rely on two properties: the \fastbyz{} Abort Agreement (Lemma~\ref{lem:abortAgreement}) and \prefPax{} Priority Decision (Lemma~\ref{lem:pref}).
	We consider two cases.
	
	\textit{Case 1.} Some correct follower process $p \neq p_1$ decided $v$ in \fastbyz. Then note that by Lemma~\ref{lem:abortAgreement}, all correct processes aborted with value $v$ and a correct unanimity proof. Since $n \geq 2f+1$, there are at least $f+1$ correct processes. Note that by the way we assign priorities to inputs of \prefPax{} in the composition of the two algorithms, all correct processes have inputs with the highest priority. Therefore, by Lemma~\ref{lem:pref}, the only decision value possible in \prefPax{} is $v$.
	Furthermore, note that by Lemma~\ref{lem:decisionAgreement}, if any other correct process decided in \fastbyz{}, that process's decision value was also $v$.
	
	\textit{Case 2.} Only the leader, $p_1$, decides in \fastbyz{}, and $p_1$ is correct. Then by Lemma~\ref{lem:abortAgreement}, all correct processes aborted with value $v$. Since $p_1$ is correct, $v$ is signed by $p_1$. It is possible that some of the processes also had a correct unanimity proof as their abort proof. However, note that in this scenario, all correct processes (at least $f+1$ processes) had inputs with either the highest or second highest priorities, all with the same abort value. Therefore, by Lemma~\ref{lem:pref}, the decision value must have been the value of one of these inputs. Since all these inputs had the same value $v$, $v$ must be the decision value of \prefPax.
\end{proof}

\begin{theorem}[End-to-end Validity]
	In the \composedAlgo{} algorithm, if there are no faulty processes and some process decides v, then v is the input of some process.
\end{theorem}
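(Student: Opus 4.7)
The plan is to split on where in \composedAlgo{} the deciding process actually commits its value. Recall from Section~\ref{sec:composition} that \composedAlgo{} first runs \fastbyz{}; if a process aborts, it feeds its abort value into \prefPax{} as the input, and \prefPax{} in turn invokes \robustbyz(Paxos) after a setup phase. So any decision is issued either (i) inside \fastbyz{} or (ii) inside the \robustbyz(Paxos) call of \prefPax{}. I will handle these two cases separately, using the assumption that no process is faulty.

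For case (i), I would simply invoke Lemma~\ref{lem:fastValidity} (\fastbyz{} Validity): with no faulty processes, a decision in \fastbyz{} is always the input of some process. So nothing further is needed here.

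For case (ii), I would argue in three steps, tracing a value backwards through the composition. First, I would show that every abort value produced by \fastbyz{} in a failure-free execution is the input of some process: inspecting $\t{Panic\_mode}()$, the abort value is either $\t{Value}[p]$, $\t{Value}[\ell]$, or $p$'s own input; since the leader $\ell$ is correct, the only value it ever writes to $\t{Value}[\ell]$ is its own input, and a correct follower only copies $\t{Value}[\ell]$ into $\t{Value}[p]$. Hence, every process entering \prefPax{} does so with some process's input as its \prefPax{} input. Second, in the setup phase of \prefPax{} (Algorithm~\ref{alg:pref}), each process adopts the value of one of the $n-f_P$ messages it T-receives; since all processes are correct, those messages carry exactly the \prefPax{} inputs of the senders. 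So the value each process passes to \robustbyz(Paxos) is still the input of some process. Third, \robustbyz(Paxos) is, by construction (Lemma~\ref{lem:robustByz}), a weak Byzantine agreement algorithm; its validity property guarantees that, with no faulty processes, any decided value is the input to \robustbyz(Paxos) of some process. Chaining these three steps gives the result.

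The main subtlety, and the only place that requires care, is the first step: making sure that in a completely failure-free execution, every possible abort value in \fastbyz{} can be traced to some process's input. All other steps follow immediately from the validity properties already established (Lemmas~\ref{lem:fastValidity},~\ref{lem:pref}, and~\ref{lem:robustByz}). Once the abort-value tracing is nailed down, the two cases together yield the theorem.
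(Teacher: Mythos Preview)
Your plan is correct and matches the paper's own proof essentially step for step: both argue validity of each sub-algorithm separately (via Lemmas~\ref{lem:fastValidity} and~\ref{lem:pref}/\ref{lem:robustByz}), then focus on showing that in a failure-free run every \fastbyz{} abort value traces back to some process's input by inspecting the three possible sources in $\t{Panic\_mode}()$. The only cosmetic difference is that you break out the setup phase of \prefPax{} as a separate step, whereas the paper simply remarks that it ``does not invent new values''; the substance is identical.
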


\begin{proof}
	Note that by Lemmas~\ref{lem:fastValidity} and~\ref{lem:pref}, this holds for each of the algorithms individually.
	Furthermore, recall that the abort values of \fastbyz{} become the input values of \prefPax{}, and the set-up phase does not invent new values. Therefore, we just have to show that if \fastbyz{} aborts, then all abort values are inputs of some process.
	Note that by the code in panic mode, if \fastbyz{} aborts, a process $p$ can output an abort value from one of three sources: its own Value region, the leader's Value region, or its own input value. Clearly, if its abort value is its input, then we are done. Furthermore note that a correct leader only writes its input in the Value region, and correct followers only write a copy of the leader's Value region in their own region. Since there are no faults, this means that only the input of the leader may be written in any Value region, and therefore all processes always abort with some processes input as their abort value.
\end{proof}

\begin{theorem}[End-to-end Agreement]
	In the \composedAlgo{} algorithm, if $p$ and $q$ are correct processes such that $p$ decides $v_1$ and $q$ decides $v_2$, then $v_1 = v_2$.
\end{theorem}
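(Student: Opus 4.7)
The plan is to proceed by case analysis on where each of the two correct processes $p$ and $q$ obtained their decision values. Since \composedAlgo{} runs \fastbyz{} first and falls back to \prefPax{} only if \fastbyz{} aborts, each correct process decides either inside \fastbyz{} or inside \prefPax{}. This yields three cases (up to symmetry): (i) both $p$ and $q$ decide in \fastbyz{}, (ii) both decide in \prefPax{}, (iii) one decides in \fastbyz{} and the other in \prefPax{}.

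Case (i) is immediate from Lemma~\ref{lem:decisionAgreement} (\fastbyz{} Decision Agreement), which guarantees that two correct processes deciding inside \fastbyz{} produce the same value. Case (ii) follows from Lemma~\ref{lem:pref} (\prefPax{} Priority Decision), whose first clause states that \prefPax{} implements weak Byzantine agreement with $n\geq 2f_P+1$ processes; in particular, correct processes that decide in \prefPax{} agree.

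The main obstacle is case (iii), where the two decisions come from different sub-algorithms and must still coincide. Here I would invoke the Composition Lemma (Lemma~\ref{lem:composition}). Without loss of generality suppose $p$ decides $v_1$ in \fastbyz{}. For $q$ to have fallen through to \prefPax{} and decided $v_2$ there, \fastbyz{} must have aborted at $q$ (so $q$'s abort value was fed as its input to \prefPax{}). The Composition Lemma says that whenever some correct process decides $v_1$ in \fastbyz{} before an abort, $v_1$ is the only value that can be decided in \prefPax{} under the priority scheme of Definition~\ref{def:priorities}. Hence $v_2 = v_1$, as required.

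Putting the three cases together establishes agreement. The only point where some care is needed is the implicit observation that the priority structure used to wire \fastbyz{}'s abort outputs into \prefPax{}'s inputs (via the sets $T$, $M$, $B$) is exactly the one for which Lemma~\ref{lem:composition} is stated, so its hypotheses are satisfied by construction of \composedAlgo{}. Once this is noted, the three cases combine into a short, clean proof with no further computation.
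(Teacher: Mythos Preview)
Your proof is correct and follows exactly the paper's approach: the paper's proof simply notes that Lemmas~\ref{lem:decisionAgreement} and~\ref{lem:pref} give agreement within each sub-algorithm individually, and that Lemma~\ref{lem:composition} then yields the theorem. Your three-case analysis is precisely an unpacking of that one-line argument.
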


\begin{proof}
	First note that by Lemmas~\ref{lem:decisionAgreement} and~\ref{lem:pref}, each of the algorithms satisfy this individually. Thus Lemma~\ref{lem:composition} implies the theorem.
\end{proof}

\begin{theorem}[End-to-end Termination]\label{thm:e2eTerm}
	In \composedAlgo{} algorithm, if some correct process is eventually the sole leader forever, then every correct process eventually decides. 
\end{theorem}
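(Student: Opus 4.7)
The plan is to argue in three stages, tracking what happens in \fastbyz{}, the handover to \prefPax{}, and finally Paxos liveness under a stable leader.

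First, I would consider the execution of \fastbyz{}. By Lemma~\ref{lem:fastTermination}, once a correct process proposes, every correct process either decides or aborts. If every correct process decides in \fastbyz{}, termination of \composedAlgo{} holds immediately, so we may assume some correct process $r$ aborts. I would then show that in this case \emph{every} correct process eventually aborts \fastbyz{} and enters \prefPax{}. The key observation is the panic-propagation mechanism: when $r$ enters $\t{Panic\_mode}$, it sets $\t{Panic}[r]\!=\!\t{true}$ before aborting (Algorithm~\ref{alg:panic}), and Observation~\ref{obs:fastonce} guarantees this flag is never overwritten. Since the normal-mode loops (lines~\ref{line:timeout1} and~\ref{line:timeout2}) repeatedly scan the $\t{Panic}[\cdot]$ registers, every correct process that has not already decided or aborted will observe $\t{Panic}[r]=\t{true}$ and itself invoke $\t{Panic\_mode}$, which contains no blocking operations and therefore returns an abort value in finite time. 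Hence all correct processes enter \prefPax{} with a well-defined input.

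Second, I would argue that the set-up phase of \prefPax{} (Algorithm~\ref{alg:pref}) terminates at every correct process. Each correct process T-sends its input to all and waits for $n-f_P$ T-received values. Because T-send and T-receive are implemented atop the reliable broadcast of Algorithm~\ref{alg:non-eq}, Property~1 of reliable broadcast (established in Lemma~\ref{lem:SMnon-eq}) guarantees that every message broadcast by a correct process is eventually delivered by every correct process. There are at least $n-f_P \ge f_P+1$ correct processes, so each correct process eventually receives enough T-messages, picks a highest-priority value, and invokes \robustbyz(Paxos) on line~\ref{line:paxosCall}.

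Third, I would invoke Lemma~\ref{lem:robustByz}: \robustbyz(Paxos) is Paxos executed under the transformation of Clement et al.~\cite{clement2012limited}, which preserves the liveness guarantees of the underlying crash-tolerant algorithm. Paxos terminates whenever there is eventually a unique correct leader forever, which is precisely the hypothesis of the theorem. Therefore \robustbyz(Paxos) terminates at every correct process, and so does \prefPax{} and hence \composedAlgo{}.

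The main obstacle is the first stage: formally arguing that \emph{every} correct process leaves \fastbyz{} once any single correct process aborts, without getting stuck indefinitely in one of the two timeout loops. This reduces to the observation that the timeouts in Algorithm~\ref{alg:normal} do eventually fire (they are bounded) and that in panic mode no correct process ever blocks, so once $\t{Panic}[r]$ is set it acts as a persistent, universally visible trigger---analogous to the safe-exit argument in the Abstract framework of~\cite{aublin2015next}.
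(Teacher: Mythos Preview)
Your proof is correct and arrives at the same conclusion, but the structure differs from the paper's. The paper fixes the time $t$ at which the sole leader $p$ stabilizes and then does a location-based case analysis on an arbitrary undecided process $q$: either $q$ is already in \prefPax{} at time $t$ (done by Lemma~\ref{lem:pref}), or $q$ is still in \fastbyz{}, in which case the paper further splits on whether $p$ itself is still the \fastbyz{} leader or has already moved to \prefPax{}. Your argument instead proceeds in three global stages (all correct processes leave \fastbyz{}; the set-up phase of \prefPax{} terminates; \robustbyz(Paxos) terminates under a stable leader), which is a cleaner pipeline and makes the panic-propagation mechanism explicit, whereas the paper leaves ``$q$ will also abort from \fastbyz{}'' unjustified. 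Your stage~2 also unpacks why T-receive eventually returns (via Property~1 of reliable broadcast), which the paper absorbs into the black-box appeal to Lemma~\ref{lem:pref}.

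One small caution: your first invocation of Lemma~\ref{lem:fastTermination} is not quite the right hook, because that lemma (as proved) assumes the \fastbyz{} leader $p_1$ is correct and has proposed, whereas the eventual sole leader in the theorem's hypothesis need not be $p_1$, and $p_1$ may be Byzantine. You patch this correctly in your final paragraph with the bounded-timeout observation, which is what actually guarantees that every correct follower exits the normal-mode loops regardless of $p_1$'s behavior; you should make that the primary justification for stage~1 rather than an afterthought.
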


\begin{proof}
    Assume towards a contradiction that some correct process $p$ is eventually the sole leader forever, and let $t$ be the time when $p$ last becomes leader. Now consider some process $q$ that has not decided before $t$. We consider several cases:
\begin{enumerate}
    \item If $q$ is executing \prefPax{} at time $t$, then $q$ will eventually decide, by termination of \prefPax{} (Lemma~\ref{lem:pref}).
    \item If $q$ is executing \fastbyz{} at time $t$, we distinguish two sub-cases:
    \begin{enumerate}
        \item $p$ is also executing as the leader of \fastbyz{} at time $t$. Then $p$ will eventually propose a value, so $q$ will either decide in \fastbyz{} or abort from \fastbyz{} (by Lemma~\ref{lem:fastTermination}) and decide in \prefPax{} by Lemma~\ref{lem:pref}.
        \item $p$ is executing in \prefPax{}. Then $p$ must have panicked and aborted from \fastbyz{}. Thus, $q$ will also abort from \fastbyz{} and decide in \prefPax{} by Lemma~\ref{lem:pref}.\qedhere
    \end{enumerate}
\end{enumerate}
\end{proof}

Note that to strengthen~\ref{thm:e2eTerm} to general termination as stated in our model, we require the additional standard assumption~\cite{lamport1998part} that some correct process $p$ is eventually the sole leader forever. In practice, however, $p$ does not need to be the sole leader forever, but rather \textit{long enough} so that all correct processes decide.


	\section{Correctness of Protected Memory Paxos}\label{sec:crashCorrectness}

In this section, we present the proof of Theorem~\ref{thm:crashAlgo}. We do so by showing the Algorithm~\ref{alg:consensus} is an algorithm that satisfies all of the properties in the theorem.

We first show that Algorithm~\ref{alg:consensus} correctly implements consensus, starting with validity. Intuitively, validity is preserved because each process that writes any value in a slot either writes its own value, or adopts a value that was previously written in a slot. We show that every value written in any slot must have been the input of some process.

\begin{theorem}[Validity]\label{thm:validity}
	In Algorithm~\ref{alg:consensus}, if a process $p$ decides a value $v$, then $v$ was the input to some process.
\end{theorem}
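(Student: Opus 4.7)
The plan is to show, by strong induction on the global ordering of successful phase-2 writes (line~\ref{line:phase2write}), the invariant that \emph{every value stored in the third component of any slot register was the input of some process}. Since a decision is only reached after a successful phase-2 write on a majority of memories, and that write puts $\t{CurrentVal}$ into the slot, establishing this invariant for all writes immediately gives validity: the deciding process's $\t{CurrentVal}$ at the moment it wrote equals the decided value, and by the invariant this is some process's input.

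First I would handle the base case. The first phase-2 write ever performed in the execution must have been done by some process $p$ whose $\t{CurrentVal}$ was set either at line~\ref{line:currentValInit} to $p$'s own input $v$, or updated at line~\ref{line:adoptValue} based on some non-$\bot$ $\t{val.value}$ read from a slot at line~\ref{line:phase1read}. But in the latter case there must have been an earlier successful phase-2 write (phase-1 writes only put $\bot$ in the value component), contradicting minimality. So in the base case $\t{CurrentVal}=v$ is $p$'s input.

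For the inductive step, consider a successful phase-2 write by process $p$, and assume the invariant holds for all earlier phase-2 writes. The value written is $p$'s $\t{CurrentVal}$ at that moment. Walking through the code, $\t{CurrentVal}$ was either (i) left at $p$'s input $v$ from line~\ref{line:currentValInit}, in which case we are done, or (ii) overwritten one or more times at line~\ref{line:adoptValue}, each time with some $\t{val.value}$ read at line~\ref{line:phase1read} from a slot whose value component was non-$\bot$. Such a non-$\bot$ value component can only have been placed there by an earlier phase-2 write of some process, and by the inductive hypothesis that value was the input of some process. Hence $p$'s $\t{CurrentVal}$ at the time of writing is also some process's input.

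The main subtlety, which I would treat carefully, concerns the \texttt{atomic} block together with the straggler logic at line~\ref{line:restartStraggler}: I must verify that once $p$ reaches the phase-2 write on a given memory, the value it writes is indeed consistent with a $\t{CurrentVal}$ obtained either from its own input or from a prior phase-2 write. The atomic block ensures updates to $\t{CurrentVal}$ and $\t{CurrentMaxProp}$ are serialized across parallel threads; a straggler discovering a strictly larger $\t{val.propNr}$ after $\t{Phase2Started}$ is true triggers \texttt{abort}, so no phase-2 write proceeds with stale information. Because \texttt{abort} terminates all threads of the parallel-for before any further write, the written $\t{CurrentVal}$ is exactly the value justified by the induction. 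This closes the argument, and taking $p$ to be the deciding process yields the theorem.
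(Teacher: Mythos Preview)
Your proof is correct and takes essentially the same approach as the paper: both arguments trace the provenance of any non-$\bot$ slot value back through the chain of adoptions to some process's input, with the paper phrasing this as a finite backward chain (by contradiction) and you phrasing it as forward strong induction on phase-2 writes. Your final paragraph on the \texttt{atomic} block and straggler abort is more than validity requires---those concerns matter for agreement, but validity only needs that $\t{CurrentVal}$ is always either the process's own input or a value previously read from a slot, which holds regardless of thread interleaving.
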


\begin{proof}
	Assume by contradiction that some process $p$ decides a value $v$ and $v$ is not the input of any process. Since $v$ is not the input value of $p$, then $p$ must have adopted $v$ by reading it from some process $p'$ at line~\ref{line:phase1read}. Note also that a process cannot adopt the initial value $\bot$, and thus, $v$ must have been written in $p'$'s memory by some other process. Thus we can define a sequence of processes $s_1, s_2,\ldots, s_k$, where $s_i$ adopts $v$ from the location where it was written by $s_{i+1}$ and $s_1=p$. This sequence is necessarily finite since there have been a finite number of steps taken up to the point when $p$ decided $v$. Therefore, there must be a last element of the sequence, $s_k$ who wrote $v$ in line~\ref{line:phase2write} without having adopted $v$. This implies $v$ was $s_k$'s input value, a contradiction. 
\end{proof}

We now focus on agreement. 

\begin{theorem}[Agreement]
	\label{thm:agreement}
	In Algorithm~\ref{alg:consensus}, for any processes $p$ and $q$, if $p$ and $q$ decide values $v_p$ and $v_q$ respectively, then $v_p = v_q$.
\end{theorem}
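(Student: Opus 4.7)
The argument follows a standard Paxos-style safety proof, adapted for the use of dynamic permissions and straggler threads in \crashAlgo. I would first record a few preliminary facts that are easy to read off the code: each process $p$ only writes to its own column of slots $slot[i,p]$ on disk $i$; each new leader attempt picks a proposal number strictly greater than any seen before (line 12), so proposal numbers are unique per (process, attempt); and any successful write to $slot[i,p]$ at time $t$ requires $p$ to hold the exclusive write permission on disk $i$ at $t$.

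The central technical lemma is a \emph{permission--intersection lemma}: if a process $p$ successfully completes its phase-2 write of $(m,m,v)$ to $slot[i,p]$, and a process $p'$ later executes \texttt{getPermission}$(i)$ during its phase 1, then $p'$ cannot have held permission on $i$ at the moment of $p$'s write (otherwise $p$'s write would have been rejected), so $p'$ takes permission strictly after $p$'s write completes. Since only $p$ writes to $slot[i,p]$, $p'$'s subsequent scan of all slots on disk $i$ at line~\ref{line:phase1read} returns $(m,m,v)$ in $slot[i,p]$.

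Call $v$ \emph{chosen at $m$} if some attempt completes line~\ref{line:phase2write} with $(m,m,v)$ on a majority $M$ of disks; by inspection, any deciding process must have chosen some value. I would then prove the main invariant: \emph{if $v$ is chosen at $m$, then for every $m'>m$, every attempt that completes phase 1 with propNr $m'$ without aborting adopts $v$ as its \texttt{CurrentVal}.} The proof is by induction on $m'$. Let $M'$ be the attempt's final majority (the initial majority that hit the barrier at line~\ref{line:barrierPhase2} together with any stragglers that later completed phase 1 without aborting), and pick $j\in M\cap M'$. By the permission--intersection lemma, $p'$'s thread on $j$ reads $(m,m,v)$ in $slot[j,p]$. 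The straggler check at line~\ref{line:restartStraggler} guarantees that if $p'$ does not abort, then every \texttt{val.propNr} read by any thread is $\le CurrentMaxProp$ once $Phase2Started$ holds; combined with the updates at line~\ref{line:adoptValue}, this yields that $CurrentMaxProp$ equals the maximum accepted proposal number seen across all of $M'$, and this maximum is $\ge m$. If the maximum equals $m$, uniqueness of proposal numbers forces $CurrentVal=v$; otherwise the maximum $k$ satisfies $m<k<m'$ and was written in phase 2 by an earlier attempt that adopted some $v''$, so the inductive hypothesis at $k$ gives $v''=v$.

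Agreement then follows: if $p$ and $q$ decide $v_p$ and $v_q$ in attempts with numbers $m_p$ and $m_q$, proposal-number uniqueness implies $m_p\ne m_q$; WLOG $m_p<m_q$. Then $v_p$ is chosen at $m_p$, and the invariant applied to $q$'s attempt gives that its adopted value, which is what it writes in phase 2 and decides, equals $v_p$. The main obstacle in the argument is the straggler case in the inductive step: one has to argue carefully that line~\ref{line:restartStraggler} really ensures $CurrentMaxProp$ reflects the maximum over the \emph{final} majority $M'$ rather than only the initial one, because otherwise a late-arriving straggler could expose a higher accepted proposal with a different value and break the adoption invariant that drives the induction.
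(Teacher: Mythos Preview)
Your argument is the standard Paxos induction and mirrors the paper's proof, which phrases the same idea as a minimal counterexample (``let $b_q$ be the smallest proposal number larger than $b_p$ for which some process enters phase~2 with CurrentVal $\neq v_p$'') rather than an explicit induction. Your permission--intersection lemma corresponds to the paper's Lemma~\ref{lem:atomic}, and your inline treatment of the straggler check at line~\ref{line:restartStraggler} covers what the paper factors out as Lemma~\ref{lem:onevalue}.

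There is one case you do not cover. Your invariant only constrains attempts that \emph{complete phase~1}, and you then apply it to the attempt with the larger proposal number; but $p_1$'s first attempt skips phase~1 altogether (line~\ref{line:checkOmitPhase1}), and nothing in your write-up rules out that attempt being the one with the larger number. The paper handles this as a separate case. A clean patch within your framework: if $p_1$'s first attempt decides on a majority $M_1$, then any other deciding attempt must acquire permission on some $j\in M_1$ strictly after $p_1$'s phase-2 write there (otherwise $p_1$'s write to $j$ would fail), so its phase-1 read on $j$ sees $p_1$'s slot with accepted proposal $m_1$; if its own proposal number were below $m_1$ it would abort at line~\ref{line:Crestart1}. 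Hence any attempt deciding alongside $p_1$'s first attempt carries the larger proposal number and does run phase~1, and your induction then applies.
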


Before showing the proof of the theorem, we first introduce the following useful lemmas.

 \begin{lemma}\label{lem:atomic}
 	The values a leader accesses on remote memory cannot change between when it reads them and when it writes them.
 \end{lemma}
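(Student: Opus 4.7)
The plan is to exploit the dynamic write-permission mechanism of \crashAlgo{}: once the leader $p$ acquires exclusive write permission on memory $i$ via \texttt{getPermission(i)} on line~\ref{line:permission}, it remains the sole writer on that memory until some other process executes \texttt{getPermission(i)} and thereby revokes $p$'s permission. The lemma then follows by observing that permission loss is always detected by a failed write.

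First I would show that every successful write $p$ performs on memory $i$ (lines~\ref{line:phase1write} and~\ref{line:phase2write}) certifies that $p$ held write permission at the moment of that write; otherwise, by the definition of \texttt{write} in the model, the operation would return \nak{} and $p$ would abort on line~\ref{line:restart2} or line~\ref{line:restart3}. Conversely, the only way for $p$ to lose exclusive write permission on memory $i$ is for some other process $p'$ to invoke \texttt{getPermission(i)}. Combining these two facts, if $p$'s read on line~\ref{line:phase1read} is followed by a successful phase 2 write on line~\ref{line:phase2write}, then no other process can have acquired write permission on memory $i$ during the interval between the read and the write; had it done so, $p$'s subsequent write would have failed.

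Next, since only the current holder of write permission on memory $i$ can modify slots there, and reads by other processes do not alter slot contents, no slot on memory $i$ can change during the interval in question. Hence the values $p$ reads on line~\ref{line:phase1read} are exactly the values residing in those slots at the moment $p$ issues its phase 2 write, establishing the lemma.

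The main obstacle will be to state and apply the invariant per memory, since the algorithm runs a parallel-for loop with one thread per memory. I would explicitly localize the argument to a single memory $i$, noting that permission on $i$ is independent of permission on other memories. A subtle case is a straggler thread that completes phase 1 for memory $i$ only after \texttt{Phase2Started} has been set and observes a value with a higher proposal number: here $p$ aborts via line~\ref{line:restartStraggler}, so the invariant is trivially maintained because $p$ never issues the phase 2 write in question. In all remaining cases, the permission-continuity argument above delivers the conclusion.
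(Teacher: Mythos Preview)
Your proposal is correct and follows essentially the same approach as the paper's proof: both argue that exclusive write permission is acquired before the read, that a successful subsequent write certifies no other process took permission in the interim, and hence no slot could have been modified between the read and the write. Your discussion of the straggler case is extraneous to this particular lemma (which concerns only the read--write interval on a single memory), but it does no harm.
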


 \begin{proof}
 	Recall that each \disk{} only allows write-access to the most recent process that acquired it. In particular, that means that each \disk{} only gives access to one process at a time. Note that the only place at which a process acquires write-permissions on a \disk{} is at the very beginning of its run, before reading the values written on the \disk. In particular, for each \disk{} $d$ a process does not issue a read on $d$ before its permission request on $d$ successfully completes. Therefore, if a process $p$ succeeds in writing on \disk{} $m$, then no other process could have acquired $d$'s write-permission after $p$ did, and therefore, no other process could have changed the values written on $m$ after $p$'s read of $m$. 
\end{proof}

\begin{lemma}\label{lem:onevalue}
    If a leader writes values $v_i$ and $v_j$ at line~\ref{line:phase2write} with the same proposal number to memories $i$ and $j$, respectively, then $v_i = v_j$.
\end{lemma}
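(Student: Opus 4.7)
My plan is to prove this by fixing a single invocation of the \texttt{propose} procedure by a leader $p$ with proposal number $\textit{propNr}$ and showing that the two parallel threads writing $v_i$ and $v_j$ at line~\ref{line:phase2write} must write the same value. The key shared variable is $\textit{CurrentVal}$, which is common to all threads of the parallel-for loop, and the two writes at line~\ref{line:phase2write} simply write $\textit{CurrentVal}$ at the moment each thread reaches that line. So it suffices to show that $\textit{CurrentVal}$ has the same value at both write moments.

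First I would identify $t^\star$, the earliest time during this invocation at which the flag $\textit{Phase2Started}$ is set to \textit{true} (inside \texttt{startPhase2} at line~\ref{line:phase2start}). Note that once \textit{Phase2Started} becomes true, it stays true for the duration of this invocation. Next I would argue the stability lemma: after $t^\star$, no thread can update $\textit{CurrentVal}$. The only code site that writes $\textit{CurrentVal}$ is the atomic block containing line~\ref{line:adoptValue}, and that block aborts at line~\ref{line:restartStraggler} whenever $\textit{Phase2Started}$ is observed to be true. Because the block is atomic, a thread that does modify $\textit{CurrentVal}$ must observe $\textit{Phase2Started}=\textit{false}$ and perform the update in a single indivisible step, which is only possible strictly before $t^\star$.

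Then I would argue the timing claim: every phase-2 write at line~\ref{line:phase2write} occurs strictly after $t^\star$. This is because a thread's phase-2 write is preceded in program order (within its own thread) by its call to \texttt{startPhase2}, which waits at the barrier of line~\ref{line:barrierPhase2} and then sets $\textit{Phase2Started}=\textit{true}$ at line~\ref{line:phase2start}; by the time that thread reaches line~\ref{line:phase2write}, \textit{Phase2Started} has been true at least once, so its own write happens at a time $\ge t^\star$. Applying this to both threads, the writes of $v_i$ and $v_j$ occur at times $\ge t^\star$, and by the stability lemma $\textit{CurrentVal}$ is unchanged throughout the interval $[t^\star, \max(t_i, t_j)]$. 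Hence both writes read the same value from $\textit{CurrentVal}$ and we conclude $v_i = v_j$.

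The main obstacle is to be careful about the semantics of the \texttt{atomic} block relative to the non-atomic assignment $\textit{Phase2Started}\leftarrow\textit{true}$: one must treat the entire block of lines~\ref{line:restartStraggler}--\ref{line:adoptValue} as linearized, so that its check of $\textit{Phase2Started}$ and its potential update of $\textit{CurrentVal}$ cannot be split by an intervening write to $\textit{Phase2Started}$ from another thread. Once this is stated cleanly, the rest of the argument is a straightforward case analysis on whether each atomic block linearizes before or after $t^\star$.
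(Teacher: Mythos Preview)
Your approach mirrors the paper's: both pin down the moment \textit{Phase2Started} first becomes true and argue that \textit{CurrentVal} is frozen thereafter, so any two phase-2 writes see the same value. The structure and key insight are essentially identical.

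There is, however, one case your argument does not cover. Your timing claim states that ``a thread's phase-2 write is preceded in program order by its call to \texttt{startPhase2},'' but this is false when $p = p_1$ on its first attempt: the test at line~\ref{line:checkOmitPhase1} is then false for \emph{every} thread, so all threads skip the entire phase-1 block (including the call to \texttt{startPhase2}) and go straight to line~\ref{line:phase2write}. In that execution $t^\star$ is undefined and your stability/timing argument does not apply. The paper handles this case explicitly, noting that when phase~1 is skipped, \textit{CurrentVal} is assigned once at line~\ref{line:currentValInit} and never touched again, so all threads trivially write the same value. Add this case split at the start and the rest of your proof goes through unchanged.
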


\begin{proof}
    Assume by contradiction that a leader $p$ writes different values $v_1 \neq v_2$ with the same proposal number. Since each thread of $p$ executes the phase 2 write (line~\ref{line:phase2write}) at most once per proposal number, it must be that different threads $T_1$ and $T_2$ of $p$ wrote $v_1$ and $v_2$, respectively. If $p$ does not perform phase 1 (i.e., if $p = p_1$ and this is $p$'s first attempt), then it is impossible for $T_1$ and $T_2$ to write different values, since CurrentVal was set to $v$ at line~\ref{line:currentValInit} and was not changed afterwards. Otherwise (if $p$ performs phase 1), then let $t_1$ and $t_2$ be the times when $T_1$ and $T_2$ executed line~\ref{line:phase2start}, respectively ($T_1$ and $T_2$ must have done so, since we assume that they both reached the phase 2 write at line~\ref{line:phase2write}). Assume \textit{wlog} that $t_1 \leq t_2$. Due to the check and abort at line ~\ref{line:restartStraggler}, CurrentVal cannot change after $t_1$ while keeping the same proposal number. Thus, when $T_1$ and $T_2$ perform their phase 2 writes (after $t_1$), CurrentVal has the same value as it did at $t_1$; it is therefore impossible for $T_1$ and $T_2$ to write different values. We have reached a contradiction.
\end{proof}

\begin{lemma}
    If a process $p$ performs phase 1 and then writes to some memory $m$ with proposal number $b$ at line~\ref{line:phase2write}, then $p$ must have written $b$ to $m$ at line~\ref{line:phase1write} and read from $m$ at line~\ref{line:phase1read}.
\end{lemma}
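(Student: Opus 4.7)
The proof is a straightforward structural walk through the thread of the parallel-for loop that corresponds to memory $m$. The plan is to observe that within a single iteration of the outer repeat loop (lines~\ref{line:mainLoop}--\ref{line:pforEnd}), the local variable \t{propNr} is assigned once, before the parallel-for is spawned, and is never modified thereafter. Hence all threads of a given iteration share the same \t{propNr}. If $p$ writes to memory $m$ at line~\ref{line:phase2write} with proposal number $b$, then this happens inside some iteration of the outer loop in which \t{propNr}$=b$.

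Next I would use the hypothesis that $p$ performs phase 1 (i.e., $p \neq p_1$ or this is not $p_1$'s first attempt), which makes the guard at line~\ref{line:checkOmitPhase1} evaluate to true in \emph{every} thread of the parallel-for. In particular, the thread for memory $m$ enters the if-block and, executing sequentially, reaches in order: \t{getPermission}$(m)$, the write at line~\ref{line:phase1write} of $(\t{propNr},\bot,\bot) = (b,\bot,\bot)$ to \t{slot}$[m,p]$, and the read of all slots of $m$ at line~\ref{line:phase1read}.

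The only thing left is to rule out the possibility that the thread reaches line~\ref{line:phase2write} without having completed these two operations. If the write at line~\ref{line:phase1write} had failed (returned \nak), the check immediately after would have invoked \t{abort}() at line~\ref{line:restart2}, terminating the parallel-for by the convention stated in the algorithm's description, so the phase~2 write for $m$ could not have been reached. Similarly, the read at line~\ref{line:phase1read} is on the sequential code path between the phase~1 write and the phase~2 write (there are no branches that skip it), so if the thread reached line~\ref{line:phase2write}, the read at line~\ref{line:phase1read} must have been executed.

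There is no substantial obstacle here; the only mildly subtle point is confirming that \t{propNr} is indeed fixed throughout the iteration (so that the value written at phase~1 is the \emph{same} $b$ written at phase~2) and that an \t{abort} in any thread causes the entire parallel-for to exit, rather than allowing a sibling thread to progress from phase~1 into phase~2 on a stale or overwritten \t{propNr}. Both facts follow directly from the pseudocode and its stated semantics.
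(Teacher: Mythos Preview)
Your proposal is correct and follows essentially the same approach as the paper: a structural walk through the thread $T$ associated with memory $m$, observing that when the phase~1 guard holds, $T$ cannot reach line~\ref{line:phase2write} without first executing lines~\ref{line:phase1write} and~\ref{line:phase1read} on $m$. Your write-up is in fact slightly more careful than the paper's, as you explicitly justify that \t{propNr} is fixed throughout the iteration (so the value written in phase~1 is the same $b$) and spell out the abort semantics, whereas the paper leaves these implicit under ``by construction.''
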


\begin{proof}
    Let $T$ be the thread of $p$ which writes to $m$ at line~\ref{line:phase2write}. If phase 1 is performed (i.e., the condition at line~\ref{line:checkOmitPhase1} is satisfied), then by construction $T$ cannot reach line~\ref{line:phase2write} without first performing lines~\ref{line:phase1write} and~\ref{line:phase1read}. Since $T$ only communicates with $m$, it must be that lines~\ref{line:phase1write} and~\ref{line:phase1read} are performed on $m$.
\end{proof}

\begin{proof}[Proof of Theorem~\ref{thm:agreement}]
Assume by contradiction that $v_p \neq v_q$. Let $b_p$ and $b_q$ be the proposal numbers with which $v_p$ and $v_q$ are decided, respectively. Let $W_p$ (resp. $W_q$) be the set of \disks{} to which $p$ (resp. $q$) successfully wrote in phase 2 line~\ref{line:phase2write} before deciding $v_p$ (resp. $v_q$). Since $W_p$ and $W_q$ are both majorities, their intersection must be non-empty. Let $m$ be any \disk{} in $W_p \cap W_q$. 

We first consider the case in which one of the processes did not perform phase 1 before deciding (i.e., one of the processes is $p_1$ and it decided on its first attempt). Let that process be $p$ \textit{wlog}. Further assume \textit{wlog} that $q$ is the first process to enter phase 2 with a value different from $v_p$. $p$'s phase 2 write on $m$ must have preceded $q$ obtaining permissions from $m$ (otherwise, $p$'s write would have failed due to lack of permissions). Thus, $q$ must have seen $p$'s value during its read on $m$ at line~\ref{line:phase1read}, and thus $q$ cannot have adopted its own value. Since $q$ is the first process to enter phase 2 with a value different from $v_p$, $q$ cannot have adopted any other value than $v_p$, so $q$ must have adopted $v_p$. Contradiction.

We now consider the remaining case: both $p$ and $q$ performed phase 1 before deciding. We assume \textit{wlog} that $b_p < b_q$ and that $b_q$ is the smallest proposal number larger than $b_p$ for which some process enters phase 2 with CurrentVal $\neq v_p$.


Since $b_p < b_q$, $p$'s read at $m$ must have preceded $q$'s phase 1 write at $m$ (otherwise $p$ would have seen $q$'s larger proposal number and aborted). This implies that $p$'s phase 2 write at $m$ must have preceded $q$'s phase 1 write at $m$ (by Lemma~\ref{lem:atomic}). Thus $q$ must have seen $v_p$ during its read and cannot have adopted its own input value. However, $q$ cannot have adopted $v_p$, so $q$ must have adopted $v_q$ from some other slot $sl$ that $q$ saw during its read. It must be that $sl.minProposal < b_q$, otherwise $q$ would have aborted. Since $sl.minProposal \geq sl.accProposal$ for any slot, it follows that $sl.accProposal < b_q$. If $sl.accProposal < b_p$, $q$ cannot have adopted $sl.value$ in line~\ref{line:adoptValue} (it would have adopted $v_p$ instead). Thus it must be that  $b_p \leq sl.accProposal < b_q$; however, this is impossible because we assumed that $b_q$ is the smallest proposal number larger than $b_p$ for which some process enters phase 2 with CurrentVal $\neq v_p$. We have reached a contradiction.
%
\end{proof}

Finally, we prove that the termination property holds.
\begin{theorem}[Termination]\label{thm:termination}
	Eventually, all correct processes decide. 
\end{theorem}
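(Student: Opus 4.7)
The plan is to follow the standard Paxos-style liveness argument, adapted to the parallel-for structure of \crashAlgo{}. I would use the model's standard liveness assumptions: that some correct process $\ell$ is eventually the sole value returned by $\Omega$ at every correct process, and that after some time $t^*$ all memory operations issued by correct processes on non-failed memories return within a known bound $\Delta$. After $t^*$, no process other than $\ell$ passes the wait at line~\ref{line:leaderWait}, so $\ell$ is the only process that invokes getPermission, read, or write on any memory; in particular $\ell$ retains write permission on every non-failed \disk{}, ruling out the aborts at lines~\ref{line:restart2} and~\ref{line:restart3} on those memories.

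The next step is to show that $\ell$ eventually completes an iteration of the outer loop (lines~\ref{line:mainLoop}--\ref{line:pforEnd}) without aborting. Each retry chooses propNr strictly larger than anything $\ell$ has ever observed; since only finitely many propNrs were ever written before $t^*$ and only $\ell$ writes thereafter, after finitely many retries $\ell$'s propNr strictly exceeds every propNr stored on any non-failed \disk{}. On such an iteration the phase-1 write and read succeed on every non-failed memory, the check at line~\ref{line:Crestart1} does not fire because every stored proposal is smaller than $\ell$'s, a majority of threads crosses the barrier at line~\ref{line:barrierPhase2} and sets Phase2Started, and the phase-2 write at line~\ref{line:phase2write} succeeds on that majority, so $\ell$ reaches the final decide.

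The main obstacle I expect is the straggler abort at line~\ref{line:restartStraggler}: a parallel-for thread that completes phase~1 after Phase2Started has been set can observe a slot whose accepted proposal exceeds CurrentMaxProp and abort the whole iteration. To rule this out, I would combine the synchrony assumption (all of $\ell$'s per-\disk{} operations return within $\Delta$, so the parallel-for threads finish phase~1 within a bounded time window) with the observation that, because only $\ell$ writes after $t^*$, the multiset of slots visible to every parallel-for thread is essentially stable across retries, and by Lemma~\ref{lem:onevalue} any two writes by $\ell$ carrying the same propNr carry the same value. Together these let me argue that across sufficiently many retries some iteration has all atomic updates complete before the majority-th startPhase2 call, so Phase2Started is set only after every thread has passed its straggler check and no abort fires.

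Finally, to lift ``$\ell$ decides'' to ``every correct process decides'' I would invoke the broadcast-on-decide step mentioned in the paragraph immediately after Algorithm~\ref{alg:consensus}: once $\ell$ has decided, it sends its decision value to every other process, and each correct recipient adopts it. Combined with the previous steps this yields the termination property claimed in the theorem.
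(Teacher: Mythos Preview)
Your plan follows the same skeleton as the paper's proof: use $\Omega$ to obtain an eventual unique correct leader, argue that once every other process is stuck at line~\ref{line:leaderWait} the leader keeps write permission on all non-failed memories and eventually picks a proposal number larger than anything stored, then conclude via the broadcast-on-decide step. Two differences are worth flagging. First, the paper relies on $\Omega$ alone (together with Lemma~\ref{lem:weakTermination} to argue that contending processes eventually exit their parallel-for loops and block at line~\ref{line:leaderWait}); it does not assume a bound $\Delta$ on memory-operation latency. Your additional partial-synchrony assumption is not needed for the parts the paper actually proves. Second, you explicitly worry about the straggler abort at line~\ref{line:restartStraggler}, which the paper's proof simply does not address: it argues only that the leader ``stops restarting at line~\ref{line:Crestart1}'' and that its writes succeed, then jumps to ``therefore $p$ will decide.'' So your instinct that this case deserves attention is well placed; however, your proposed fix is not convincing as stated. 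Lemma~\ref{lem:onevalue} concerns the \emph{value} carried by phase-2 writes under a fixed proposal number and says nothing about whether a late thread's atomic block can find \texttt{val.propNr}${}>{}$\texttt{CurrentMaxProp} after \texttt{Phase2Started} is set, and bounding memory latency does not by itself order the local atomic blocks relative to the barrier. If you want a clean argument here you need either a fairness assumption on the scheduling of the parallel-for threads, or to observe that after $t^\ast$ the non-leader slots are frozen and argue directly about which thread carries the global maximum; invoking Lemma~\ref{lem:onevalue} is a red herring.
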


\begin{lemma}\label{lem:weakTermination}
    If a correct process $p$ is executing the for loop at lines~\ref{line:pforStart}--\ref{line:pforEnd}, then $p$ will eventually exit from the loop.
\end{lemma}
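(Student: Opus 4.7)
The plan is to show that the parallel-for loop at lines~\ref{line:pforStart}--\ref{line:pforEnd} exits because one of its two termination conditions is inevitably triggered: either some thread invokes \texttt{abort}, or a majority of threads complete the phase-2 write at line~\ref{line:phase2write}. Since $m \geq 2f_M+1$, at most $f_M$ memories crash, leaving at least $m - f_M \geq f_M+1$ memories alive; this is a strict majority. I will focus on the subset $S$ of threads assigned to alive memories, which has $|S| \geq f_M+1$, and show that the loop terminates by tracking these threads.

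First I would argue that each individual operation invoked by a thread in $S$ (i.e., \texttt{getPermission}, \texttt{read}, \texttt{write}) returns in finite time, since its target memory is alive. Hence a thread in $S$ can only fail to make progress in two ways: (a) it is blocked at the barrier inside \texttt{startPhase2} at line~\ref{line:barrierPhase2}, or (b) it aborts. If any thread aborts (at lines~\ref{line:restart2}, \ref{line:Crestart1}, \ref{line:restartStraggler}, or \ref{line:restart3}), the loop exits immediately by the loop's exit condition, and we are done.

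Next, suppose no thread ever aborts. Then each thread in $S$ completes all of its pre-barrier work in finite time: either it skips phase~1 (the first attempt by $p_1$), or it obtains permission, performs its phase-1 write successfully (no abort), reads the slots, updates \texttt{CurrentVal}/\texttt{CurrentMaxProp} atomically, and then calls \texttt{startPhase2}. Since $|S| \geq f_M+1$ is already a majority, every thread in $S$ eventually finds $|\texttt{ListOfReady}| \geq$ majority and passes the barrier. Each such thread then performs its phase-2 write, which succeeds (else it would abort, contradicting the assumption). Therefore a majority of threads complete the full loop body, and the second exit condition of the loop fires.

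The main obstacle is verifying that the straggler check at line~\ref{line:restartStraggler} cannot cause the loop to stall silently rather than exit: a straggler thread that completes phase~1 after \texttt{Phase2Started} has become true and observes a higher proposal number will execute \texttt{abort}, but \texttt{abort} is exactly what triggers the loop's exit condition, so this case is benign. Similarly, the atomic block guarantees that updates to \texttt{CurrentVal} and \texttt{CurrentMaxProp} are not interleaved inconsistently across the $|S|$ threads, so the only way a thread can fail to proceed past phase~1 without aborting is by being blocked at the barrier, which we ruled out above. Combining these observations yields termination of the for loop.
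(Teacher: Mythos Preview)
Your proposal is correct and follows essentially the same approach as the paper's proof: both identify the two exit conditions (some thread aborts, or a majority of threads complete the phase-2 write), restrict attention to the majority of threads whose memories are alive, and then do a case split on whether an abort occurs, arguing that in the no-abort case those threads clear the barrier and finish phase~2. Your treatment is slightly more detailed (you explicitly discuss the straggler abort at line~\ref{line:restartStraggler} and the role of the atomic block), but the structure and key ideas are the same.
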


\begin{proof}
    The threads of the for loop perform the following potentially blocking steps: obtaining permission (line~\ref{line:permission}), writing (lines~\ref{line:phase1write} and \ref{line:phase2write}), reading (line~\ref{line:phase1read}), and waiting for other threads (the barrier at line~\ref{line:barrierPhase2} and the exit condition at line~\ref{line:pforEnd}). By our assumption that a majority of memories are correct, a majority of the threads of the for loop must eventually obtain permission in line~\ref{line:permission} and invoke the write at line~\ref{line:phase1write}. If one of these writes fails due to lack of permission, the loop aborts and we are done. Otherwise, a majority of threads will perform the read at line~\ref{line:phase1read}. If some thread aborts at lines~\ref{line:restart2} and \ref{line:Crestart1}, then the loop aborts and we are done. Otherwise, a majority of threads must add themselves to ListOfReady, pass the barrier at line~\ref{line:barrierPhase2} and invoke the write at line~\ref{line:phase2write}. If some such write fails, the loop aborts; otherwise, a majority of threads will reach the check at line~\ref{line:pforEnd} and thus the loop will exit.
\end{proof}

\begin{proof}[Proof of Theorem~\ref{thm:termination}]
    The $\Omega$ failure detector guarantees that eventually, all processes trust the same correct process $p$. Let $t$ be the time after which all correct processes trust $p$ forever. By Lemma~\ref{lem:weakTermination}, at some time $t'\geq t$, all correct processes except $p$ will be blocked at line~\ref{line:leaderWait}. Therefore, the \textit{minProposal} values of all \disks{}, on all slots except those of $p$ stop increasing. Thus, eventually, $p$ picks a \textit{propNr} that is larger than all others written on any memory, and stops restarting at line~\ref{line:Crestart1}. Furthermore, since no process other than $p$ is executing any steps of the algorithm, and in particular, no process other than $p$ ever acquires any \disk{} after time $t'$, $p$ never loses its permission on any of the \disks. So, all writes executed by $p$ on any correct \disk{} must return $ack$. Therefore, $p$ will decide and broadcast its decision to all. All correct processes will receive $p$'s decision and decide as well.
\end{proof}

To complete the proof of Theorem~\ref{thm:crashAlgo}, we now show that Algorithm~\ref{alg:consensus} is 2-deciding.

\begin{theorem}
	Algorithm~\ref{alg:consensus} is 2-deciding.
\end{theorem}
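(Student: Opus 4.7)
The plan is to exhibit a single common-case execution in which the initial leader $p_1$ reaches a decision after one memory write, and then invoke the model's convention that each memory operation costs two delays. Recall that a common-case execution is synchronous and failure-free, so I may assume all messages and operations complete promptly and no process crashes.

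Concretely, I would consider the execution in which $\Omega$ returns $p_1$ as leader from the start, so that $p_1$ does not wait at line~\ref{line:leaderWait}, and in which $p_1$ begins with exclusive write permission on every disk (by the algorithm's initial configuration). $p_1$ initializes CurrentVal to its input $v$ on line~\ref{line:currentValInit} and enters the parallel-for loop of lines~\ref{line:pforStart}--\ref{line:pforEnd}. Since $p == p_1$ and this is $p_1$'s first attempt, the guard at line~\ref{line:checkOmitPhase1} evaluates to false, so the entire phase~1 block (including the permission acquisition on line~\ref{line:permission}, the phase~1 write on line~\ref{line:phase1write}, the phase~1 read on line~\ref{line:phase1read}, and the barrier inside startPhase2 on line~\ref{line:barrierPhase2}) is skipped. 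Each thread jumps directly to the phase~2 write on line~\ref{line:phase2write}, issuing the single write $(\mathrm{propNr},\mathrm{propNr},v)$ to its disk.

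Because no other process ever becomes leader in this execution, $p_1$ retains its write permission on every disk, so every phase~2 write returns $\mathit{ack}$; in particular, the abort on line~\ref{line:restart3} is never triggered. As soon as a majority of these writes have acknowledged, the exit condition of the parallel-for loop at line~\ref{line:pforEnd} is met without any abort, and $p_1$ decides CurrentVal $=v$. The only memory operation on $p_1$'s critical path is this single phase~2 write, which by the model (Section~\ref{sec:model}) takes two delays. Hence $p_1$ decides after two delays, and the algorithm is 2-deciding.

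The only place where a hidden delay could sneak in is (i) the wait for $\Omega$ on line~\ref{line:leaderWait} or (ii) the barrier inside startPhase2; but (i) is vacuous since $\Omega$ trusts $p_1$ from the start, and (ii) is bypassed because phase~1 is entirely skipped. Thus the two-delay bound is exactly the cost of the single round trip used to write to a majority of memories, matching the claim of Theorem~\ref{thm:crashAlgo}.
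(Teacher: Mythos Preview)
Your proof is correct and follows essentially the same approach as the paper: exhibit the common-case execution in which $\Omega$ trusts $p_1$ throughout, observe that $p_1$ skips phase~1 on its first attempt and retains write permission on all memories, so a single phase~2 write to a majority suffices to decide. Your argument is in fact more explicit than the paper's about ruling out hidden delays (the $\Omega$ wait and the startPhase2 barrier), but the core reasoning is identical.
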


\begin{proof}
	Consider an execution in which $p_1$ is timely, and no process's failure detector ever suspects $p_1$. Then, since no process thinks itself the leader, and processes do not deviate from their protocols, no process calls changePermission on any \disk{}. Furthermore, $p_1$'s does not perform phase 1 (lines~\ref{line:checkOmitPhase1}--\ref{line:endPhase1}), since it is $p_1$'s first attempt. Thus, since $p_1$ initially has write permission on all \disks{}, all of $p_1$'s phase 2 writes succeed. Therefore, $p_1$ terminates, deciding its own proposed value $v$, after one write per \disk.
\end{proof}

	\section{Pseudocode for the \combinedAlgo}\label{sec:alignedPseudo}

\renewcommand{\figurename}{Algorithm}

\begin{figure}[H]
	\caption{\combinedAlgo}
	\begin{lstlisting}[columns=fullflexible,breaklines=true, keywords={}]
A=(P, M) is set of acceptors
propose(v){
	resps = [] //prepare empty responses list
	choose propNr bigger than any seen before
	for all a in A{
		cflag = @\textbf{communicate1}@(a, propNr)
		resp = @\textbf{hearback1}@(a)
		if (cflag){resps.append((a, resp))	}	}
	wait until resps has responses from a majority of A
	next = @\textbf{analyze1}@(resps)
	if (next == RESTART) restart;
	resps = []
	for all a in A{
		cflag = @\textbf{communicate2}@(a, next)
		resp = @\textbf{hearback2}@(a)
		if (cflag){resps.append((a, resp))	}	}
	wait until resps has responses from a majority of A
	next = @\textbf{analyze2}@(resps)
	if (next == RESTART) restart;
	decide next; }
	\end{lstlisting}
	\label{alg:combined}
\end{figure}

\begin{figure}[h!]
	\caption{Communicate Phase 1---code for process $p$}
	\begin{lstlisting}[columns=fullflexible,breaklines=true, keywords={}]
bool communicate1(agent a, value propNr){
	if (a is @\disk@){
		changePermission(a, {(R:@$\Pi$@-{p}, W:@$\emptyset$@, RW: {p}); //acquire write permission
		return write(a[p], {propNr, -, -}) }
	else{
		send prepare(propNr) to a	
		return true	}	}
	\end{lstlisting}
	\label{alg:communicate1}
\end{figure}

\begin{figure}[h!]
	\caption{Hear Back Phase 1---code for process $p$}
	\begin{lstlisting}[columns=fullflexible,breaklines=true, keywords={}]
value hearback1(agent a){
	if (a is @\disk@){
		for all processes q{
			localInfo[q] = read(a[q])	}
		return = localInfo}
	else{
		return value received from a	}	}
	\end{lstlisting}
	\label{alg:hear1}
\end{figure}

\begin{figure}[H]
	\caption{Analyze Phase 1---code for process $p$}
	\begin{lstlisting}[columns=fullflexible,breaklines=true, keywords={}]
responses is a list of (agent, response) pairs
value analyze1(responses resps){
	for resp in resps {
		if (resp.agent is @\disk@){
			for all slots s in v.info {
				if (s.minProposal @$>$@ propNr) return RESTART	}	
			(v, accProposal) = value and accProposal of slot with highest 
			accProposal that had a value	}	} 
	return v where (v, accProposal) is the highest accProposal seen in resps.response of all agents }
	\end{lstlisting}
	\label{alg:analyze1}
\end{figure}

\begin{figure}[h!]
	\caption{Communicate Phase 2---code for process $p$}
	\begin{lstlisting}[columns=fullflexible,breaklines=true, keywords={}]
bool communicate2(agent a, value msg){
	if (a is @\disk@){
		return write(a[p], {msg.propNr, msg.propNr, msg.val})}
	else{
		send accepted(msg.propNr, msg.val) to a	
		return true	}	}
	\end{lstlisting}
	\label{alg:communicate2}
\end{figure}

\begin{figure}[h!]
	\caption{Hear Back Phase 2---code for process $p$}
	\begin{lstlisting}[columns=fullflexible,breaklines=true, keywords={}]
value hearback2(agent a){
	if (a is @\disk@){
		return ack	}
	else{
		return value received from a	}	}
	\end{lstlisting}
	\label{alg:hear2}
\end{figure}

\begin{figure}[h!]
	\caption{Analyze Phase 2---code for process $p$}
	\begin{lstlisting}[columns=fullflexible,breaklines=true, keywords={}]
value analyze2(value v, responses resps){
	if there are at least A/2 + 1 resps such that resp.response==ack{
		return v }
	return RESTART	}
	\end{lstlisting}
	\label{alg:analyze2}
\end{figure}

\renewcommand{\figurename}{Figure}

	\fi
\end{document}